\documentclass[11pt, a4paper]{article}
\usepackage{hyperref}
\usepackage{graphicx}
\usepackage[shortlabels]{enumitem}
\usepackage[hscale=0.75,vscale=0.8]{geometry}
\usepackage{amsmath, amsthm, amssymb, amsopn, amsbsy, mathrsfs, bbm}
\usepackage{slashed}
\usepackage[small,bf]{caption}

\usepackage{titlesec}
\usepackage{appendix}

\theoremstyle{plain}
\newtheorem{thm}{Theorem}[section]
\newtheorem{prop}[thm]{Proposition}
\newtheorem{lem}[thm]{Lemma}
\newtheorem{cor}[thm]{Corollary}

\theoremstyle{definition}
\newtheorem{defn}[thm]{Definition}

\theoremstyle{remark}
\newtheorem*{rk}{Remark}
 
\newcommand{\innerprod}[2]{\left\langle #1 \relmiddle| #2 \right\rangle}

\newcommand{\R}{\mathbb{R}}
\newcommand{\C}{\mathbb{C}}
\newcommand{\relmiddle}[1]{\mathrel{}\middle#1\mathrel{}}

\renewcommand{\d}{\ensuremath{\mathrm{d}}}
\renewcommand{\restriction}{\mathord{\upharpoonright}}

\DeclareMathOperator{\supp}{supp}
\DeclareMathOperator{\Int}{Int}

\begin{document}

\title{Non-existence of isometry-invariant Hadamard states for a Kruskal black hole in a box and for massless fields on 1+1 Minkowski spacetime with a uniformly accelerating mirror}

\author{Bernard S. Kay$^*$ and Umberto Lupo$^\dagger$ \medskip \\  {\small \emph{Department of Mathematics, University of York, York YO10 5DD, UK}} \smallskip \\ 
\small{$^*${\tt bernard.kay@york.ac.uk} $\qquad$ $^\dagger${\tt ul504@york.ac.uk}}}

\date{}

\maketitle

\begin{abstract} 
We conjecture that (when the notion of Hadamard state is suitably adapted to spacetimes with timelike boundaries) there is no isometry-invariant Hadamard state for the massive or massless covariant Klein-Gordon equation defined on the region of the Kruskal spacetime to the left of a surface of constant Schwarzschild radius in the right Schwarzschild wedge when Dirichlet boundary conditions are put on that surface.  We also prove that, with a suitable definition for `boost-invariant Hadamard state' (which we call `strongly boost-invariant globally-Hadamard') which takes into account both the existence of the timelike boundary and the special infra-red pathology of massless fields in 1+1 dimensions, there is no such state for the massless wave equation on the region of 1+1 Minkowski space to the left of an eternally uniformly accelerating mirror -- with Dirichlet boundary conditions at the mirror.  We argue that this result is significant because, as we point out, such a state does exist if there is also a symmetrically placed decelerating mirror in the left wedge (and the region to the left of this mirror is excluded from the spacetime).  We expect a similar existence result to hold for Kruskal when there are symmetrically placed spherical boxes in both right and left Schwarzschild wedges.  Our Kruskal no-go conjecture raises basic questions about the nature of the black holes in boxes considered in black hole thermodynamics.  If true, it would lend further support to the conclusion of  B.~S.~Kay `Instability of enclosed horizons', Gen.~Rel.~Grav.~{\bf 47},~1-27 (2015) (\href{http://arxiv.org/abs/1310.7395}{arXiv: 1310.7395}) that the nearest thing to a description of a black hole in equilibrium in a box in terms of a classical spacetime with quantum fields propagating on it has, for the classical spacetime, the exterior Schwarzschild solution, with the classical spacetime picture breaking down near the horizon.  Appendix~\ref{appB} to the paper points out the existence of, and partially fills, a gap in the proofs of the theorems in B.~S.~Kay and R.~M.~Wald, `Theorems on the uniqueness and thermal properties of stationary, nonsingular, quasifree states on spacetimes with a bifurcate Killing horizon', Phys.~Rep.~{\bf 207},~49-136~(1991). 
\end{abstract}

\section{Introduction}
\label{intro}

Thanks to a number of results obtained in the 1990's, it is known\footnote{Actually, while we were writing the present paper, we discovered -- see Footnote \ref{gap} --  a gap in the reasoning in \cite{kay1991theorems} which however (for all the spacetimes mentioned in this paragraph) we fill in Appendix \ref{appB} in the present paper.  So, strictly, the results we describe as previously `known' and `proven' in this paragraph and the other footnotes thereto rely on the results in Appendix \ref{appB} here as well as on the papers we cite.}  that (leaving aside some technicalities) if one quantizes a linear scalar field on a globally hyperbolic spacetime with a one-parameter group of isometries possessing a bifurcate Killing horizon, then there is at most one\footnote{\label{improvement}In fact, such a uniqueness result was proven in \cite{kay1991theorems} under the restriction that the state in question be quasi-free (with vanishing one-point function) \cite{kay1991theorems, haag1992local, bratteli1997operator2} and with the local Hadamard condition replaced by a certain global Hadamard condition (see next footnote).   However, in \cite{kay1993sufficient} a general result was obtained which enabled one to drop the quasi-free restriction while, as conjectured in \cite{kay1988quantum, gonnella1989can} and proved in \cite{radzikowski1996localtoglobal, radzikowski1996microlocal, radzikowski1992hadamard} on any globally hyperbolic spacetime, locally Hadamard states are necessarily globally Hadamard. See also Footnote \ref{gap} and Appendix \ref{appB}.} state which is invariant under those isometries and which is (locally) Hadamard.\footnote{\label{locglob}A (locally or globally) Hadamard state for a linear quantum field theory is a state whose two-point function has the (local or global) Hadamard property -- local Hadamard meaning roughly that its short distance singularity should be the appropriate generalization to a curved spacetime of the short-distance singularity of the two-point function of the vacuum state and of other physically relevant states in Minkowski space, while the global Hadamard condition on a globally hyperbolic spacetime also rules out the possibility of singularities for spacelike separated pairs of points.  For full definitions, see e.g.\ \cite{kay1991theorems} or the recent review \cite{khavkine2015algebraic}. See also the important microlocal reformulation of the global Hadamard condition in \cite{radzikowski1996microlocal} and see \cite{moretti2003comments} for spacetime dimensions other than $1+3$.}   Furthermore, for some notable cases, such as Kerr and Schwarzschild-de Sitter, it was proved in \cite{kay1991theorems} that there is \emph{no} such state.\footnote{\label{quasifreeenough}We remark that, as pointed out in \cite{kay1991theorems}, to prove such a no-go result, it suffices to prove that there is no such quasi-free state, since if there was such a state at all, the quasi-free state with the same two-point function (and zero one-point function) -- i.e.\ the `liberation' in the sense of \cite{kay1993sufficient} --  would also be such a state.}  For Kerr, this was a consequence of superradiance; for Schwarzschild-de Sitter, one argument for the no-go result was based on the fact that, should such a state exist, the Hawking temperatures associated with the black hole horizon and the cosmological horizon would be different.  Another argument relied on what, in quantum information theory, is now known as monogamy (although this notion had not yet been coined at the time).

In the present paper, we conjecture, and give heuristic arguments for, a further such non-existence result which concerns a massless or massive linear scalar field on a spacetime which one might think would represent a spherically symmetric maximally extended black hole in equilibrium in a spherical box. Namely, the region of the Kruskal spacetime to the left of a stationary hypersurface at some fixed Schwarzschild radius $R$ represented by the hyperbola in Figure \ref{kruskal} (where, as usual, each point represents a two-sphere).\footnote{Our no-go conjecture for Kruskal in a box applies equally to the part of the globally-hyperbolic region of non-extremal Reissner-Nordstr\"om spacetime to the left of a similar stationary hypersurface at fixed Schwarzschild radius $R$ but, for simplicity we shall only refer to the Kruskal case in the main text.}  I.e.\ we argue that, completing the specification of the system by imposing (say) Dirichlet boundary conditions at the box, there is no Schwarzschild-isometry invariant Hadamard state on this spacetime (when the notion of `Hadamard', usually applied to globally-hyperbolic spacetimes, is suitably adapted to the presence of a timelike boundary).  As we discuss below, this conjecture raises obvious questions about the nature of the black holes in boxes considered in the subject of black hole thermodynamics \cite{hawking1976black, gibbons1993euclidean}.

\begin{figure}
   \centering
    \includegraphics[scale=0.5]{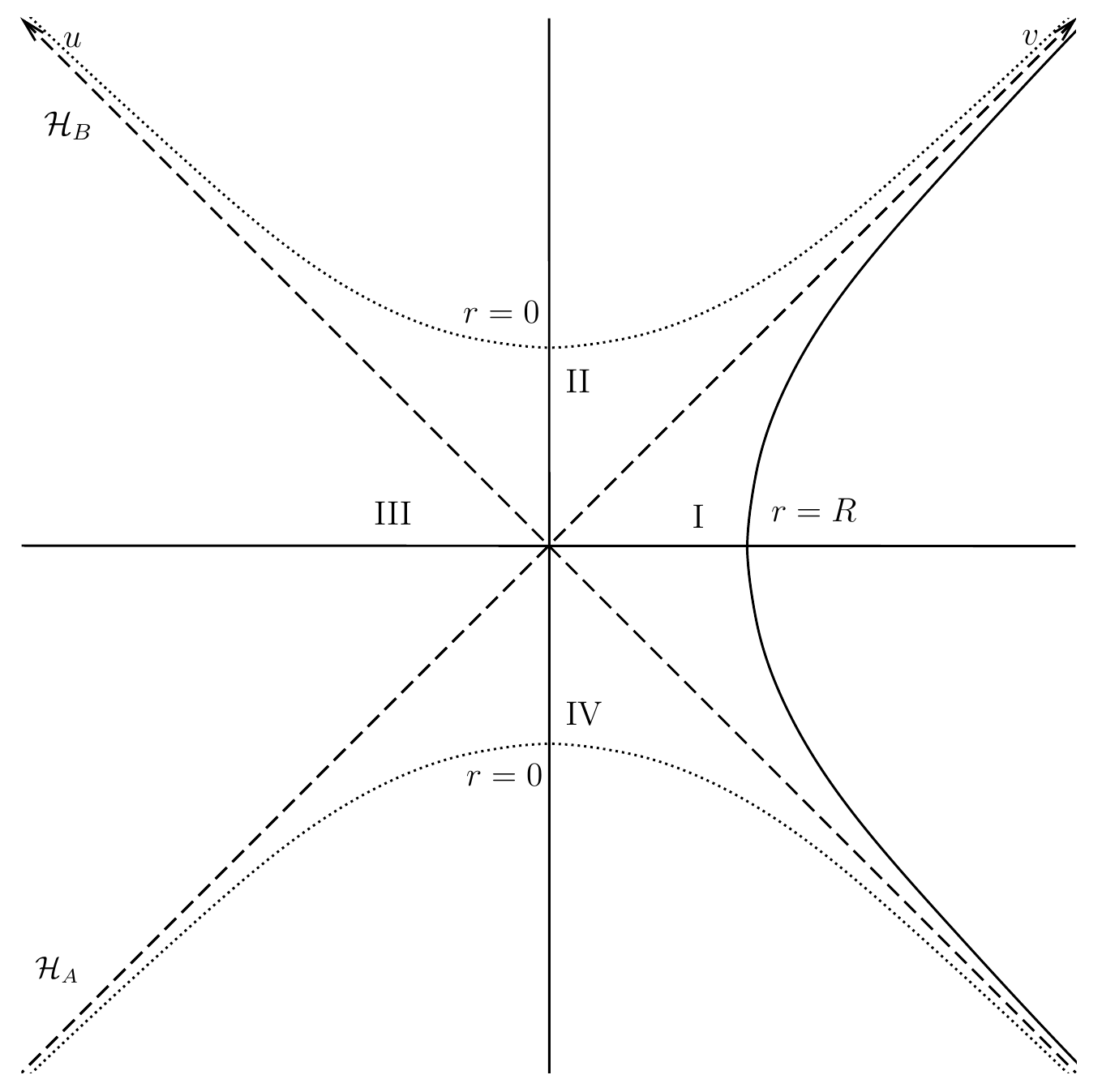}
   \caption{This is a dual purpose figure. In one interpretation, it represents the Kruskal spacetime bounded by a single box in the right wedge (region I) at $r=R$ (with $r$ the Schwarzschild coordinate and each point representing a two-sphere). In another interpretation, it represents (1+$n$)-dimensional Minkowski space to the left of a hypersurface (referred to in the text as a `mirror') at some constant Rindler spatial coordinate $r$ in the right Rindler wedge (in this case each point represents an ($n$-1)-plane). The hyperbolae in regions II and IV are only relevant to the Kruskal interpretation, in which case they portray the future and past singularities at $r=0$. \label{kruskal}}
 \end{figure}
 
 \begin{figure}
   \centering
    \includegraphics[scale=1]{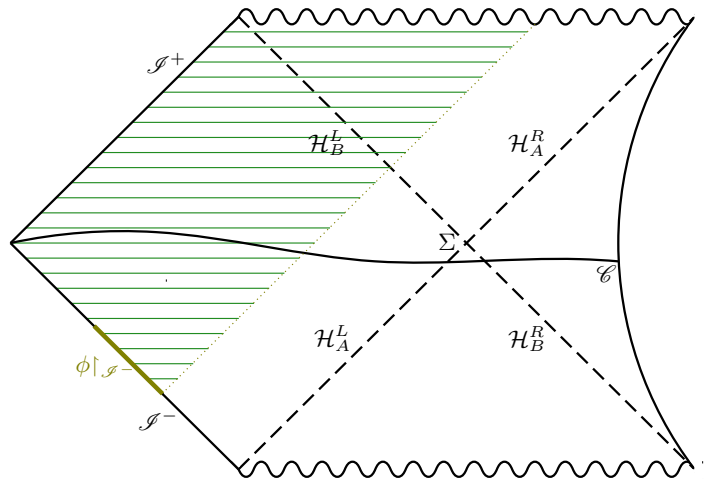}
   \caption{Penrose diagram for the region of the Kruskal spacetime bounded by a single box, cf.\ Figure \ref{kruskal}. $\mathscr{C}$ is an initial-value surface on which the Cauchy-Dirichlet problem for the Klein-Gordon equation is well-posed. The shaded area represents the support of the solution $\phi$ discussed in Sections \ref{intro} and \ref{basicidea}. \label{penrose1box}}
 \end{figure}

The basic plausible expectations about the space of classical solutions, from which we will argue for this no-go conjecture in the next section, are that, on the one hand,

\medskip

\noindent
(a) the reflection at the box in the right wedge will cause solutions which `fall entirely through' (see Section \ref{basicidea}) the right $A$-horizon (${\cal H}_A^R$ in the Penrose diagram, Figure \ref{penrose1box}) to coincide with solutions which `fall entirely through' the right $B$-horizon (${\cal H}_B^R$ in Figure \ref{penrose1box}).   

\medskip

\noindent
On the other hand, 

\medskip

\noindent
(b) there exist solutions (one such suffices for our argument) which are non-vanishing on the left $B$-horizon but which vanish on the entire $A$-horizon. 

\medskip

The plausibility of Property (b) is particularly easy to see for the massless case since, in fact, \text{any} solution, $\phi$, with non-zero Cauchy data on $\mathscr{I}^-$ (see the Penrose diagram, Figure \ref{penrose1box}) and zero Cauchy data on ${\cal H}_A$ would be expected to have a non-zero value on ${\cal H}_B$ expressing the fact that not all of the solution would be scattered back out to infinity, but rather, some of it will fall through ${\cal H}_B^L$ into the black hole.  (Whether or not this property holds obviously doesn't depend on whether or not the spacetime is cut off at a box-wall in the right wedge.)   For massless and massive fields, one can rely, instead, e.g.\ on the existence  of wave operators, $\Omega_0^\pm$ and $\Omega_1^\pm$\footnote{$\Omega_0^\pm$ maps solutions of the Klein-Gordon equation on Minkowski space into solutions on exterior Schwarzschild (identified here with our Kruskal left wedge) which resemble them at late/early times and $\Omega_1^\pm$ maps solutions of the massless `wave equation' in 1+1 Minkowski space times the bifurcation 2-sphere into solutions on exterior Schwarzschild and (as explained in \cite{dimock1987classical}) effectively solves the characteristic initial-value problem for data on the future/past horizon.} for the scattering theory on exterior Schwarzschild demonstrated in \cite{dimock1987classical, dimock1986classical} together with the expectation that the S-matrix component $(\Omega_1^+)^*\Omega_0^-$ will not be zero. In fact this is now rigorously established in the massless case in Theorem 10 of \cite{dafermos2014scattering}.\footnote{We thank Mihalis Dafermos for drawing this to our attention.}

We remark that if there is also an image box in the left wedge (located at the wedge-reflected set of spacetime points to those occupied by the right-wedge box -- below we shall refer to this as the case of two boxes) we expect that there \emph{will} exist an isometry-invariant Hadamard state on the region between the two boxes.  Indeed, we expect the latter to be a counterpart to the Hartle-Hawking-Israel state \cite{hartle1976path, israel1976thermo, sanders2015construction} in maximally extended Kruskal. Thus our no-go conjecture is reliant upon there being just one box rather than two.

Geometrically, this setup appears analogous to Minkowski spacetime (of any dimension) to the left of a hypersurface at some constant Rindler spatial coordinate in the right wedge (see Figure \ref{kruskal}), i.e.\ to the left of a uniformly accelerating mirror (assumed to be `planar' and infinitely extended in the spatial dimensions suppressed in Figure \ref{kruskal}). Here, Schwarzschild-isometry invariance is replaced by boost invariance. One might therefore think that a similar non-existence result would hold for boost-invariant Hadamard states for Klein-Gordon fields on such spacetimes.  And, in the absence of a rigorous proof of our conjecture for Kruskal, it would obviously be of interest if one could more easily give a rigorous proof of the non-existence of boost-invariant Hadamard states for some such Minkowskian system. However, Property (b) above only holds for scalar fields in Minkowski space when those fields are massless and the Minkowski space is 1+1 dimensional.  This is because, except in this special case, a solution to the Klein-Gordon equation in Minkowski space  (say with compact support on spacelike Cauchy surfaces) which vanishes on a single null plane, vanishes everywhere. See e.g.\ pages 109--110 in Section 5.1 in \cite{wald1994quantum} where this is proven for the case of massless fields and spacetime dimension greater than 2.  It is also stated there that the same statement presumably also holds for massive fields and one of us \cite{lupo2016toappear} has recently proven this.

In view of the above, and aside from making our above conjecture for the Kruskal case, the main purpose of the present paper is to prove a rigorous version of such a non-existence result for this latter 1+1 massless system with Dirichlet boundary conditions.  Even for this much simpler problem, it will turn out that we have to deal with a number of complications which arise from the well-known special infra-red pathology \cite{schroer1963infrateilchen, wightman1967introduction, streater1970fermion, kay1985double, fulling1987temperature, derezinski2006quantum} of the 1+1 massless Klein-Gordon field as well as with complications due to the presence of a boundary.  In fact, even in the absence of boundaries, because of that special infra-red pathology, there are several inequivalent mathematical notions which could be regarded as making the phrase `boost-invariant Hadamard state' precise for the massless scalar field in 1+1 Minkowski space.  What we succeed in doing (with Theorem \ref{mainthm} in Section \ref{theorem}) is to prove that, with a particular such notion, when suitably adapted to the presence of a single mirror -- namely what we call the `strongly boost-invariant globally-Hadamard' property of Definition \ref{stronglyhadamard} in Section \ref{theorem} -- then (in the presence of a single mirror) there is no state which has this property.

We believe this no-go theorem deserves to be regarded as a suitable counterpart to the no-go result we conjecture for Kruskal because, as we will also point out in Section \ref{theorem}, there \emph{does} exist a strongly boost-invariant globally-Hadamard state both in full 1+1 Minkowski space and in the case where there is a second mirror located at the wedge-reflected set of spacetime points to those occupied by the right-wedge mirror, and the region to the left of this mirror is excluded from the spacetime.  The state in the former case is a suitably defined version of the usual Minkowski vacuum state, while the state in the latter case -- which we shall call the case of two mirrors -- was constructed in \cite{kay2015instability}. Also, we think that the method of proof of our no-go theorem should provide useful lessons towards a proof of our conjecture about the Kruskal case. Note that our notion of `strongly boost-invariant globally-Hadamard' makes precise the notion of `boost-invariant \emph{global} Hadamard state' since, for reasons we will explain in Section \ref{infra-red}, we do not know if a local-to-global result (see Footnote \ref{locglob}) applies in the 1+1 massless case.

Our conjecture in the Kruskal case has an obvious application to understanding the nature of the idealized black holes in boxes which play a basic role in black hole thermodynamics \cite{hawking1976black, gibbons1993euclidean}. A natural question is whether a black hole in equilibrium in a box\footnote{Here we leave aside the issue that a Schwarzschild black hole in equilibrium in a box is believed to be thermodynamically unstable \cite{hawking1976black}.  We remark that, as explained in \cite{kay2015instability},  the Schwarzschild anti-de Sitter spacetime (where, for certain values of the parameters, one has thermodynamic stability) is, when maximally extended, analogous to the region of Kruskal between two boxes -- i.e. what we call in the main text, `case (B)' -- and thus the results of the present paper are not relevant to it; however the results in  \cite{kay2015instability} suggest that this maximal extension also suffers from the same problems as case (B) for Schwarzschild black holes and therefore that a physical Schwarzschild anti-de Sitter black hole will be a single Schwarzschild-anti de Sitter wedge with a non-classically describable region near the horizon analogously to what we argue for Schwarzschild black holes.} has a semiclassical description in terms of a fixed Lorentzian classical spacetime together with a Hadamard state of a quantum field defined on it -- where both the classical spacetime and the Hadamard state are isometry-invariant.  Amongst the various possibilities one can imagine for the background spacetime, and ignoring back reaction, one might consider the following three: (A) the region of Kruskal to the left of a single box as in Figures \ref{kruskal} and \ref{penrose1box}; (B) the region of Kruskal between two boxes as in Figure \ref{penrose2boxes}; (C) the region of exterior Schwarzschild alone to the left of a single box (i.e.\ the right wedge of any of the figures \ref{kruskal}, \ref{penrose1box} or \ref{penrose2boxes}). 
An earlier paper \cite{kay2015instability} of one of us argued that both (A) and (B) should be ruled out due to the existence of classical and/or quantum small perturbations such that, as a consequence of reflection at the box, their (renormalized) stress-energy grows arbitrarily large near the future horizon(s) and/or near the bifurcation surface and argued in favour of (C) with the proviso that the region near the horizon be considered to be essentially quantum-gravitational and non-classically describable rather as envisaged in `t Hooft's `brick wall' model \cite{thooft1985quantum}.  However the arguments against (A) in \cite{kay2015instability} were less strong than the arguments against (B).  Our conjectured no-go theorem, if true, tells us that, on the background (A), no isometry-invariant Hadamard state is possible and this reinforces our reasons for rejecting (A).

\begin{figure}
   \centering
    \includegraphics[scale=1]{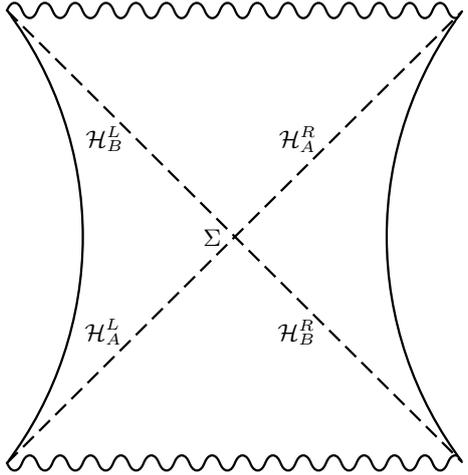}
   \caption{Penrose diagram for the region of the Kruskal spacetime bounded by two boxes, possibility (B) in Section \ref{intro}. We conjecture that a `Hartle-Hawking-Israel--like' state exists for the Klein-Gordon field on this spacetime when Dirichlet conditions are imposed at the boundary. \label{penrose2boxes}}
 \end{figure}

It is also of interest to compare our no-go result for the massless scalar field in 1+1 Minkowski with claims made in the literature (see e.g.\ \cite{fulling1976radiation, davies1977radiation, birrell1984quantum}) concerning radiation by accelerating mirrors in 1+1 dimensions.   As pointed out in that work, a mirror which starts out inertial -- with the state of the field the initial vacuum state -- and later undergoes uniform acceleration doesn't radiate during the period of uniform acceleration. This might seem to suggest that there would be a quantum state of the field such that an eternally accelerating mirror wouldn't radiate at all and that might, in its turn, seem to suggest that there would exist a boost-invariant Hadamard quantum state.  And one might think that there would in fact exist a strongly boost-invariant globally-Hadamard state in the sense of the present paper.  But we prove that there isn't one; for there to be such a boost-invariant Hadamard state, it would seem to be required for there to be a symmetrically placed uniformly decelerating image mirror in the left wedge.

An outline of the structure of the rest of the paper is given in the paragraph preceding Equation (\ref{sympK}) in Section \ref{basicidea}.

There are two Appendices.   The purpose of Appendix \ref{appA} is explained in the above-mentioned paragraph.  Appendix \ref{appB} points out the existence of, and partially fills, a gap in the arguments in the 1991 paper \cite{kay1991theorems} of B.S.~Kay and R.M.~Wald.  It is included here because (see Footnote 19) the gap became apparent while we were doing this work.  However  its content is logically independent of the rest of the paper.

\section{Basic idea of our argument for our Kruskal conjecture and for our $1$$+$$1$ no-go theorem}\label{basicidea}

We next wish to explain the basic idea behind both our no-go conjecture for (massive or massless) Klein-Gordon on Kruskal and our proof of our analogous no-go result for the massless 1+1 Minkowski one-mirror system.  In Kruskal we take our equation to be 
\begin{equation}
\label{kleingordon} 
P \phi = (\square_g + m^2)\phi = 0
\end{equation}
where $m$ is a non-negative mass.  (One could add a term proportional to the Ricci scalar, $R$, to $m^2$, but this of course vanishes in Kruskal.)  In our 1+1 Minkowskian theorem we insist that $m$ be zero.

In both cases, we rely on the well-posedness of the Cauchy problem for (\ref{kleingordon}) when supplemented by Dirichlet boundary conditions at the box/mirror.  Of course, neither the region of Kruskal to the left of our box, nor the region of 1+1 Minkowski space to the left of our mirror are globally hyperbolic and thus neither have Cauchy surfaces in the strict sense.  However, with our boundary conditions on the box/mirror, one expects the Cauchy problem to be well posed, at least in the sense of uniqueness, for data on initial-value surfaces which are the restrictions, to the region to the left of the box/mirror, of Cauchy surfaces for the whole of Kruskal/Minkowski. Indeed, this can easily be verified in the 1+1 Minkowski case; for the Kruskal case we expect a suitable extension of known results on the mixed Cauchy-Dirichlet problem (see e.g.\ Theorem 24.1.1 in \cite{hormander2007analysis} or the monograph \cite{gindikin1996mixed}) to apply. And it will still to be possible to define, in each case, the space $S$ of smooth (real-valued) solutions of this mixed Cauchy-Dirichlet problem whose restriction to all such initial-value surfaces\footnote{These initial-value surfaces should be understood to contain the relevant boundary points and therefore not as being entirely contained in the interior of the spacetime.} has compact support, analogously to the definition of $S$ in \cite{kay1991theorems}. And this space will be equipped with a manifestly antisymmetric bilinear form $\sigma$ defined, in terms of an arbitrary (possibly partially null) smooth initial-value surface $\mathscr{C}$, by
\begin{equation}
\label{symplformdefinition} 
\sigma(\phi_1, \phi_2) := \int_\mathscr{C} n_a j^a[\phi_1, \phi_2] \, \d \mu_\mathscr{C}, 
\end{equation}
where $j^a[\phi_1, \phi_2] := \phi_1 \nabla^a \phi_2 - \phi_2 \nabla^a \phi_1$, $\mathscr{C}$ is given the induced orientation as the boundary of $J^-(\mathscr{C}),$\footnote{I.e.\ the boundary orientation for which Stokes' Theorem applies.  Here, $J^\pm$ of a subset of a spacetime denotes its causal future/past \cite{hawking1973large}} and the forms $n$ and $\d \mu_{\mathscr{C}}$ are such that, on $\mathscr{C}$, $n \wedge \d \mu_\mathscr{C}$ equals the volume form $\d \mu_g$ induced by the spacetime metric. The independence of the right-hand side of Equation (\ref{symplformdefinition}) from the initial-value surface $\mathscr{C}$ is a consequence, using Gauss' theorem, of the fact that $\nabla_a j^a[\phi_1, \phi_2] = 0$ whenever $\phi_1$ and $\phi_2$ are solutions to Equation (\ref{kleingordon}), together with the fact that, due to the Dirichlet boundary conditions, no boundary terms arise from integrating along the spacetime boundary. One expects that, once a full characterization for the allowed initial data for solutions in $S$ is available, it will be possible to show that $\sigma$ is in fact non-degenerate on $S$, and therefore a symplectic form.

Similarly to in \cite{kay1991theorems} -- and proceeding, in the case of Kruskal, under the same fiction explained in the note added in proof at the end of \cite{kay1991theorems} (see the discussion at the end of this section) -- an important role will be played by `subspaces', $S_A$ and $S_B$, of $S$ which consist of solutions which `fall entirely through' the $A$- and $B$-horizons ${\cal H}_A$ and ${\cal H}_B$ respectively. Precisely, a solution $\phi$ belongs to $S_A$ if its support intersects ${\cal H}_A$ in a compact set and if $\phi$ vanishes outside the union of the causal past and causal future of this set (and one defines $S_B$ analogously).  For a massless scalar field in 1+1 Minkowski space without any mirrors, $S_B$ would consist of right-moving solutions and $S_A$ of left-moving solutions.  When we have our mirror in the right wedge, $S_B$ consists of solutions which are right-moving to the causal past of the $B$-horizon, and $S_A$ consists of solutions which are left-moving to the causal future of the $A$-horizon as explained in more detail in Section \ref{1+1classicaltheory}.  We also define $S_A^R$ to consist of solutions in $S_A$ whose restrictions to the $A$-horizon are compactly supported to the right of (and strictly away from) the bifurcation surface, and also define $S_A^L$, $S_B^R$ and $S_B^L$ similarly with obvious changes.

In Appendix \ref{appA}, we will recall the general theory of the quantization of linear Bose systems via the so-called Weyl-algebra approach. In particular, we will review the standard definitions for the notions of \emph{state}, \emph{quasifree state} and \emph{one-particle structure}. In Section \ref{quantumkleingordon}, we will recall how this theory is applied to the case of Klein-Gordon fields on general globally hyperbolic spacetimes, where the class of Hadamard states (see Footnote \ref{locglob}) plays a special role, and we will sketch a strategy for adapting this theory to situations with timelike boundaries so as to properly define the notion of `Hadamard state' and, thereby, to be able to formulate in a precise way our conjecture that there is no isometry invariant Hadamard state on Kruskal in a box.   Then Section \ref{1+1} will show how to implement this strategy for massless fields on 1+1 Minkowski with a mirror in a way which also copes with the special infra-red pathology, thereby enabling us both to properly formulate and prove our no-go theorem.  For us to explain the basic idea behind our conjecture and theorem in the present section, however, all that we shall rely on are the following two facts:
\begin{itemize}[topsep=1ex,itemsep=1ex,partopsep=0ex,parsep=0ex]
\item First, just as in the globally hyperbolic case mentioned in Footnote \ref{quasifreeenough}, to show that there is no isometry-invariant Hadamard state, it suffices to show there is no isometry-invariant \emph{quasi-free} Hadamard state (with zero one-point function), see Appendix \ref{appA}.   
\item Second, as explained in Appendix \ref{appA}, to every quasi-free state of the theory there corresponds a \emph{one-particle structure}, $(K, \mathscr{H})$.   That is a, Hilbert space (the \emph{one-particle Hilbert space}), $\mathscr{H}$, and a real-linear map, $K:S \rightarrow \mathscr{H}$, such that $KS + i KS$ is dense in $\mathscr{H}$,
which is \emph{symplectic} in the sense that
\begin{equation}
\label{sympK}
2 \mathrm{Im}\innerprod{K\phi_1}{K\phi_2} = \sigma(\phi_1,\phi_2)
\end{equation}
for all pairs of classical solutions, $\phi_1, \phi_2\in S$.
\end{itemize}

Furthermore, and similarly to Kruskal without a box or (1+3)-dimensional Minkowski without a mirror, we expect that the existence of an isometry-invariant Hadamard state for Kruskal with our box implies, by similar arguments to those given in \cite{kay1991theorems}, the following explicit formula for $\innerprod{K\phi_B^1}{K\phi_B^2}$ for any $\phi_B^1,\phi_B^2 \in S_B$:
\begin{equation}
\label{horizontwopoint}
\innerprod{K\phi_B^1}{K\phi_B^2} = - {1\over \pi} \lim_{\varepsilon\rightarrow 0^+} \int 
{f_1(u_1, s)f_2(u_2, s)\over (u_1-u_2-i\varepsilon)^2}\, \d u_1\, \d u_2 \,d^2 s,
\end{equation}
where $f_1$ is the restriction of $\phi_B^1$ and $f_2$ the restriction of $\phi_B^2$ to the $B$-horizon, and this is coordinatized in the usual way by affine parameter, $u$,\footnote{Aside from having the opposite signature convention to \cite{kay1991theorems}, we (and also \cite{kay2015instability}) differ from \cite{kay1991theorems} by denoting affine parameter on our horizons by $u$ and $v$, rather than $U$ and $V$.} and the usual set of angular variables, denoted by $s$, and the integration can be thought of as over two copies of the real line and one copy of the bifurcation sphere. 

For our massless scalar field in 1+1 Minkowski with a mirror, it turns out that the existence of an 
isometry-invariant state which is Hadamard in the precise sense we will define (i.e.\ the `strongly boost-invariant globally-Hadamard' property of Definition \ref{stronglyhadamard} in Section \ref{theorem}) entails a similar formula, with the dependence on $s$ and the integration over $s$ removed.  And of course there will be a similar formula, for $\phi_A^1$ and $\phi_A^2$ and the $A$-horizon.

As discussed in \cite{kay1991theorems} (cf.\ Equation (1.1) there; we refer also to Observation 6.1 and Proposition 7.2 in \cite{dimock1987classical}), Equation (\ref{horizontwopoint}) tells us that the restriction of the two-point function for the $u$ derivative of the field to the $B$-horizon can be identified (up to a trivial dependence on $s$) with the restriction of the two-point function for the $u$ derivative of a free massless real scalar field in 1+1 Minkowski space (without a mirror) to the null line $t=-x$, where $u$ is now identified with $t-x$, and where $t$ and $x$ are the usual Minkowski coordinates.  In view of this (or directly from the formula) one can conclude (see again \cite{kay1991theorems}) the following crucial facts\footnote{Actually in our proof of our no-go theorem, i.e.\ of Theorem \ref{mainthm} in Section \ref{theorem}, facts (A) and (B) about the one-particle structure $(K, \mathscr{H})$ are arrived at by directly relating it to the one-particle structure $(K_\mathbb{M}, \mathscr{H}_\mathbb{M})$ associated to the vacuum state, $\omega_\mathbb{M}$, on the `physical' Weyl algebra for the  massless wave equation in (1+1)-Minkowski space by a somewhat different version of the argument which doesn't (need to) refer to the formula (\ref{horizontwopoint}).}
\begin{enumerate}[label=(\Alph*)]
\item $KS_A$ and $KS_B$ are dense in complex-linear subspaces $\mathscr{H}_A$ and $\mathscr{H}_B$ of $\mathscr{H}$ (respectively).   As explained in Appendix A of \cite{kay1991theorems}, and reproduced in Appendix \ref{appA} to the present paper as Proposition \ref{purity}, this is equivalent to the fact that the state restricted to fields `symplectically smeared' with solutions in either $S_A$ or $S_B$ is a pure state.  In the special case of 1+1 Minkowski (without a mirror) it corresponds to the fact that the Minkowski vacuum is a pure state when restricted to either the left or right-moving sector.
\item $KS_A^R + iKS_A^R$ is dense in $\mathscr{H}_A$ and $KS_B^R + iKS_B^R$ is dense in $\mathscr{H}_B$. This corresponds to the fact that the (massless) 1+1 Minkowski vacuum, restricted to sums of products of (derivatives of) fields restricted to a single null line has the Reeh-Schlieder property \cite{streater2000pct} for fields localised on a half null-line. Cf.\ Proposition \ref{reehschlieder} in Appendix \ref{appA}.
\end{enumerate}

We are now in a position to explain the basic idea behind both our hoped-for proof of our no-go conjecture for Kruskal in a box and our proof of our no-go theorem for our massless field in 1+1 Minkowski with a mirror.

First we point out that, for the 1+1 Minkowski case, the two `basic plausible expectations about the space of classical solutions' discussed in Section \ref{intro} are both satisfied, and may be reformulated in terms of our subspaces of solutions as follows:
\begin{enumerate}[label=(\alph*)]
\item $S_A^R=S_B^R$;
\item There exists a $\phi \in S$ such that $\sigma(\phi, \phi_B^L) \ne 0$ for some $\phi_B^L \in S_B^L$, but for which $\sigma(\phi, \phi_A) =0$ for all $\phi_A\in S_A$.
\end{enumerate}

Combining the (purely classical) statements in (a) and (b) with (A) and (B) above quickly leads to a contradiction, as we will now explain.  By the first part of (b) and Equation (\ref{sympK}), $K\phi$ cannot be orthogonal to $KS_B^L$ and hence, \emph{a fortiori} it cannot be orthogonal to $KS_B$ -- so, by (A), it cannot be orthogonal to $\mathscr{H}_B$.  On the other hand, Equation (\ref{sympK}) and the last part of (b), together with (A), imply that $K\phi$ is orthogonal (in the Hilbert space $\mathscr{H}$) to $\mathscr{H}_A$.  To see this, we will use the following general observation: If $\mathscr{H}$ is a complex Hilbert space, and $\mathscr{K} \subseteq \mathscr{H}$ is a real-linear subspace whose closure is \emph{complex}-linear, then, for any $\Phi \in \mathscr{H}$, $\innerprod{\Phi}{\mathscr{K}} = \{0\}$ if and only if $\mathrm{Im}\innerprod{\Phi}{\mathscr{K}} = \{0\}$ if and only if $\mathrm{Re}\innerprod{\Phi}{\mathscr{K}} = \{0\}$. [Proof of first `if': Suppose that $\mathrm{Im}\innerprod{\Phi}{\mathscr{K}} = \{0\}$.  Note that $\mathrm{Re}\innerprod{\Phi}{\mathscr{K}} = \mathrm{Im}\innerprod{\Phi}{i\mathscr{K}}$.  Under the assumptions on $\mathscr{K}$, $i \mathscr{K} \subseteq \overline{\mathscr{K}}$, whereupon a simple limit argument shows that $\mathrm{Im}\innerprod{\Phi}{\mathscr{K}} = \{0\}  \Longrightarrow \mathrm{Im}\innerprod{\Phi}{i \mathscr{K}} = \{0\}$ and we are done.  The proof of the second `if' is analogous.]  By (B), to say that $K \phi \perp KS_A$ is tantamount to saying that it is orthogonal to $KS_A^R +iKS_A^R$.  But, by (a), this is the same thing as saying that it is orthogonal to $KS_B^R +iKS_B^R$, which, by (B), has the same closure as $KS_B$, namely $\mathscr{H}_B$.  Thus, on the assumption that there exists a stationary Hadamard state, $K\phi$ is both not orthogonal to $\mathscr{H}_B$ and orthogonal to $\mathscr{H}_B$ -- a contradiction.

For Kruskal in a box, Property (a) above cannot strictly hold since we would expect a solution which falls entirely through the right $B$-horizon to have a restriction to the right $A$-horizon which fails to be supported away from the bifurcation point and moreover we would expect it to fail to be compactly supported, but rather to have a tail at large $v$. However, we conjecture that the closure in $\mathscr{H}$ of $KS_A^R$ will equal the closure in $\mathscr{H}$ of $KS_B^R$ (or rather an appropriate substitute for this statement will hold when one removes the fiction we referred to above and discuss further below).  It is easy to see that this `closure conjecture' would immediately lead to the same contradiction.  

The fiction we referred to above concerns an error in the original version of \cite{kay1991theorems} which we have also (knowingly) made above. As was pointed out in the note added in proof in that paper, the notion of `$C^\infty$ solutions which fall entirely through one of the horizons', as in the apparent `definitions' of $S_A$ etc.\ in that paper and above in the Kruskal case, is problematic since a solution which actually falls entirely through one of the horizons in the sense explained above cannot be $C^\infty$ -- smoothness failing when one crosses from one side of the horizon to the other. The note added in proof of \cite{kay1991theorems} showed how one can repair this error while maintaining the spirit of the basic arguments there by working with a certain class of solutions (which are everywhere $C^2$) and end up with rigorous results with essentially the same physical content as those originally announced.   In particular, the no-go results in that paper continue to hold with thus-corrected arguments. We remark that, in a recent paper \cite{sanders2015construction}, K.\ Sanders has pointed out that some of the arguments in the note added in proof may possibly be made simpler using an approach \cite{hormander1990remark} to the characteristic initial-value problem due to H{\"o}rmander (see also \cite{baer2015initial}). However, to our knowledge, this idea has not been pursued in detail.  Some new alternative ways to deal with some of the technical issues in the note added in proof in \cite{kay1991theorems} are also indicated in Appendix \ref{appB} here.

Clearly, in the case of Kruskal, what we have written above, while we find it highly plausible, falls considerably short of being a rigorously stated theorem and proof.  To have a rigorously stated theorem one would need to show that the expectations mentioned in Section \ref{timelikeboundaries} below hold so that the strategy we sketch there for defining what is meant by a Hadamard state can be implemented.  And then to turn the above-explained idea for a proof into a rigorous proof one would need to remove the above fiction, presumably with similar methods to those introduced in the note added in proof in \cite{kay1991theorems}, prove the above `closure conjecture' or some effective replacement for it, and justify in detail the various statements made above which were described as `expectations'.  As we anticipated in the Introduction, in the absence of all that, what we can and do provide, in Section \ref{1+1}, is a rigorous formulation and proof of our no-go result for a massless field in 1+1 Minkowski with a mirror.

\section{Quantization of Klein-Gordon quantum fields}\label{quantumkleingordon}

\subsection{Globally hyperbolic case}\label{kleingordonglobhyp}

Let $(M, g)$ be an oriented, time-oriented, globally hyperbolic spacetime of dimension $1+n$. (We adopt the signature convention $(+, -, \ldots, -)$ for the metric.) We recall that the vector space, which we will denote by $S$, of `regular' real-valued classical solutions to the Klein-Gordon equation, Equation (\ref{kleingordon}), is naturally equipped with a linear symplectic structure. Explicitly, the symplectic product of any two such solutions $\phi_1, \phi_2$ is defined by Equation (\ref{symplformdefinition}), where $\mathscr{C}$ is any smooth Cauchy surface, and by `regular' we mean that $\phi \in S$ should be (a) smooth and (b) `spacelike compact', i.e.\ compactly supported when restricted to any Cauchy surface (equivalently, $\mathrm{supp} \, \phi \subset J(K)$\footnote{Throughout this paper, given a subset $S$ of a spacetime, $J(S)$ denotes $J^+(S)\cup J^-(S)$ where $J^\pm(S)$ is the causal future/past \cite{hawking1973large} of $S$.} for some compact set $K$). Denoting by $P$  the Klein-Gordon differential operator as in Equation (\ref{kleingordon}), and by $C_{\mathrm{sc}}^\infty(M)$ the space of real-valued, smooth, spacelike compact functions on $M$, this amounts to defining $S$ as $\ker (P \restriction_{C_{\mathrm{sc}}^\infty(M)})$. 

Standard theory \cite{baer2007wave} guarantees that the Cauchy problem for Equation (\ref{kleingordon}) in such a spacetime is well-posed, and that there exist retarded/advanced fundamental solutions (Green's functions) $E^\pm : C_0^\infty(M) \to C^\infty(M)$ which are uniquely determined by requiring that they
\begin{enumerate}[(i)]
\item be right inverses to $P$ and left inverses to $P\restriction_{C_0^\infty(M)}$,
\item satisfy the support properties $\supp (E^\pm F) \subseteq J^\pm (\supp F)$ $\forall \ F \in C_0^\infty(M)$.
\end{enumerate}
Letting $E := E^- - E^+ : C_0^\infty(M) \to C^\infty(M)$, it is evident that $E$ maps test functions to elements of the space $S$ defined above. We call $E$ the \emph{causal propagator} of the theory since $\supp (E F) \subseteq J(\supp F)$. Furthermore, the sequence of vector spaces
\begin{equation}\label{exactsequence}
\{ 0 \} \; \longrightarrow C_0^\infty(M) \; \stackrel{P}{\longrightarrow} \; C_0^\infty(M) \; \stackrel{E}{\longrightarrow} \; C_{\mathrm{sc}}^\infty(M) \; \stackrel{P}{\longrightarrow} \; C_{\mathrm{sc}}^\infty(M)
\end{equation}
is exact, implying in particular that $E$ is onto $S$, that $\ker E = P[C_0^\infty(M)]$ and therefore also that $S \cong C_0^{\infty}(M)/P[C_0^\infty(M)]$.  One also verifies that, for any $\phi_1, \phi_2 \in S$,
\begin{equation}\label{symplformcovariant}
\sigma(\phi_1, \phi_2) = \int_M F_1 \phi_2 \, \d \mu_g = \int_{M} F_1 (E F_2) \, \d \mu_g =: E(F_1, F_2),
\end{equation}
where $\d \mu_g$ denotes the metric volume form, and $F_1, F_2 \in C_0^\infty(M)$ are such that $E F_1 = \phi_1$ and $E F_2 = \phi_2$. 

The \emph{Weyl algebra} recipe for quantization of general linear systems outlined in Appendix \ref{appA} can now be straightforwardly applied to $(S, \sigma)$, thus yielding a Weyl algebra of canonical commutation relations ${\cal A} = \mathscr{W}(S, \sigma)$. In view of the existence of the causal propagator $E$ relating test functions to solutions, if $\omega$ is a $C^2$ state on ${\cal A}$, then its two-point function $\lambda_2$ (see Appendix \ref{appA}) induces a bidistribution\footnote{Henceforth, for a manifold (without boundary) $N$, we use the word `bidistribution on $N$' to simply indicate a bilinear functional $C_0^\infty(N) \times C_0^\infty(N) \to \C$, without any continuity requirements.} on $M$ defined for all test functions $F_1, F_2$ by
\begin{equation}\label{spacetimetwoptfn}
\Lambda (F_1, F_2) = \lambda_2(EF_1, EF_2).
\end{equation}
We will henceforth refer to $\lambda_2$ as the `symplectically smeared two-point function' and to $\Lambda$ as the `spacetime smeared two-point function'. In view of the general properties of $C^2$ states listed in Appendix \ref{appA}, of the sequence (\ref{exactsequence}) and of Equation (\ref{symplformcovariant}), $\Lambda$ will satisfy for all $F_1, F_2 \in C_0^\infty(M)$:
\begin{enumerate}
 \item (\emph{Commutation relations}) $\mathrm{Im} [\Lambda(F_1, F_2)] = [\Lambda(F_1, F_2) - \Lambda(F_2, F_1)]/(2i) = E(F_1,F_2)/2$;
 \item (\emph{Positivity}) $\mathrm{Re} \Lambda$ has analogous symmetry and positivity properties to (i)--(ii) in Appendix \ref{appA} (with $\sigma, \Phi_1, \Phi_2$ replaced by $E, F_1, F_2$ respectively);
 \item (\emph{Distributional bisolution property}) $\Lambda(PF_1, F_2) = \Lambda(F_1, P F_2) = 0$.
\end{enumerate}

For a state on ${\cal A}$ to be physically relevant, of course, not only must its spacetime smeared two-point function, Equation (\ref{spacetimetwoptfn}), exist, but it must also satisfy the (local or global) Hadamard condition. For general globally hyperbolic spacetimes, we refer to the discussion and references in Footnote \ref{locglob}. In the present paper, the only case we will discuss in detail is the (1+1)-dimensional massless case, the correct formulation of which will, in fact, be the focus of the next section.

\subsection{Case of spacetimes with timelike boundaries}\label{timelikeboundaries}

We would next like to sketch how we expect the quantization procedure for Klein-Gordon fields outlined above could be adapted to the case of `spacetimes with boundary' $(M, g)$, where $M$ is now a manifold with boundary whose boundary is timelike and $(\Int M, g\restriction_{\Int M})$ -- where $\Int M$ denotes the interior of $M$ -- is extendible to a globally hyperbolic spacetime. This class of course includes our Kruskal-in-a-box or Minkowski-with-a-mirror spacetimes. We anticipate that, with more work, all the expectations listed below will be fulfillable for Kruskal.  Of our 1+1 Minkowski-with-mirror spacetime, we will demonstrate in detail in Section 4 that, and how, they are indeed fulfilled so as to have a suitable rigorous treatment of the quantum theory which takes into account the special infra-red properties of this case.

First, we expect that methods akin to those in \cite{hormander2007analysis, gindikin1996mixed} will show that, with the addition of suitable homogeneous boundary conditions on the timelike boundary, the Cauchy problem is well-posed for suitable initial data on suitable initial-value surfaces, as already discussed at the start of Section \ref{basicidea} for the case of Dirichlet boundary conditions. In particular, such suitable initial data, when smooth and of compact support (where it is to be understood that the support could include points on the timelike boundary), should be in one-to-one correspondence with smooth spacelike-compact\footnote{Just as in the globally hyperbolic case, a spacelike-compact function $\phi$ on $M$ is one such that $\supp \phi \subseteq J(K)$ for a compact set $K$, however in this case we allow $K$ to contain points on the timelike boundary.} solutions to this mixed problem, and (once the class of `suitable' initial-value surfaces has been precisely identified) these should in turn be equivalently characterized as being the smooth solutions whose restriction to all suitable initial-value surfaces has compact support. Defining $S$ as the space of spacelike-compact smooth solutions to this mixed problem, we then expect, as discussed in Section \ref{basicidea}, that Equation (\ref{symplformdefinition}) will define a symplectic form $\sigma$ on $S$. 

Furthermore, we expect that one will be able to construct retarded and advanced Green's operators $E^\pm$ which, in addition to satisfying the same requirements as the analogous objects in the globally hyperbolic case -- listed as (i)--(ii) in the previous section -- are such that $E^\pm F\restriction_{\partial M}$ satisfies the given boundary conditions. The domain of $E^\pm$ here should at least include $F\in C_0^\infty(\Int M)$. In the next section we will explicitly construct such objects in the case of the massless wave equation in the region of (1+1)-dimensional Minkowski spacetime to the left of a uniformly accelerating mirror. As we will observe in that case, in general the analogous sequence to (\ref{exactsequence}) will no longer be exact since the kernel of $E=E^--E^+$ will be strictly larger than the image of $P$. Furthermore, both in that case and in the general case one doesn't expect that $E$ will be onto $S$.\footnote{It is an interesting open question (as far as we know) -- again both in the general case and in the (1+1)-dimensional example we will study -- whether the domains of $E^\pm$ can be suitably extended in such a way that the resulting advanced-minus-retarded propagator is onto $S$.}

Assuming that the expectations in the previous paragraphs are fulfilled, we propose that a state on the Weyl algebra $\mathscr{W}(S, \sigma)$ be called Hadamard if its symplectically smeared two-point function exists and if its spacetime smeared two-point function, defined at least on $C_0^\infty(\Int M) \times C_0^\infty(\Int M)$ by Equation (\ref{spacetimetwoptfn}), satisfies the following condition:

\begin{defn}\label{hadamardboundaries}
A bidistribution on $\Int M$ will be said to be globally Hadamard if, for any causally convex open subset ${\cal O}$ of $\Int M$ which, when equipped with the restriction of the metric to $\Int M$, is a globally hyperbolic spacetime in its own right, the restriction of $\Lambda$ to smearings with test functions supported inside ${\cal O}$ is globally Hadamard in the standard sense of \cite{kay1991theorems} mentioned in Footnote \ref{locglob}.  
\end{defn}

Here we recall that a subset $U$ of a spacetime $(N, g)$ is called \emph{causally convex} if, whenever two points $x, y \in U$ can be connected by a causal curve $\gamma$ in $N$, then the portion of $\gamma$ between $x$ and $y$ is entirely contained in $U$. Notice that, if $\mathcal{O}$ is a causally convex globally hyperbolic subset of $\Int M$, then denoting by $E^\pm_\mathcal{O} : C_0^\infty(\mathcal{O}) \to C^\infty(\mathcal{O})$ the unique retarded/advanced Green operators for the Klein-Gordon equation on $\mathcal{O}$, it is easy to verify that, for all $F \in C_0^\infty(\mathcal{O})$, we will have
\begin{equation}\label{Epmrestriction}
[E^\pm F]\restriction_\mathcal{O} = E^\pm_\mathcal{O} F.
\end{equation}
Indeed, that this will be the case follows since, as it is easy to check, $E^\pm$ followed by restriction to $\mathcal{O}$ will have, as an operator on $C_0^\infty(\mathcal{O})$, the support properties and left/right inverse properties which uniquely determine the retarded/advanced Green operators on $\mathcal{O}$.

The above proposal fits nicely with the paradigm of locally covariant (quantum) field theory proposed by Brunetti, Fredenhagen and Verch \cite{brunetti2003generally} and indeed allows an extension of that paradigm to include spacetimes with (timelike) boundaries. Physically, since a spacetime boundary can only be detected by sending a signal to it and receiving one in return, our requirement corresponds to saying that, if we localize the quantum state by only performing measurements within globally hyperbolic regions $\mathcal{O}$ which do not `causally intercommunicate' with the boundary -- i.e.\ such that there are no future-directed piecewise smooth causal curves which begin in $\mathcal{O}$, hit the boundary and then return to $\mathcal{O}$ -- we should not be able to tell whether our universe possesses a real boundary, or whether we are witnessing an `unusual' state on a different, unbounded spacetime. A similar ideology was already contained in \cite{kay1979casimir}, where it was pointed out that such a view is necessary in order to clarify the conceptual issues underlying the Casimir effect. It also appeared in \cite{fewster2007averaged} in the context of the investigation of quantum energy conditions for spacetimes with boundaries.

\section{No-go result for massless fields in 1+1-dimensions with a mirror}\label{1+1}

\subsection{Classical theory}\label{1+1classicaltheory}

In this section we consider in detail the classical theory of a massless real scalar field on the spacetime with boundary, $(M, \eta)$, consisting of the portion $M$ of (1+1)-dimensional Minkowski spacetime `to the left of' (and including) the worldline of a point-like mirror on a timelike trajectory of uniform and eternal acceleration. Without loss of generality we assume that the Minkowskian pseudo-norm of the 2-acceleration is always equal to $-1$.  (Clearly our no-go result does not depend on the numerical value of this quantity.)  Picking a global inertial frame $(t,x)$ such that, when the proper time $\tau$ along the mirror's worldline equals $0$, the mirror is located at $(t=0, x=1)$ and $\d t/\d \tau|_{\tau=0} = 1$, we represent $(M, \eta)$ by $M = \R^2 \setminus \left\{ (t,x) \relmiddle| x^2 - t^2 > 1, \ x>0 \right\}$ and $\eta = \d t^2 - \d x^2$. The manifold $M$ is depicted in Figure \ref{kruskal}, with ($R=1$ and) the vertical (respectively horizontal) axis representing the $t$-axis (respectively $x$-axis).

As already pointed out, this spacetime fails to be globally hyperbolic due to the presence of the timelike boundary given by the mirror's trajectory. It possesses a one-parameter group $\beta_\tau$ of isometries given by the flow of the Killing vector field $k = x\partial/\partial t + t\partial/\partial x$\footnote{Explicitly, in global inertial coordinates, $\beta_\tau(t,x) = (\cosh (\tau) t + \sinh (\tau) x, \sinh (\tau) t + \cosh (\tau) x)$ or, in terms of the null coordinates $(u, v)$ introduced below, $\beta_\tau(u,v) = (e^{-\tau}u, e^{\tau} v)$} describing homogeneous Lorentz boosts in the $x$-direction. $k$ has a bifurcate Killing horizon given by ${\cal H}_A \cup {\cal H}_B$, where ${\cal H}_A = \left\{ (t,x) \relmiddle| t=x \right\}$ and ${\cal H}_B = \left\{ (t,x) \relmiddle| t=-x \right\}$.

We immediately note that any real-valued, smooth solution $\phi$ on $M$ to
\begin{equation}\label{WEBVP}
\square \phi = 0, \qquad \phi\restriction_{\partial M} = 0,
\end{equation}
can be written globally as a sum $\phi(t,x) = f(t-x) + g(t+x)$ for two smooth functions $f$ and $g$ with $g(v) = -f(-1/v)$ for all $v>0$ (cf.\ \cite{kay2015instability}). This can be checked e.g.\ by writing the above equation in the null coordinates $u(t,x)=t-x$ and $v(t,x)=t+x$. It is also easy to check that for any such solution $\phi$ which, in addition, has spacelike-compact support (see Section \ref{quantumkleingordon}), the functions $f$ and $g$ must have the additional property that there exist $u_0$ and $v_0$ such that, for some $a \in \R$, $f(u)=a \ \forall \ u \geq u_0$ and $g(v)= - a\ \forall \ v \leq v_0$. Thus we have complete knowledge of the vector space $S$ of spacelike-compact, smooth (and real-valued) solutions discussed in Section \ref{timelikeboundaries}. And, again as envisaged in that section and in Section \ref{basicidea}, Equation (\ref{symplformdefinition}) defines a manifestly antisymmetric bilinear form $\sigma : S \times S \to \R$, independent of the initial-value surface $\mathscr{C}$ as explained in Section \ref{basicidea}. Since it is easy to check that the Cauchy-Dirichlet problem is well-posed (in the sense of both existence and uniqueness) for initial data of compact support in the interior of the particular initial-value surface $\mathscr{C} = \left\{ (t,x) \relmiddle| t=0 \right\} \cap M$, one could prove the non-degeneracy of $\sigma$ directly by picking, for any $\phi_1 \in S$, which will have some initial data $(\varphi_1, \pi_1) \in C_0^\infty(\mathscr{C}) \oplus C_0^\infty(\mathscr{C}),$\footnote{Note that, since $\mathscr{C}$ is a manifold with boundary, functions in $C_0^\infty(\mathscr{C})$ -- which are by definition smooth functions with conpact support on $\mathscr{C}$ -- need not be supported away from the boundary; indeed, they needn't even vanish \emph{at} the boundary (although for this specific choice of $\mathscr{C}$, both pieces of Cauchy data will have to vanish at the boundary because of the Dirichlet boundary condition).} $\phi_2$ to be the solution with initial data $(\varphi_2, \pi_2) \in C_0^\infty(\Int \mathscr{C}) \oplus C_0^\infty(\Int \mathscr{C})$ where $(\varphi_2, \pi_2)$ approximate $(-\pi_1, \varphi_1)$ (respectively) `sufficiently well' for $\sigma(\phi_1, \phi_2)$ to be greater than $0$. This can always be done by picking $\varphi_2 = - \psi \pi_1$ and $\pi_2 = \psi \varphi_1$ where $\psi \in C_0^\infty(\Int \mathscr{C}) \subset C_0^\infty(\mathscr{C})$ is such that $0\le\psi\le 1$ and $\psi=1$ everywhere but on a small enough neighbourhood of the boundary point $(t=0, x=1)$ of $\mathscr{C}$. Indeed, we expect a generalization of this strategy to apply to the more general setup described in Section \ref{timelikeboundaries}. We will also provide another, independent, proof of the non-degeneracy of $\sigma$ later in this section.

Thus we have endowed $S$ with the structure of a symplectic vector space $(S, \sigma)$. A simple calculation, which e.g.\ starts with the expression for $\sigma$ in terms of the $t=0$ initial-value surface mentioned above and then involves a change of variables, shows that, for any $\phi_1, \phi_2 \in S$,
\begin{align}
 \sigma(\phi_1, \phi_2)
&= 2\int_{-\infty}^{+\infty} f_1 (u) f_2'(u) \, \d u + 2\int_{-\infty}^{0} g_1 (v) g_2'(v) \, \d v \label{symplformB} \\
&= 2\int_{0}^{+\infty} f_1 (u) f_2'(u) \, \d u + 2\int_{-\infty}^{+\infty} g_1 (v) g_2'(v) \, \d v, \label{symplformA}
\end{align}
where, $f_1, g_1, f_2, g_2$ are any smooth functions such that $\phi_1 (t,x) = f_1(t-x) + g_1(t+x)$ and $\phi_2(t,x) = f_2(t-x) + g_2(t+x)$. These explicit expressions will be important in the next paragraph.

Let $S_A$ and $S_B$ denote the linear subspaces of $S$ consisting of those solutions which `fall entirely through' ${\cal H}_A$ and ${\cal H}_B$ respectively. A geometric definition of these was already given in the third paragraph of Section \ref{basicidea}. However, a more explicit characterization is also available here: $\phi \in S_B$ (respectively $\phi \in S_A$) if and only if $\phi(t,x) = f(t-x) + g(t+x)$ with the `right mover' $f$ belonging to $C^\infty_0(\R)$ and the `left mover', $g(v)$, being equal to zero for all $v \leq 0$, and to $-f(-1/v)$ for all $v>0$ (respectively the `left mover' $g$ belonging to $C^\infty_0(\R)$ and the `right mover', $f(u)$, being equal to zero for all $u \geq 0$, and to $-g(-1/u)$ for all $u<0$). Thus, solutions in $S_B$ (respectively $S_A$) are uniquely determined by their restriction to ${\cal H}_B$ (respectively ${\cal H}_A$). And indeed, the initial value problem is well-posed on Cauchy surfaces which include portions of ${\cal H}_B$ (respectively ${\cal H}_A$), for data supported on those portions. For any pair $\phi_1, \phi_2$ of $S_B$-solutions (respectively $S_A$-solutions), the second (respectively first) summand on the right-hand side of Equation (\ref{symplformB}) (respectively Equation (\ref{symplformA})) vanishes, and thus $\sigma(\phi_1, \phi_2)$ can be interpreted as twice the integral along ${\cal H}_B$ (respectively ${\cal H}_A$) of $\phi_1 \partial_u \phi_2\, du$ (respectively $\phi_1 \partial_v \phi_2\, dv$). Moreover, let $(S_\mathbb{M}, \sigma_\mathbb{M})$ denote the symplectic vector space of spacelike-compact, smooth, real-valued solutions to the massless wave equation on $(\R^2, \eta)$, and let $S_{\mathrm{r\text{-}mov}}$ and  $S_{\mathrm{l\text{-}mov}}$ denote the vector subspaces of $S_\mathbb{M}$ consisting of right-moving and left moving (respectively) solutions.  (That is solutions which, respectively, are functions of $u$ only and of $v$ only.) Then, as is well known (or easy to show), 
$(S_{\mathrm{r\text{-}mov}}, \sigma_\mathbb{M})$ and $(S_{\mathrm{l\text{-}mov}},\sigma_\mathbb{M})$\footnote{Throughout the text we adopt the convention that, if $(S,\sigma)$ is a symplectic vector space and $T$ is a vector subspace of $S$, then the presymplectic vector space $(T, \sigma\restriction_{T \times T})$ is written simply as $(T, \sigma)$.} are symplectic vector spaces in their own right and one has the following important result, whose proof is immediate.
\begin{prop}\label{symplectomorphism}
The map $T_B : S_B \to S_{\mathrm{r\text{-}mov}}$, defined by sending $\phi \in S_B$ to the unique Minkowski-space right-moving solution with the same data as $\phi$ on ${\cal H}_B$, is a presymplectic isomorphism between $(S_B, \sigma)$ and $(S_{\mathrm{r\text{-}mov}}, \sigma_\mathbb{M})$. Thus in particular $(S_B, \sigma)$ is a symplectic space and the map is a symplectic isomorphism. (And similarly, with $B$ replaced by $A$ and $\mathrm{r\text{-}mov}$ replaced by $\mathrm{l\text{-}mov}$.)
\end{prop}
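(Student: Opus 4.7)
The plan is to make the map $T_B$ fully explicit using the parametrisation of $S_B$ established immediately before the proposition. Any $\phi \in S_B$ has the form $\phi(t,x) = f(t-x) + g(t+x)$ with $f \in C_0^\infty(\R)$ and $g$ determined by $f$ through $g(v) = -f(-1/v)$ for $v > 0$ and $g(v) = 0$ for $v \le 0$. Restricting $\phi$ to ${\cal H}_B = \{v=0\}$, parametrised by the affine coordinate $u = t-x$, gives simply $\phi\restriction_{{\cal H}_B}(u) = f(u)$, so the unique right-moving Minkowski solution with matching data on ${\cal H}_B$ is $T_B \phi = f(t-x)$. Linearity is immediate. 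Injectivity is clear since $T_B \phi = 0$ forces $f = 0$ and hence $g=0$. For surjectivity, given any $\tilde\phi(t,x) = f(t-x) \in S_{\mathrm{r\text{-}mov}}$ with $f \in C_0^\infty(\R)$, I construct the preimage using the rule above; compactness of $\supp f$ implies that $g(v) = -f(-1/v)$ vanishes identically in a neighbourhood of $v=0$, so $g$ is globally $C^\infty$ and the preimage indeed lies in $S_B$.

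Next I would establish the symplectic intertwining property directly from Equation (\ref{symplformB}). For $\phi_1, \phi_2 \in S_B$, the second integral on the right-hand side vanishes because $g_1$ and $g_2$ are supported in $v > 0$, leaving
\begin{equation*}
\sigma(\phi_1, \phi_2) \;=\; 2\int_{-\infty}^{+\infty} f_1(u)\, f_2'(u)\, \d u.
\end{equation*}
A direct calculation of $\sigma_\mathbb{M}(T_B \phi_1, T_B \phi_2)$ on, say, the $t=0$ Cauchy surface of $(\R^2, \eta)$ yields the same integral (for a generic pair in $S_\mathbb{M}$ the `cross terms' between right- and left-movers cancel after one integration by parts, but here there are no left-movers at all). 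Hence $T_B$ intertwines $\sigma$ with $\sigma_\mathbb{M}$ and is a presymplectic isomorphism.

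Finally, non-degeneracy of $\sigma$ on $S_B$ is transported from that of $\sigma_\mathbb{M}$ on $S_{\mathrm{r\text{-}mov}}$ along the bijection $T_B$: if $\int_\R f_1 f_2'\, \d u = 0$ for every $f_2 \in C_0^\infty(\R)$, then $f_1' = 0$ distributionally, so $f_1$ is constant, and compact support forces $f_1 \equiv 0$. The corresponding statement for $T_A$ is obtained by interchanging the roles of $u$ and $v$ (and of $f$ and $g$, and of $B$ and $A$) throughout, and by invoking Equation (\ref{symplformA}) in place of (\ref{symplformB}). I do not anticipate any real obstacle here: the only step requiring a moment of care is the global smoothness of $g$ at $v=0$ in the surjectivity argument, which is rendered trivial by the compactness of $\supp f$.
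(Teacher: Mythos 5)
Your proof is correct and is exactly the argument the paper has in mind: the paper declares the proof ``immediate'' precisely because the explicit characterization of $S_B$ (right-mover $f \in C_0^\infty(\R)$ determining $g$) and Equations (\ref{symplformB})--(\ref{symplformA}) reduce everything to the identification $\phi \mapsto f$ and the computation $\sigma(\phi_1,\phi_2) = 2\int f_1 f_2'\,\d u = \sigma_\mathbb{M}(T_B\phi_1, T_B\phi_2)$, which is what you spell out (including the correct observation that compact support of $f$ makes $g$ vanish near $v=0$, and the direct non-degeneracy argument). No gaps.
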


We can now also define a proper linear subspace $S_0$ of $S$ by $S_0 := S_A + S_B$, and subspaces $S_A^R, S_A^L \subset S_A$, $S_B^R, S_B^L \subset S_B$ just as explained in Section \ref{basicidea}, that is 
\begin{equation*}
S_A^{L/R} := \left\{\phi \in S_A \relmiddle| \supp (\phi\restriction_{{\cal H}_A}) \subset {\cal H}_A^{L/R} \right\},
\end{equation*}
with ${\cal H}_A^L$ and ${\cal H}_A^R$ the `left' and `right' portions of the $A$-horizon, i.e.\ ${\cal H}_A^L := {\cal H}_A \cap \left\{ (t,x) \relmiddle| x < 0 \right\}$ and ${\cal H}_A^R := {\cal H}_A \cap \left\{ (t,x) \relmiddle| x > 0 \right\}$ (and similarly with $S_B^{L/R}$ and ${\cal H}_B^{L/R}$). It is clear that $(S_A, \sigma)$, $(S_B, \sigma)$, $(S_A^{L/R}, \sigma)$, $(S_B^{L/R}, \sigma)$ are all symplectic spaces (indeed, for $(S_A,\sigma)$,  $(S_B, \sigma)$ this was already established in Proposition \ref{symplectomorphism}). It is also clear that $T_B$ restricts to a symplectic isomorphism between $(S_B^{L/R}, \sigma)$ and $(S_{\mathrm{r\text{-}mov}}^{L/R},\sigma_{\mathbb{M}})$, where $(S_{\mathrm{r\text{-}mov}}^{L/R},\sigma_{\mathbb{M}})$ is the symplectic subspace of $(S_{\mathrm{r\text{-}mov}},\sigma_{\mathbb{M}})$ consisting of purely right-moving solutions in $S_\mathbb{M}$ whose data on ${\cal H}_B$ is supported strictly to the left/right of the origin (and similarly, with $B$ replaced by $A$ and $\mathrm{r\text{-}mov}$ replaced by $\mathrm{l\text{-}mov}$). 

We wish next to show that the presymplectic space $(S_0,\sigma)$ is also actually a symplectic space.\footnote{\label{gap}While we were proving Proposition \ref{symplecticity} we noticed that there seems to be a gap in the arguments on a corresponding issue in \cite{kay1991theorems}: While it was clear that the $(S_A, \sigma)$ and $(S_B,\sigma)$ of that paper are symplectic spaces (and the same is also true for the spaces called $(\tilde{S}_A, \hat{\sigma})$ and $(\tilde{S}_B, \hat{\sigma})$, as we show in Appendix \ref{appB}) it was also tacitly assumed that (with our fiction) the space called $(S_0,\sigma)$ and (without our fiction) the space called $(\tilde{S}_0, \hat{\sigma})$ are symplectic spaces. However this was never established there. We describe possible ways of filling this gap in some cases of physical interest in Appendix \ref{appB} here.} In fact we will prove a stronger result.  Note first that the formula on the right-hand side of Equation (\ref{symplformdefinition}) is still well-defined and antisymmetric when only one of the solutions is spacelike-compact, and Equations (\ref{symplformB})--(\ref{symplformA}) are also still valid in that case.

\begin{prop}\label{symplecticity}
Suppose $\phi$ is any (not necessarily spacelike-compact) smooth solution to (\ref{WEBVP}) on $M$ which is symplectically orthogonal to both $S_A$ and $S_B$, i.e.\ $\sigma(\phi_A, \phi) = 0 = \sigma(\phi_B, \phi)$ for all $\phi_A \in S_A$ and $\phi_B \in S_B$. Then $\phi = 0$.
\end{prop}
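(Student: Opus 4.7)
The plan is to reduce the symplectic orthogonality conditions to pointwise conditions on the right-mover and left-mover data of $\phi$, and then exploit the Dirichlet condition. The starting point is the global decomposition $\phi(t,x) = f(t-x) + g(t+x)$ with $g(v) = -f(-1/v)$ for $v>0$, valid for any smooth solution to (\ref{WEBVP}), together with the observation made in the paragraph preceding the proposition that formulas (\ref{symplformB}) and (\ref{symplformA}) remain valid when only one of the arguments of $\sigma$ is spacelike-compact.

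The key step is to probe $\phi$ against test solutions $\phi_B \in S_B$ whose right-mover $f_B = \psi$ is an arbitrary element of $C_0^\infty(\R)$ and whose left-mover $g_B$ is defined to vanish on $(-\infty, 0]$ and to equal $-\psi(-1/v)$ for $v>0$. Since $\psi$ has compact support in $\R$, $g_B$ is in fact identically zero in a neighborhood of $v=0$, so the resulting $\phi_B$ is smooth and (by a straightforward check) spacelike-compact on $M$. Substituting into (\ref{symplformB}), the second integral drops because $g_B'$ vanishes on $(-\infty, 0]$, leaving $\sigma(\phi, \phi_B) = 2\int_\R f(u)\psi'(u)\,\d u$. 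An integration by parts, which produces no boundary terms because $\psi$ has compact support, then converts the hypothesis $\sigma(\phi, \phi_B) = 0$ into $\int_\R f'(u)\psi(u)\,\d u = 0$ for every $\psi \in C_0^\infty(\R)$, which forces $f' \equiv 0$ and hence $f = a$ for some constant $a \in \R$.

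A perfectly parallel argument, using test solutions $\phi_A \in S_A$ with arbitrary compactly supported left-mover and formula (\ref{symplformA}) in place of (\ref{symplformB}) (the first integral now dropping because $f_A$ vanishes on $[0, \infty)$), shows that $g$ is a constant $b \in \R$. Thus $\phi \equiv a + b$ on $M$. The Dirichlet condition $\phi\restriction_{\partial M} = 0$, evaluated at a single point of the mirror's worldline, forces $a + b = 0$, whence $\phi \equiv 0$.

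The only step requiring a modicum of care is verifying that the test solutions $\phi_B$ (and the analogous $\phi_A$) genuinely lie in $S_B$ (respectively $S_A$): smoothness across $\mathcal{H}_B$ is automatic from the compact support of $\psi$, but spacelike-compactness on $M$ must be confirmed despite $g_B$ having non-compact support on $v > 0$ when $\supp \psi$ contains a neighborhood of $0$. This should follow by exhibiting an explicit compact $K \subset M$ with $\supp \phi_B \subseteq J(K)$, making essential use of the fact that the mirror's trajectory truncates the left-moving strip. Alternatively, one can sidestep the issue by restricting $\psi$ to $C_0^\infty(\R \setminus \{0\})$: the conclusion $f' \equiv 0$ then holds pointwise off $u=0$, and by smoothness of $f$ extends everywhere.
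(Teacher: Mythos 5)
Your argument is correct and is essentially the paper's own proof: test $\phi$ against elements of $S_B$ (respectively $S_A$) via Equation (\ref{symplformB}) (respectively (\ref{symplformA})), conclude $f'\equiv 0$ and $g'\equiv 0$, and use the Dirichlet condition to kill the resulting constant; the paper simply places the $S_B$-solution in the first slot of $\sigma$, obtaining $\int h f'\,\d u = 0$ directly without your extra integration by parts. Your closing care about spacelike-compactness of the test solutions is already covered by the explicit characterization of $S_B$ (and $S_A$) given earlier in Section \ref{1+1classicaltheory}, which the paper's proof simply invokes.
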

\begin{proof}
Let $f, g$ be smooth functions such that $\phi(t,x) = f(t-x)+g(t+x)$. Solutions in $S_B$ have the form $\phi_B (t,x) = h(t-x) + k(t+x)$ where $h$ is any function in $C_0^\infty(\R)$ and $k(v) =  - \vartheta(v) h(-1/v)$. Therefore, if $\phi$ is symplectically orthogonal to $S_B$ then Equation (\ref{symplformB}) implies that
\begin{equation*}
    \int_{-\infty}^{+ \infty} h(u) f'(u) \, \d u = 0
\end{equation*}
for \emph{all} $h \in C_0^\infty(\R)$. This implies that $f'$ is identically zero and thus that $f$ equals a constant. A similar argument shows that $g$ equals a constant. Thus $\phi$ is also constant. But then it must be zero since it is assumed to vanish on $\partial M$.
\end{proof}

As already anticipated in the Introduction, two further important observations for the purposes of this paper are that, with the above definitions and using Equations (\ref{symplformB})--(\ref{symplformA}), it is clearly the case that
\begin{itemize}
\item $S_A^R=S_B^R$,
\item $S_B^L$ is symplectically orthogonal to $S_A$. Similarly, $S_A^L$ is symplectically orthogonal to $S_B$.
\end{itemize}

As final `classical' ingredients necessary to formulate and then to prove our no-go result in the remainder of this section, we need to construct retarded/advanced Green operators $E^\pm$ appropriate to our Cauchy-Dirichlet problem on $M$, as discussed in Section \ref{timelikeboundaries}. Namely, $E^\pm$ should be such that, for all $F \in C_0^\infty(\Int M)$,
\begin{align}
\square E^\pm F = E^\pm \square F &= F, \label{advancedretarded1}\\
E^\pm F \restriction_{\partial M} &= 0, \label{advancedretarded2}\\ 
\supp (E^\pm F) &\subseteq J^\pm (\supp F). \label{advancedretarded3}
\end{align}
The resulting causal propagator $E = E^- - E^+ : C_0^\infty(\Int M) \to C^\infty(M)$ will then clearly map to $S$. 

We will now argue that $E^\pm$ with the above properties can indeed be constructed. In what follows, for each $p \in M$ we denote by $m_+(p)$ (respectively $m_-(p)$) the set of all future (respectively past) endpoints on $\partial M$ of (smoothly) inextendible null geodesics passing through $p$. Equivalently, $m_\pm(p)$ is the intersection between $\partial M$ and the topological boundary (in $M$) of $J^\pm(p) = J^\pm(\{p\})$. In particular, $m_\pm(p)$ is either empty or a singleton, and $m_{\pm}(p) = \{ p \}$ if $p \in \partial M$. See Figure \ref{Eplus}.

It is well-known and easy to verify that the unique advanced and retarded Green operators for the scalar wave equation in full (1+1)-dimensional Minkowski spacetime, which we denote by $E_\mathbb{M}^\pm$, are given by
\begin{equation*}
[E_\mathbb{M}^\pm F](p) = \frac{1}{2} \int_{J_\mathbb{M}^\mp (p)} F \, \d \mu_\eta,
\end{equation*} 
where $p \in \R^2$, $F \in C_0^\infty(\R^2)$, $J_\mathbb{M}^\mp (p)$ is the causal past/future of $p$ in the full Minkowski space, and $\d \mu_\eta$ denotes the metric volume element. Consequently, the causal propagator $E_\mathbb{M}$ is given by
\begin{equation}\label{CausProp}
[E_\mathbb{M} F](p) = \frac{1}{2} \Bigg\{ \int_{J_\mathbb{M}^+ (p)} - \int_{J_\mathbb{M}^- (p)} \Bigg\} F \, \d \mu_\eta = \frac{1}{2} \Bigg\{ \int_{V(p)} - \int_{U(p)} \Bigg\} F \, \d \mu_\eta
\end{equation}
where we have defined the sets $V(p) :=\{ p' : v(p') \geq v(p) \}$, $U(p) :=\{ p' : u(p') \leq u(p) \}$, with $u$ and $v$ the global null coordinates defined above. The first term in the rightmost expression is a function of the $v$-coordinate of $p$ only, while the second is a function of the $u$-coordinate only. Thus one retrieves the expression of the solution as a sum of a left mover and a right mover, which we denote by $g_\mathbb{M}(v)$ and $f_\mathbb{M}(u)$ respectively.

We next make a definition before finally being able to state the result on existence of advanced and retarded Green operators in the presence of our mirror.

\begin{defn}\label{C_00}
For any open subset $X \subset \R^2$, we denote the space of compactly supported smooth functions on $X$ with vanishing integral with respect to the Minkowski metric measure by $C_{00}^\infty(X)$. That is,
\begin{equation*}
C_{00}^\infty(X) := \left\{ F \in C_0^\infty(X) \relmiddle| \int_X F \, \d \mu_\eta = 0 \right\}.
\end{equation*}
\end{defn}
Note that in what follows we will sometimes identify test functions defined on an open subset $X$ with test functions on the whole of Minkowski space (by extending them to be zero outside of $X$). It is easy to see from Equation (\ref{CausProp}) that, in the full Minkowski space theory, $E_\mathbb{M} [C_{00}^\infty(\R^2)]$ consists of all solutions (to the massless wave equation) of the form $f(t-x) + g(t+x)$ with $f, g \in C_0^\infty(\R)$. That is, defining the subspaces $S_{A, \mathbb{M}}$, $S_{B, \mathbb{M}}$ and $S_{0, \mathbb{M}} := S_{A, \mathbb{M}} + S_{B, \mathbb{M}}$ of $S_\mathbb{M}$, in a manner analogous to the way we defined $S_A$, $S_B$ and $S_0 = S_A + S_B$, one has $S_{0, \mathbb{M}} = E_\mathbb{M}[C_{00}^\infty(\R^2)]$.

\begin{figure}
   \centering
    \includegraphics[scale=0.5]{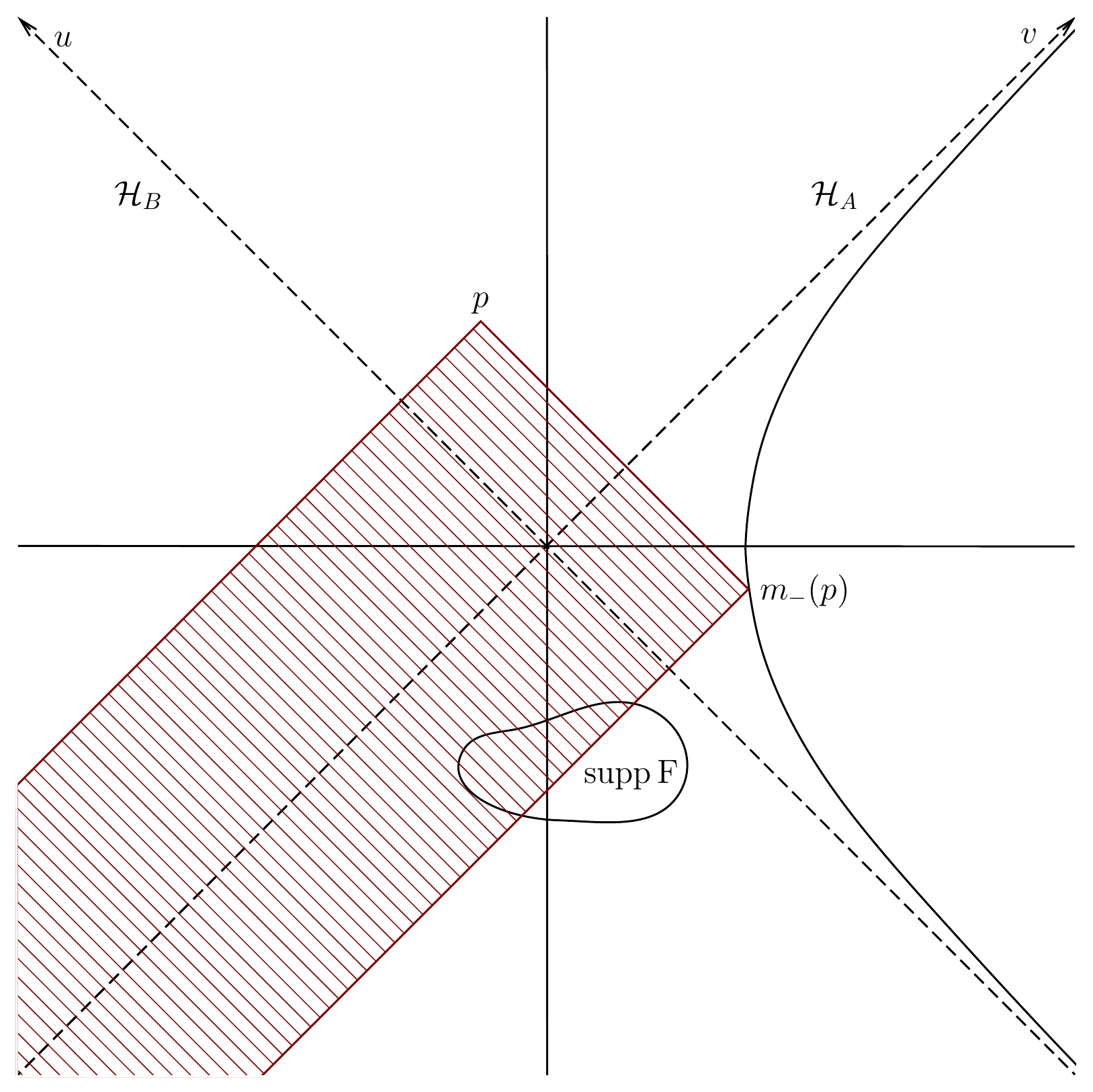}
   \caption{Illustration of the definition of the retarded propagator $E^+$ given in Theorem \ref{greensoperators}. Integrating one half times the source $F$ over the shaded region gives $[E^+F](p)$. The definition of $E^-$ can, of course, be illustrated similarly. \label{Eplus}}
 \end{figure}

\begin{thm}\label{greensoperators}
The linear operators $E^\pm : C_0^\infty(\Int M) \to C^\infty(M)$ defined, for all $F \in C_0^\infty(\Int M)$ and $p \in M$, by
\begin{align}\label{advancedretardeddefn}
[E^\pm F] (p) &= \frac{1}{2} \left\{ \int_{J^\mp(p)} - \int_{J^\mp(m_\mp (p))} \right\} F \, \d \mu_\eta
 \nonumber \\
&= [E_\mathbb{M}^\pm F](p) - \frac{1}{2}\int_{J^\mp(m_\mp (p))} F \, \d \mu_\eta,
\end{align}
(see Figure \ref{Eplus}) satisfy Equations (\ref{advancedretarded1})--(\ref{advancedretarded3}).  Furthermore, $S_0 = E[C_{00}^\infty(\Int M)]$.
\end{thm}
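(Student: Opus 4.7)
The plan is to reduce both parts of the theorem to the Minkowski-space situation by exploiting the method-of-images structure of the formula defining $E^\pm$. First, note that the correction term $\tfrac{1}{2}\int_{J^\mp(m_\mp(p))} F \, \d \mu_\eta$ is, whenever $m_\mp(p)$ is non-empty, the integral of $F$ over the causal past/future of $m_\mp(p)$; since $m_\mp(p)$ is determined by a single null coordinate of $p$ (by $v_p$ for $m_-$ and by $u_p$ for $m_+$), this correction depends on $p$ through only one null coordinate and is hence annihilated by $\square = 4\partial_u \partial_v$. Combined with $\square E_\mathbb{M}^\pm F = F$, this yields $\square E^\pm F = F$; the Dirichlet condition $E^\pm F|_{\partial M} = 0$ follows because $m_\mp(p) = \{p\}$ for $p \in \partial M$, making the two integrals in the definition coincide; and the support property results from $J^\mp(m_\mp(p)) \subseteq J^\mp(p)$, so that both integrals vanish whenever $J^\mp(p) \cap \supp F = \emptyset$. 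The identity $E^\pm \square F = F$ follows from the standard $E_\mathbb{M}^\pm \square F = F$ together with the calculation (by iterated integration in null coordinates) $\int_{J^\mp(q)} \square F \, \d \mu_\eta = 2 F(q)$ for $q \in \R^2$, which makes the correction contribute $-F(m_\mp(p)) = 0$, since $m_\mp(p) \in \partial M$ whenever non-empty while $\supp F \subset \Int M$.

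For the equality $S_0 = E[C_{00}^\infty(\Int M)]$, the forward inclusion is proved by writing $EF = E_\mathbb{M} F + \tfrac{1}{2}\int_{J^-(m_-(p))} F \, \d\mu_\eta - \tfrac{1}{2}\int_{J^+(m_+(p))} F \, \d\mu_\eta$ and, via the single-null-coordinate dependence of the two correction integrals, recognizing this as a global expression $EF = f(u_p) + g(v_p)$ with $f(u) = f_\mathbb{M}(u) - \tfrac{1}{2}F_m(u)$ and $g(v) = g_\mathbb{M}(v) + \tfrac{1}{2}G_m(v)$, where $F_m, G_m$ denote the correction integrals (extended by zero when the relevant $m_\pm(p)$ is empty). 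Since $F \in C_{00}^\infty$, the Minkowski pieces $f_\mathbb{M}, g_\mathbb{M}$ lie in $C_0^\infty(\R)$. I then decompose by setting $g_A := g_\mathbb{M}$ and $f_A(u) := -g_A(-1/u)$ for $u<0$ (zero otherwise), and putting $g_B := g - g_A$, $f_B := f - f_A$. The coupling relations characterizing $S_A$ and $S_B$ then hold automatically, the identity $g_B(v) = -f_B(-1/v)$ for $v>0$ being equivalent to the Dirichlet condition $f(u) + g(-1/u) = 0$ already satisfied by $EF$. The key point is to verify that $f_B \in C_0^\infty(\R)$: smoothness at $u=0$ is immediate because $F_m(u)$ and $g_\mathbb{M}(-1/u)$ vanish identically in a left-neighbourhood of $u=0$ (as $-1/u \to +\infty$ takes one outside the bounded $v$-range of $\supp F$); and compact support stems from the pointwise identity $F_m(u) = 2 g_\mathbb{M}(-1/u)$ valid for $u \leq u_{\min} := \min\{u' : (u',v') \in \supp F\}$, which together with $f_\mathbb{M}(u) = 0$ in that range produces the exact cancellation $f_B(u) = 0$ for $u \leq u_{\min}$. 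Hence $EF = \phi_A + \phi_B \in S_0$.

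For the reverse inclusion, given $\phi = \phi_A + \phi_B \in S_0$, I construct $F_A, F_B \in C_{00}^\infty(\Int M)$ separately with $EF_A = \phi_A$ and $EF_B = \phi_B$; the desired preimage is then $F := F_A + F_B$. For $\phi_B \in S_B$ with right-mover $f_B \in C_0^\infty(\R)$, take $F_B(u,v) := -4 f_B'(u)\chi(v)$ with $\chi \in C_0^\infty(\R)$ non-negative, supported in $v<0$, and satisfying $\int \chi \, \d v = 1$; then $\supp F_B \subset \Int M$ (since $v<0$ avoids the mirror region) and $\int F_B \, \d\mu_\eta = 0$ by $\int f_B' \, \d u = 0$. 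A direct computation of the associated $f_\mathbb{M}, g_\mathbb{M}, F_m, G_m$, in which $\chi(v) = 0$ for $v \geq 0$ annihilates $g_\mathbb{M}$ and $F_m$, yields $[EF_B](u,v) = f_B(u)$ for $v \leq 0$ and $[EF_B](u,v) = f_B(u) - f_B(-1/v)$ for $v > 0$, which is exactly $\phi_B$. The construction of $F_A$ is entirely analogous, with the roles of $u$ and $v$ interchanged and $\chi$ instead supported in $u > 0$. The only non-obvious step in the whole argument is thus the cancellation producing the compact support of $f_B$ in the forward inclusion, and this, as noted above, is an exact pointwise consequence of the $C_{00}^\infty$ constraint on $F$.
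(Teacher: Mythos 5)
Your proposal is correct and takes essentially the same route as the paper's proof: the same splitting of $E^\pm$ into $E^\pm_\mathbb{M}$ plus a correction depending on a single null coordinate (giving $\square$-annihilation of the correction and, via vanishing of $F$ on $\partial M$, the left-inverse property), the same reading-off of the right- and left-moving parts of $EF$ for the inclusion $E[C_{00}^\infty(\Int M)]\subseteq S_0$, and the identical choice $\tilde F(u,v)=-4h'(u)\chi(v)$ for the reverse inclusion. Only a cosmetic remark: in your reverse-inclusion computation the vanishing of $g_\mathbb{M}$ for \emph{all} $v$ comes from $\int_\R f_B'(u)\,\d u=0$ (compact support of $f_B$), not from $\supp\chi\subset(-\infty,0)$, which by itself only gives $g_\mathbb{M}(v)=0$ for $v\geq 0$ (and does correctly kill $F_m$).
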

\begin{rk}
	The second summand on the right-hand side of Equation (\ref{advancedretardeddefn}) equals zero (for any test function) at any point $p$ for which $m_\mp(p) = \emptyset$ (whereupon the integration domain $J^\mp(m_\mp(p))$ is also empty). When $m_\mp(p)$ consists of the point $p_\mp$, it equals $[E^\pm_\mathbb{M} F](p_\mp)$.
\end{rk}
\begin{proof}
That each $E^\pm F$ is smooth is obvious since our test functions have compact support. The boundary condition, Equation (\ref{advancedretarded2}), and the support property, Equation (\ref{advancedretarded3}), also hold trivially.

We now turn to the equations in (\ref{advancedretarded1}), i.e.\ to the two-sided inverse property, on the domain $C_0^\infty(\Int M)$, of $E^\pm$ with respect to the d'Alembert operator $\square$. We carry out the proof explicitly in the case of $E^+$; the arguments for $E^-$ are analogous.  In view of the fact that the corresponding object $E^+_\mathbb{M}$ on full Minkowski space is already known to satisfy the analogous two-sided inverse property for all test functions (and thus in particular for those supported in $\Int M$), we need to check that the operator $D^+ = E^+ - E^+_\mathbb{M}$, whose action is defined by the second summand on the right-hand side of Equation (\ref{advancedretardeddefn}), is such that $D^+ \square F = 0 = \square D^+ F$ whenever $F \in C_0^\infty(\Int M)$. Using the remark above and, again, the left-inverse property for $E^+_\mathbb{M}$, it is easy to see that the first of these identities holds because any $F \in C_0^\infty(\Int M)$ vanishes on $\partial M$. To verify the second identity, we first express $D^+ F$ in terms of the null coordinates $u$ and $v$. For any $p \in M$, $m_-(p)$ is empty if $v(p) \leq 0$, and contains only the point with null coordinates $u_- = -1/v(p)$ and $v_- = v(p)$ if $v(p) > 0$. Therefore, one has
\begin{equation}\label{retardednullcoords} [\widetilde{D^+ F}](u,v) = - \frac{\vartheta(v)}{4} \int_{u' \leq -1/v} \tilde{F}(u',v') \, \d u' \, \d v', 
\end{equation}
where the tilde indicates that one is dealing with the coordinate expression of a function in the $(u, v)$ coordinate system, and $\vartheta$ denotes the Heaviside step function. The right-hand side of Equation (\ref{retardednullcoords}) is clearly annihilated by $\partial/\partial u$, and thus in particular by $\square = 4\partial^2/\partial u \partial v$. This completes the proof of the right-inverse property for $E^+$.

In order to prove the second statement in the theorem, we first point out that it is straightforward to check that, for any $F \in C_0^\infty(\Int M)$,
\begin{equation}\label{causalpropagatorimages}
[\widetilde{EF}](u,v) = f_\mathbb{M}(u) + g_\mathbb{M}(v) - \vartheta(-u) g_\mathbb{M}(-1/u) - \vartheta(v) f_\mathbb{M}(-1/v),
\end{equation}
where $f_\mathbb{M}$ and $g_\mathbb{M}$ denote the right- and left-moving parts of $E_\mathbb{M}F$ obtained in the manner described in the discussion under Equation (\ref{CausProp}). That is,
\begin{equation}\label{rightleftmovingparts}
f_\mathbb{M}(u) = - \frac{1}{4} \int_{u' \leq u} \tilde{F}(u',v') \, \d u' \, \d v' \quad \text{and} \quad g_\mathbb{M}(v) = \frac{1}{4} \int_{v' \geq v} \tilde{F}(u',v') \, \d u' \, \d v'.
\end{equation}
Since $f_\mathbb{M}$ and $g_\mathbb{M}$ have compact support when $F \in C_{00}^\infty(\Int M)$, it follows that $E[C_{00}^\infty(\Int M)] \subseteq S_0$. To prove the reverse inclusion, it clearly suffices to show that $S_A$ and $S_B$ are individually contained in $E[C_{00}^\infty(\Int M)]$. We give the argument for $S_B$, the one for $S_A$ being entirely analogous.
If $\phi \in S_B$ then $\tilde{\phi}(u,v) = h(u) + k(v)$ where $h \in C_0^\infty(\R)$ and $k(v) = - \vartheta(v) h(-1/v)$. In view of Equation (\ref{causalpropagatorimages}), it therefore suffices to find an $F \in C_{00}^\infty(\Int M)$ such that $f_\mathbb{M}$ and $g_\mathbb{M}$ in Equation (\ref{rightleftmovingparts}) equal $h$ and $0$ respectively (i.e.\ $F$ needs to integrate to zero, be supported in $\Int M$ and generate the pure right-mover -- in the full Minkowski space theory -- described by $h$). This can be done as follows: Pick any $\chi \in C_0^\infty(\R)$ with the properties that $\supp \chi \subset (-\infty,0)$ and $\int_{\R} \chi(x) \, \d x = 1$. Then, the function $F$ defined by 
\begin{equation}
\tilde{F}(u,v) = -4 h'(u)\chi(v)
\end{equation}
clearly fulfills the required properties.
\end{proof}

To make contact with the general discussion in Section \ref{timelikeboundaries}, we remark that we have \emph{not} proved that $E : C_0^\infty(\Int M) \to C^\infty(M)$ is onto $S$. Indeed, as pointed out there, we don't expect this to be the case. Nor, as also anticipated there, is the kernel of the causal propagator constructed in Theorem \ref{greensoperators} equal to $\square[C_0^\infty(\Int M)]$, as one can see from Equation (\ref{causalpropagatorimages}). Indeed, one need only pick a test function $F \in C_0^\infty(\Int M)$ which, on the entire Minkowski space, would propagate to a non-zero solution with right- and left-moving parts $f_\mathbb{M}$ and $g_\mathbb{M}$ respectively (obtained again in the manner described in the discussion under Equation (\ref{CausProp})), which are such that $f_\mathbb{M}(u) = \vartheta(-u)g_\mathbb{M}(-1/u)$ and $g_\mathbb{M}(v) = \vartheta(v) f_\mathbb{M}(-1/v)$ for all $u, v \in \R$.   Then $EF=0$ but $F$ cannot equal $\square G$ for any $G\in C_0^\infty(\Int M)$ since $E_\mathbb{M}F\ne 0$ in full Minkowski space.  See Figure \ref{kernel}.

\begin{figure}
   \centering
    \includegraphics[scale=0.5]{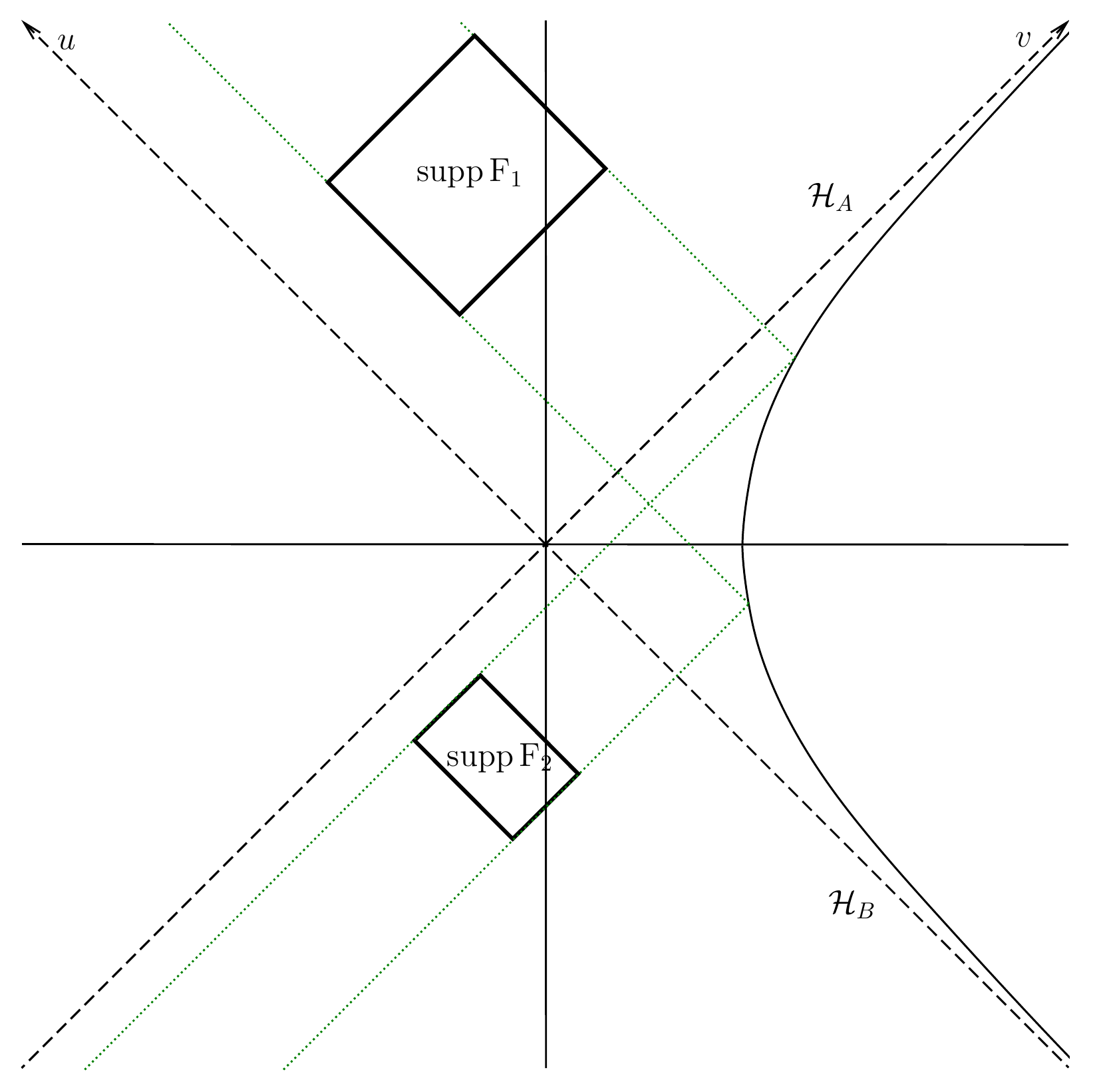}
   \caption{Illustration of the failure of $\ker E$ to be equal to $\square [C_0^\infty(\Int M)]$. The $F$ in the discussion in the main text (which in this illustration has disconnected support) is to be identified with $F_1-F_2$. Then $E F = 0$ but $F$ cannot equal $\square G$ for any $G \in C_0^\infty(\Int M)$ since $E_\mathbb{M}F \neq 0$ in full Minkowski space. \label{kernel}}
 \end{figure}

As another side remark, we note that, equipped with the above results, one can straightforwardly imitate an argument which is standard in the globally hyperbolic setup (see e.g.\ \cite[Lemma 3.2.2]{baer2007wave}, or after Equation (3.18) in \cite{kay1991theorems}) to show that, for any $F \in C_0^\infty(\Int M)$ and $\phi \in S$,
\begin{equation}\label{usefulidentity}
\int_M F \phi \, \d \mu_\eta = \sigma(E F, \phi).
\end{equation}
And Equation (\ref{usefulidentity}) provides the alternative way, promised above, to show the non-degeneracy of $\sigma$. Indeed, for any given $\phi \in S$ it is clearly possible to find a test function $F \in C_0^\infty(\Int M)$, not in the kernel of $E$ (i.e.\ not generating the zero solution) and such that $\int_M F \phi \, \d \mu_\eta \neq 0$ -- any $F$ which is everywhere non-zero and is sufficiently localized around a point where $\phi$ attains a non-zero value will do.

We conclude this section by briefly discussing the action of Lorentz boost isometries on elements of $S$. The one-parameter group $(\beta_\tau)_{\tau \in \R}$ of Lorentz-boost isometries yields a one-parameter abelian group of linear symplectomorphisms $\mathscr{T}_\tau : S \to S$ via pullback by the inverse maps, i.e.\ $\mathscr{T}_\tau \phi := \phi \circ \beta_{-\tau}$. Explicitly, if $\phi(t,x) = f(t-x)+g(t+x)$ then $[\mathscr{T}_\tau \phi] (t,x) = f_\tau(t-x) + g_\tau(t+x)$ where $f_\tau(u) = f(e^{a \tau}u)$ and $g_\tau(v) = g(e^{-a \tau}v)$.

\subsection{The infra-red pathology and the Hadamard notion}\label{infra-red}

We now wish to discuss the prospects for identifying an appropriate framework for the quantization of the massless field on $(M, \eta)$. We first recall some of the issues arising in the quantization of massless fields in \emph{full} (1+1)-dimensional Minkowski spacetime.

As we mentioned in the Introduction and in Section \ref{basicidea}, in attempting to define a ground state representation there, one is faced with an infra-red pathology (see e.g.\ \cite{schroer1963infrateilchen, wightman1967introduction, streater1970fermion, kay1985double, fulling1987temperature, derezinski2006quantum}). To recall the issue: One might attempt to define the quantum field as a genuine operator-valued distribution\footnote{It is irrelevant to this discussion whether the quantum field is to be smeared with test functions in $C_0^\infty(\R^2)$ or, say, test functions in Schwartz space $\mathscr{S}(\R^2; \R)$. But we will work with the former space because it's technically more appropriate for our needs in this section.} by proceeding in the usual way involving creation and annihilation operators on the standard bosonic Fock space $\mathscr{F} = \bigoplus_{n=0}^\infty L^2(\R)^{\odot n}$. One would then demand that the Fock vacuum vector $\Omega$ belong to a common invariant (and dense) domain for all thus defined field operators. However, in general the resulting one-particle vectors $\hat{\phi} (F) \Omega$ -- generated by acting on the vacuum with the candidate quantum field smeared with an arbitrary test function $F$ on spacetime -- might not be square integrable. In fact, if $\tilde{F}(k)$ is the Fourier transform, $(1/\sqrt{2\pi})\int_{\R^2} F(x)e^{-ik\cdot x} \, \d^2 x$ of $F$, the vacuum belongs to the domain of $\hat{\phi}(F)$ if and only if $\tilde{F}(0) = 0$. This problem starkly manifests itself at the level of the tentative `two-point function', which is formally given by 
\begin{equation}\label{twopointfunctionfail}
\innerprod{\Omega}{\hat{\phi}(F) \hat{\phi}(G) \Omega} = \pi\int_{-\infty}^{+\infty} \tilde{F}(|p|,-p) \tilde{G}(-|p|,p) \, \frac{\d p}{|p|}.\end{equation}
Indeed, the above clearly diverges (logarithmically) unless one of $\tilde{F}(0)$ or $\tilde{G}(0)$ equals zero.

Thus the usual quantization procedure fails to produce, via Equation (\ref{twopointfunctionfail}), a bidistribution, $\Lambda$, on $\R^2$ representing two-point correlators, because one can't allow for generic test functions. If, however, one restricts to smearings with elements of the linear subspace $C_{00}^\infty(\R^2)$ of Definition \ref{C_00}, then both this `two-point functional' exists and (by construction via creation and annihilation operators) satisfies the positivity properties $\Lambda(F, F) \geq 0$, $E_{\mathbb{M}}(F,G)^2 \leq 4 \Lambda(F,F) \Lambda(G,G)$\footnote{If we let $\mathscr{D}_0(\R^2)$ denote the complexification of $C_{00}^\infty(\R^2)$ then these positivity conditions can be succinctly expressed as $\Lambda^\mathbb{C}(\bar F, F) \geq 0 \ \forall \ F \in \mathscr{D}_0(\R^2)$, where $\Lambda^\mathbb{C}$ denotes the extension by complex bilinearity of $\Lambda$ to a bilinear form on $\mathscr{D}_0(\R^2)$.} required for a probabilistic interpretation.

In the Weyl-algebraic approach to quantization which we adopt in this paper (see Appendix \ref{appA}), what is problematic is the attempt to define a ground state with respect to time translations on the Weyl algebra ${\cal A}_\mathbb{M} = \mathscr{W}(S_\mathbb{M}, \sigma_\mathbb{M})$ generated by the symplectic space $(S_\mathbb{M}, \sigma_\mathbb{M})$ defined in Section \ref{1+1classicaltheory}. But we observe that, if we restrict to the Weyl subalgebra ${\cal A}_{0, \mathbb{M}} = \mathscr{W}(S_{0, \mathbb{M}} = E_\mathbb{M} [C_{00}^\infty(\R^2)], \sigma_\mathbb{M})$ then there \emph{is} an unproblematic ground state with respect to time translations, namely the state whose spacetime smeared two-point function is precisely the `two-point functional' of the previous paragraph. In Section \ref{theorem} we will refer to this state on ${\cal A}_{0, \mathbb{M}}$ -- which, we remark in passing, is a quasi-free state -- as $\omega_\mathbb{M}$, and to its symplectically smeared two-point function as $\lambda_\mathbb{M}$. In view of this, from now on we adopt the view (essentially what in \cite{fulling1987temperature} is termed the `liberal' approach to dealing with the infra-red pathology) that our `physical algebra' is this Weyl subalgebra ${\cal A}_{0, \mathbb{M}}$ and `physical states' are to be sought amongst positive linear functionals on ${\cal A}_{0, \mathbb{M}}$.

A price to pay for working in this framework is that the spacetime smeared two-point functions of our thus-defined physical states are only defined as bilinear functionals $C_{00}^\infty(\R^2) \times C_{00}^\infty(\R^2) \to \C$, and therefore do not define true bidistributions on $\R^2$. As a result, what one might mean by a globally (or even locally!) `Hadamard' state becomes problematic. We propose to overcome this by declaring that a state on ${\cal A}_{0, \mathbb{M}}$ be called globally Hadamard if its spacetime smeared two-point function $\Lambda : C_{00}^\infty(\R^2) \times C_{00}^\infty(\R^2) \to \mathbb{C}$ (exists and) admits an extension $\Lambda^\mathrm{ext} : C_0^\infty(\R^2) \times C_0^\infty(\R^2) \to \mathbb{C}$ which is globally Hadamard (on $\R^2$). Note that this extension need not satisfy any positivity property beyond positivity (in the above sense) when restricted to smearings in $C_{00}^\infty(\R^2)$. The (1+1)-dimensional version of the global Hadamard condition for bidistributions was written down in \cite{moretti2003comments} (along with versions appropriate to all other spacetime dimensions). For a massless theory in any globally hyperbolic open subset ${\cal O}$ of (1+1)-dimensional Minkowski space, it simply amounts to the following.

\begin{defn}[\emph{Global Hadamard condition on ${\cal O}$, massless case}]\label{hadamard1+1}
A bidistribution $\Lambda$ on ${\cal O}$ satisfies the global Hadamard condition if there exists a Cauchy surface $\mathscr{C}$ for $({\cal O}, \eta)$, a causal normal neighbourhood ${\cal N} \subseteq {\cal O}$ of $\mathscr{C}$, a `smoothing function' $\chi \in C^\infty({\cal N} \times {\cal N})$, a global temporal function $T$ on ${\cal O}$ increasing towards the future,\footnote{We refer to \cite{kay1991theorems, radzikowski1996microlocal} for precise characterizations of $\mathcal{N}$, $\chi$ and $T$.} and a \emph{smooth} function $H_{\cal N}$ on $\mathcal{N} \times \mathcal{N}$ such that, for all $F, G \in C_0^\infty({\cal N})$,
\begin{equation*}
\Lambda(F,G)  = \lim_{\varepsilon \to 0^+} \int_{{\cal N} \times \cal{N}} \left( - \frac{\chi(x,y)}{4 \pi} \ln\frac{- s_{\varepsilon, T}(x,y)}{\lambda^2} + H_{\cal N}(x,y) \right) F(x) G(y) \, \d \mu_\eta(x) \, \d \mu_\eta (y).
\end{equation*}
In the above, for all $\varepsilon >0$,
\begin{equation*}
s_{\varepsilon, T}(x,y) := s(x,y) - 2 i \varepsilon(T(x) - T(y)) - \varepsilon^2,
\end{equation*}
with $s(x,y) = (x-y)^2$ and the branch-cut of the logarithm chosen to lie on the negative real axis. Finally, $\lambda$ is a length scale introduced for dimensional reasons, but clearly the property being defined does not depend on it.
\end{defn}

Clearly, the ground state on the physical algebra ${\cal A}_{0, \mathbb{M}}$ is a globally Hadamard state in this sense. To prepare the ground for our discussion in the case of the spacetime $(M, \eta)$ we're interested in, where the Lorentz boosts are the only continuous isometries, we notice that actually more is true about this state on ${\cal A}_{0, \mathbb{M}}$, namely that one can find an extension of its spacetime smeared two-point function which, on its larger domain $C_0^\infty(\R^2) \times C_0^\infty(\R^2)$, is still boost-invariant, a weak bisolution of the wave equation, and satisfies the canonical commutation relations. Indeed, $\Lambda_\mathbb{M}$ defined by
\begin{equation}\label{minktwopointfunction} \Lambda_\mathbb{M}(F,G) = -\frac{1}{4\pi} \lim_{\varepsilon \to 0^+} \int \log\left[\frac{-(x-y)^2 + i \varepsilon (x^0 - y^0)}{\lambda^2}\right] F(x) G(y) \, \d \mu_\eta(x) \, \d \mu_\eta(y) \end{equation}
gives such an extension for any choice of length scale $\lambda$. It can be seen that, indeed, no such extension can satisfy the necessary positivity conditions for all test functions.\footnote{The arguments we made in the main text in favour of taking the `physical algebra' to be ${\cal A}_{0, \mathbb{M}}$ privileged the role of the usual Minkowski ground state (i.e.\ the Poincar\'e invariant vacuum). One might nevertheless still want to explore what could be said about (globally) Hadamard states on the `full' Weyl algebra ${\cal A}_\mathbb{M}$. Within the (technically inequivalent) approach to quantization based on the `full' Borchers-Uhlmann algebra, Schubert  \cite{schubert2013charakterisierung} has recently shown that there are no time-translation invariant Hadamard states; thus it seems reasonable to expect that a similar result will hold within our Weyl-algebra framework. And it seems likely that no boost-invariant Hadamard state exists on ${\cal A}$ either. If so, this would be another reason to take the view that the `physical algebra' is ${\cal A}_0$.}

\subsection{The non-existence theorem}\label{theorem}

Having carefully set up the classical theory for massless fields on our one-mirror spacetime $(M, \eta)$, and having clarified our perspective on both the appropriate strategy to deal with spacetimes with boundaries (in Section \ref{timelikeboundaries}), and the status of the infra-red pathology for massless fields on full (1+1)-dimensional Minkowski spacetime, we now turn to 
the theory obtained by quantizing the classical system analyzed in Section \ref{1+1classicaltheory}.   For this theory, we are now in a position to rigorously define an appropriate class of quantum states for which we are able to prove a non-existence theorem (Theorem \ref{mainthm}) which, arguably (see however Footnote \ref{however}) is analogous to the non-existence result which we conjecture for Kruskal. Namely,  the class of  `strongly boost-invariant globally-Hadamard' states of Definition \ref{stronglyhadamard} below.  Indeed, we will show that once our definitions are in place, the strategy outlined in Section \ref{basicidea} becomes a rigorous proof of this theorem once Equation (\ref{horizontwopoint}) is established. 

In the previous section we have argued that the `physical algebra' for massless fields on full (1+1)-dimensional Minkowski space is the Weyl subalgebra ${\cal A}_{0, \mathbb{M}}$ of ${\cal A}_\mathbb{M}$ generated by Minkowski-space solutions in $S_{0, \mathbb{M}}$. Similarly, here we regard the `physical algebra' for massless fields on $(M, \eta)$, satisfying Dirichlet boundary conditions on $\partial M$, to be \emph{not} ${\cal A} := \mathscr{W}(S, \sigma)$, but rather its subalgebra ${\cal A}_0 := \mathscr{W}(S_0, \sigma)$ generated by solutions in $S_0$ (cf.\ Section \ref{1+1classicaltheory} for definitions of the symplectic vector spaces $(S, \sigma)$ and $(S_0, \sigma)$). 

\begin{defn}\label{stronglyhadamard}
A \emph{strongly boost-invariant globally-Hadamard} state on ${\cal A}_0$ is a boost-invariant state on ${\cal A}_0$ whose spacetime smeared two-point function $\Lambda$ exists and admits an extension $\Lambda^\mathrm{ext}$ to a bidistribution on $\Int M$ which is (i) globally Hadamard in the sense of Definitions \ref{hadamardboundaries} and \ref{hadamard1+1}, (ii) boost-invariant and (iii) a weak bisolution of the wave equation.\footnote{It is not assumed that this extension still satisfies the canonical commutation relations for all test functions, i.e.\ that $\Lambda^\mathrm{ext}(F,G) - \Lambda^\mathrm{ext}(G,F) = iE(F,G)$ for all $F,G \in C_0^\infty(\Int M)$ (of course these are satisfied for pairs of test functions belonging to the subspace $C_{00}^\infty(\Int M)$).}
\end{defn}

We remark that one could contemplate replacing the word `global' in this definition by the word `local' and thereby define a notion of `strongly boost-invariant \emph{locally}-Hadamard'.   However, in view of the fact that no assumption of positivity is made for the extension of the spacetime smeared two-point function, the local-to-global theorem of Radzikowski \cite{radzikowski1996localtoglobal, radzikowski1992hadamard} will presumably not be available to conclude that the two notions are equivalent and it is not clear whether we would be able to prove that there is no state satisfying the local version of the definition.

We point out that, with $\Int M$ replaced by $\R^2$ and ${\cal A}_0$ replaced by ${\cal A}_{0, \mathbb{M}}$ in the above definition, there obviously \emph{is} a strongly boost-invariant globally-Hadamard state on ${\cal A}_{0, \mathbb{M}}$ -- namely $\omega_\mathbb{M}$ as we in fact pointed out at the end of the previous section.  And most importantly, with the obvious replacements, in the case with two mirrors (see the Introduction) there \emph{is} a strongly boost-invariant globally-Hadamard state, namely the `Hartle-Hawking-Israel-like state' constructed in \cite{kay2015instability} with two-point function given by Equation (5) in that paper -- as one may readily verify by inspection of that formula.

\begin{figure}
   \centering
    \includegraphics[scale=0.5]{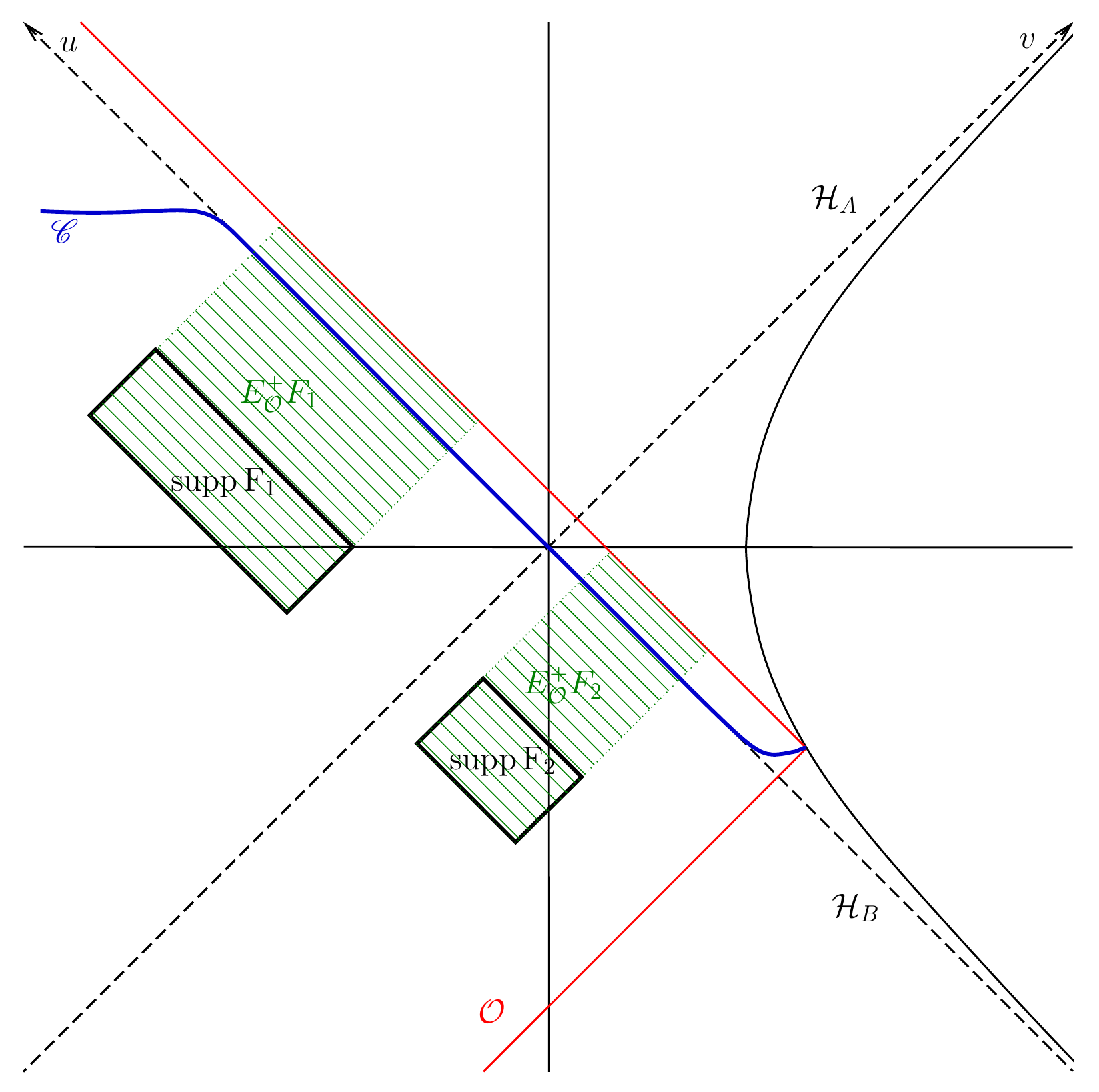}
   \caption{Illustration of the set ${\cal O}$, Cauchy surface $\mathscr{C}$ and test functions $F_1, F_2$ constructed in Lemma \ref{usefullemma} and used in the proof of Theorem \ref{mainthm}. \label{prooffigure}}
 \end{figure}

In contrast, however\ldots

\begin{thm}\label{mainthm}
There is no strongly boost-invariant globally-Hadamard state on ${\cal A}_0$.\footnote{\label{however}This theorem of course implies that there are no boost-invariant states on the `full' Weyl algebra ${\cal A}$ with globally Hadamard spacetime smeared two-point function (in the sense of Definitions \ref{hadamardboundaries} and \ref{hadamard1+1}), since the restriction to ${\cal A}_0$ of any such state would obviously be a strongly boost-invariant globally-Hadamard state on ${\cal A}_0$. However, it does \emph{not} imply that there is no state on ${\cal A}_0$ which is boost-invariant and whose spacetime smeared two-point (exists and) admits an extension to a globally Hadamard bidistribution on $\Int M$, i.e.\ one satisfying (i) but not (ii) and/or (iii) in Definition \ref{stronglyhadamard}.}
\end{thm}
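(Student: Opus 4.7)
The plan is to make the heuristic argument of Section \ref{basicidea} rigorous in the present setting. Suppose for contradiction that $\omega$ is a strongly boost-invariant globally-Hadamard state on $\mathcal{A}_0$. As pointed out in Section \ref{basicidea} (and established in Appendix \ref{appA}), we may replace $\omega$ by its liberation, which shares the same two-point function and is therefore also strongly boost-invariant globally-Hadamard; so we may assume $\omega$ is quasi-free and extract its one-particle structure $(K, \mathscr{H})$ satisfying (\ref{sympK}). Write $\mathscr{H}_A, \mathscr{H}_B$ for the closures in $\mathscr{H}$ of $KS_A$ and $KS_B$. The technical heart of the argument is to establish:
\begin{itemize}
\item[(A)] $\mathscr{H}_A$ and $\mathscr{H}_B$ are \emph{complex}-linear subspaces of $\mathscr{H}$;
\item[(B)] $KS_A^R + iKS_A^R$ is dense in $\mathscr{H}_A$, and $KS_B^R + iKS_B^R$ is dense in $\mathscr{H}_B$.
\end{itemize}

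Granted (A) and (B), the contradiction is immediate. By the classical observation in Section \ref{1+1classicaltheory} that $S_B^L$ is symplectically orthogonal to $S_A$, together with the non-degeneracy of $\sigma$ on $S_B^L$ (Proposition \ref{symplectomorphism}), any nonzero $\phi \in S_B^L$ satisfies $\sigma(\phi, S_A) = \{0\}$ while $\sigma(\phi, \phi_B^L) \neq 0$ for some $\phi_B^L \in S_B^L$. The second inequality combined with (\ref{sympK}) yields $K\phi \not\perp \mathscr{H}_B$. The first equality, combined with (\ref{sympK}), (A), and the general Hilbert-space remark in Section \ref{basicidea} about complex-linear closures, gives $K\phi \perp \mathscr{H}_A$. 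But (B) together with the other classical fact noted in Section \ref{1+1classicaltheory}, namely $S_A^R = S_B^R$, forces $\mathscr{H}_A = \mathscr{H}_B$, contradicting $K\phi \not\perp \mathscr{H}_B$.

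To establish (A) and (B) the strategy is to show that the restriction of $\omega$ to the Weyl subalgebra $\mathscr{W}(S_B, \sigma) \subseteq \mathcal{A}_0$ is identified, via the symplectic isomorphism $T_B$ of Proposition \ref{symplectomorphism}, with the restriction of the Minkowski vacuum $\omega_\mathbb{M}$ to the Weyl subalgebra of $\mathcal{A}_{0,\mathbb{M}}$ generated by right-movers (and analogously for $A$ with left-movers). Once this identification is in place, (A) and (B) follow from the corresponding well-known facts for $\omega_\mathbb{M}$: (A) from the purity of $\omega_\mathbb{M}$ restricted to the right-mover subalgebra (cf.\ Proposition \ref{purity}), and (B) from the Reeh-Schlieder property for the Minkowski vacuum with respect to fields localised on a half null-line (cf.\ Proposition \ref{reehschlieder}). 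To effect the identification, pick a causally convex, globally hyperbolic open set $\mathcal{O} \subset \Int M$ disjoint from $\partial M$; by Definitions \ref{hadamardboundaries} and \ref{hadamard1+1}, $\Lambda^{\mathrm{ext}}\restriction_\mathcal{O}$ agrees with $\Lambda_\mathbb{M}\restriction_\mathcal{O}$ up to a smooth remainder $H_\mathcal{O}$. Using the weak bisolution property of $\Lambda^{\mathrm{ext}}$ together with boost invariance, this local equality should be transported to the whole $B$-horizon, and (in an appropriate limiting sense, via the explicit construction of $E^\pm$ in Theorem \ref{greensoperators}) should yield $\lambda(\phi_B^1, \phi_B^2) = \lambda_\mathbb{M}(T_B\phi_B^1, T_B\phi_B^2)$ for all $\phi_B^1, \phi_B^2 \in S_B$ -- equivalently, the explicit form (\ref{horizontwopoint}) stripped of its $s$-dependence.

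The main obstacle I foresee is precisely this last step: the strongly boost-invariant globally-Hadamard condition pins down only the short-distance singularity of $\Lambda^{\mathrm{ext}}$ on $\mathcal{O}$ (up to the smooth remainder $H_\mathcal{O}$), and requires boost invariance only of the whole extension, not of its restriction to any particular null hypersurface. The key observation is that solutions in $S_B$ are rigidly determined by their restriction to $\mathcal{H}_B$, and that boosts act on $\mathcal{H}_B$ by affine rescalings $u \mapsto e^{-\tau} u$; combined with the weak bisolution property, this should force $H_\mathcal{O}$ to drop out of the symplectically smeared two-point function when both slots are populated by elements of $S_B$. Additional care will be needed to cope with the infra-red pathology: smearings must be restricted to right-movers arising from test functions in $C_{00}^\infty(\Int M)$ (via Theorem \ref{greensoperators}) in order for intermediate expressions involving $\Lambda$ rather than $\Lambda^\mathrm{ext}$ to be finite, and the limiting arguments used to reach the horizon must respect the class on which positivity is guaranteed.
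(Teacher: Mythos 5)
Your overall strategy coincides with the paper's: reduce to quasi-free states, show that the symplectically smeared two-point function restricted to $S_B$ (resp.\ $S_A$) coincides with that of the Minkowski vacuum transported by $T_B$ (resp.\ $T_A$), import properties (A) and (B) from the known facts (A$_{\mathbb{M}}$), (B$_{\mathbb{M}}$) about $\omega_\mathbb{M}$, and then run the Section \ref{basicidea} contradiction using $S_A^R=S_B^R$ and the symplectic orthogonality of $S_B^L$ to $S_A$. That final assembly in your proposal is correct and is exactly how the paper concludes.

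The genuine gap is the step you yourself flag as the ``main obstacle'': you never actually prove the identification $\lambda_2(\phi_1,\phi_2)=\lambda_\mathbb{M}(T_B\phi_1,T_B\phi_2)$, and your ``key observation'' (rigidity of $S_B$-solutions plus the affine action of boosts on $\mathcal{H}_B$) is not enough to make the smooth remainder drop out. What the paper does is the following. First, Lemma \ref{usefullemma} constructs a geodesically convex, causally convex $\mathcal{O}$ which \emph{contains the bifurcation point}, a partially null Cauchy surface $\mathscr{C}$ whose intersection with $\mathcal{H}_B$ carries all the Cauchy data of $\phi_1,\phi_2$, and test functions $F_i\in C_{00}^\infty(\mathcal{O})$ supported in $I^-_\mathcal{O}(\mathcal{H}_B\cap\mathcal{O})\cap I^-_\mathcal{O}(\mathscr{C})$ with $EF_i=\phi_i$ and $E_\mathbb{M}F_i=T_B\phi_i$. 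Second, the propagation of the global Hadamard form allows one to take $\mathcal{N}=\mathcal{O}$ and $\chi\equiv 1$, so that $\Lambda^{\mathrm{ext}}-\Lambda_\mathbb{M}$ on $\mathcal{O}$ is given by a smooth kernel $H_\mathcal{O}$ which, by assumptions (ii)--(iii) of Definition \ref{stronglyhadamard}, is a smooth bisolution annihilated by the boost generator $X$. Third -- and this is a point your sketch misses -- constancy along $X$-orbits alone would only force $H_\mathcal{O}$ restricted to $(\mathcal{H}_B\times\mathcal{H}_B)\cap(\mathcal{O}\times\mathcal{O})$ to be a function of $u_1/u_2$; it is smoothness at the point $(0,0;0,0)$, which lies in $\mathcal{O}\times\mathcal{O}$ precisely because of the construction in Lemma \ref{usefullemma}, that forces $H_\mathcal{O}$ to be \emph{constant} on that set. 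Fourth, constancy by itself does not make $H_\mathcal{O}$ ``drop out'': one needs the Green's identity manipulation of Equation (\ref{greenidentity}), exploiting the support of the $F_i$ and the fact that the data of $\phi_i$ on $\mathscr{C}$ sit on $\mathscr{C}\cap\mathcal{H}_B$, to convert $\int H_\mathcal{O}\,F_1F_2$ into an integral of the tangential double derivative $\nabla^{1a}\nabla^{2b}H_\mathcal{O}\,n_an_b$ against $\phi_1\phi_2$ over $(\mathcal{H}_B\times\mathcal{H}_B)\cap(\mathcal{O}\times\mathcal{O})$, which then vanishes by constancy. Without these two steps -- neither of which follows from the boost-covariance of horizon data or from the bisolution property alone -- the identification with $\omega_\mathbb{M}$ on $S_B$, and hence properties (A) and (B), remain unproven, so your argument as it stands does not close.
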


To prove this, we first record and prove a preliminary result in the form of a lemma:

\begin{lem}\label{usefullemma}
For any two solutions $\phi_1, \phi_2$ in $S_B$ one can find a causally convex and globally hyperbolic open subregion $\mathcal{O}$ of $\Int M$, a pair of test functions $F_1, F_2 \in C_{00}^\infty(\mathcal{O})$ and a (partially null) Cauchy surface $\mathscr{C}$ for ${\cal O}$ containing a portion of ${\cal H}_B$, such that
\begin{itemize}
\item the Cauchy data for $\phi_1$ and $\phi_2$ on $\mathscr{C}$ vanish outside $\mathscr{C} \cap {\cal H}_B$; 
\item $F_1$ and $F_2$ have support in $I^-_{\cal O}({\cal H}_B \cap {\cal O}) \cap I^-_{\cal O}(\mathscr{C})$, $E F_1 = \phi_1$ and $E F_2 = \phi_2$.  (Here $I^\pm_{\cal O}(S)$ denotes the chronological future/past of a subset $S$ in ${\cal O}$ \cite{hawking1973large});
\item $E_\mathbb{M} F_1$ is the full Minkowski space solution which is purely right-moving and with restriction to ${\cal H}_B$ equal to $\phi_1\restriction_{{\cal H}_B}$, i.e.\ $E_\mathbb{M} F_1 = T_B \phi_1$ where $T_B$ is the linear symplectomorphism of Proposition \ref{symplectomorphism} (and a similar statement with $F_1 \leftrightarrow F_2$, and $\phi_1 \leftrightarrow \phi_2$).

\end{itemize}
Moreover, $\mathcal{O}$ can be taken to be geodesically convex, and therefore a causal normal neighbourhood of any of its Cauchy surfaces.

\smallskip

\noindent
(All the above of course holds equally with  ${\cal H}_B \leftrightarrow {\cal H}_A$ and $T_B \leftrightarrow T_A$.) 

\end{lem}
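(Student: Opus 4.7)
My approach is to construct $F_i$, $\mathcal{O}$ and $\mathscr{C}$ by hand and verify every bullet point directly. The starting observation is that each $\phi_i \in S_B$ can be written as $\phi_i(u,v) = f_i(u) + g_i(v)$, with $f_i \in C_0^\infty(\R)$ (say with common support in $[-U_0, U_0]$) and $g_i(v) = -\vartheta(v) f_i(-1/v)$, so $g_i$ vanishes identically on $(-\infty, 1/U_0)$. I would then mimic the recipe that concludes the proof of Theorem \ref{greensoperators}: fix once and for all $\chi \in C_0^\infty(\R)$ supported in some $[-V_0, -\varepsilon_0] \subset (-\infty, 0)$ with $\int_\R \chi = 1$, and set $\tilde F_i(u,v) := -4 f_i'(u) \chi(v)$. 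Elementary computations then give $\int F_i \, \d \mu_\eta = 0$ by Fubini, $E_\mathbb{M} F_i = f_i(u) = T_B \phi_i$ via Equation (\ref{rightleftmovingparts}), and $[\widetilde{E F_i}](u,v) = f_i(u) - \vartheta(v) f_i(-1/v) = \phi_i$ via Equation (\ref{causalpropagatorimages}).

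For the region I would pick $u_a < -U_0$, $u_b > U_0$ and positive numbers $\Delta_L, \Delta_R$ with $\Delta_R > V_0$ and with $\Delta_L$ small enough that both $\Delta_L < 1/U_0$ and $\Delta_L(\Delta_L - u_a) < 1$; the open rectangle $\mathcal{O} := (u_a - \Delta_L, u_b + \Delta_R) \times (-\Delta_R, \Delta_L)$ in null coordinates is then contained in $\Int M$ (by the second condition), contains $\supp F_i$, and is such that $g_i \equiv 0$ on $\mathcal{O}$, whence $\phi_i|_\mathcal{O} = f_i(u)$ depends only on $u$. The rectangle is geodesically convex because Minkowski geodesics are affine in $(u,v)$. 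For the Cauchy surface I would take $\mathscr{C} = \mathscr{C}_L \cup \mathscr{C}_0 \cup \mathscr{C}_R$, where $\mathscr{C}_0 = \{(u,0) : u \in [u_a, u_b]\}$ is the null segment of $\mathcal{H}_B$ and $\mathscr{C}_L = \{(u, u_a - u) : u \in [u_a - \Delta_L, u_a]\}$, $\mathscr{C}_R = \{(u, u_b - u) : u \in [u_b, u_b + \Delta_R]\}$ are spacelike constant-$t$ segments of slope $-1$ in the $(u,v)$-plane, joining $\mathscr{C}_0$ to the upper-left and lower-right corners of $\mathcal{O}$.

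Achronality of $\mathscr{C}$ follows from a straightforward case-check on signs of coordinate differences between pairs of points chosen from $\mathscr{C}_L$, $\mathscr{C}_0$, $\mathscr{C}_R$. The geometric lynchpin is that the endpoints of $\mathscr{C}$ sit precisely at two opposite corners of the rectangle $\mathcal{O}$, so the future null rays issuing from them trace portions of $\partial \mathcal{O}$; consequently $D^+(\mathscr{C}) \cap \mathcal{O}$ fills everything in $\mathcal{O}$ above $\mathscr{C}$, and dually for $D^-$, making $\mathscr{C}$ a Cauchy surface for $\mathcal{O}$. Since $\phi_i|_\mathcal{O} = f_i(u)$ and $f_i$ vanishes (together with its derivatives) for $|u| > U_0$, both $\phi_i$ and $\d\phi_i$ vanish identically on $\mathscr{C}_L \cup \mathscr{C}_R$, which yields the required vanishing of Cauchy data outside $\mathscr{C} \cap \mathcal{H}_B$; and $\supp F_i$ lies in $\{v < 0\} \cap \{u \leq U_0 < u_b\}$, which sits in both $I^-_\mathcal{O}(\mathcal{H}_B \cap \mathcal{O})$ and $I^-_\mathcal{O}(\mathscr{C})$. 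The $\mathcal{H}_A$-version is obtained by interchanging the roles of $u$ and $v$ throughout. The main obstacle is the parameter juggling of the second paragraph: the same small number $\Delta_L$ has to be less than $1/U_0$ (so $g_i$ vanishes on $\mathcal{O}$), small enough to keep the upper-left corner of $\mathcal{O}$ away from the mirror $uv = -1$, and chosen in conjunction with $\Delta_R$ so that the endpoints of $\mathscr{C}$ land exactly at two corners of $\mathcal{O}$; existence is not really at stake since all conditions are mutually compatible for sufficiently small $\Delta_L$ and large enough $\Delta_R$, but writing everything down compatibly is where the bulk of the bookkeeping lies.
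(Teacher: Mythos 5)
Your construction is correct and follows essentially the same route as the paper's own proof, which uses the identical recipe $\tilde F_i(u,v) = -4 f_i'(u)\chi(v)$ (borrowed from the proof of Theorem \ref{greensoperators}) and takes $\mathcal{O}$ to be the null-coordinate region $\{u > u_\mathrm{m},\ v < -1/u_\mathrm{m}\}$ -- chosen, exactly as your condition $\Delta_L < 1/U_0$ does, so that the reflected left-mover vanishes on $\mathcal{O}$ -- with the partially null Cauchy surface left to a figure; your bounded rectangle with explicitly written $\mathscr{C}$ is just a more detailed variant of the same idea. One cosmetic fix: the endpoints $(u_a-\Delta_L,\Delta_L)$ and $(u_b+\Delta_R,-\Delta_R)$ lie on the boundary of the open rectangle rather than in $\mathcal{O}$, so the parameter intervals defining $\mathscr{C}_L$ and $\mathscr{C}_R$ should be taken half-open.
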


\begin{proof}
Since $\phi_i \in S_B$ ($i=1,2$), there exists a unique function $f_i \in C_0^\infty(\R)$ such that $\tilde{\phi_i}(u,v) = f_i(u) - \vartheta(v)f_i(-1/v) \ \forall \ u,v$. Pick $u_\mathrm{m}<0$ with $\supp f_1 \cup \supp f_2 \subset (u_\mathrm{m}, +\infty)$. Then $\mathcal{O} := \left\{ (t, x) \relmiddle| u(t,x) > u_\mathrm{m} , \ v(t,x) < - 1/u_\mathrm{m} \right\}$ is clearly a causally and geodesically convex globally hyperbolic open subregion of $\Int M$, and for $|u_\mathrm{m}|$ sufficiently large it is clear that a Cauchy surface $\mathscr{C}$ for ${\cal O}$ can be found satisfying the requirements in the statement of the Lemma, see Figure \ref{prooffigure}. 

In order to prove the statements about $F_1, F_2$ one proceeds just as in the proof of Theorem \ref{greensoperators} (cf.\ in particular Equations (\ref{causalpropagatorimages})--(\ref{rightleftmovingparts}) and the discussion following these), namely picking any $\chi \in C_0^\infty(\R)$ such that $\supp \chi \subset (-\infty, 0)$ and $\int_\mathbb{\R} \chi(x) \, \d x = 1$, and then defining $\tilde{F_i}(u,v) = -4 f'(u)\chi(v)$.\footnote{Note that, defining $\psi(s) = \int_{- \infty}^{s} \chi(s') \, \d s'$ and $\xi_i(t,x) = - \psi(v(t,x))\phi_i(t,x)$, this amounts to setting $F_i = \square \xi_i$.}
\end{proof}

\begin{proof}[Proof of Theorem \ref{mainthm}] As already outlined in the Introduction and in Section \ref{basicidea}, one need only prove that there are no \emph{quasi-free} strongly boost-invariant globally-Hadamard states. Thus, suppose such a quasi-free state $\omega$ exists with spacetime smeared two-point function $\Lambda : C_{00}^\infty(\Int M) \times C_{00}^\infty(\Int M) \to \mathbb{C}$, and let $\Lambda^\mathrm{ext}$ be an extension of $\Lambda$ satisfying (i), (ii) and (iii) in Definition \ref{stronglyhadamard}. Let $\phi_1, \phi_2 \in S_B$ and pick a causally and geodesically convex, open, globally hyperbolic subset ${\cal O}$ of $\Int M$, a Cauchy surface $\mathscr{C}$ for ${\cal O}$ and a pair of test functions $F_1, F_2 \in C_{00}^\infty({\cal O})$, as in the statement and proof of Lemma \ref{usefullemma}. Since $\Lambda^\mathrm{ext}$ is a globally Hadamard bidistribution on ${\cal O}$, results on the \emph{propagation of the global Hadamard form} contained in \cite{fulling1978singularity, kay1991theorems} guarantee that we are free to choose ${\cal N}={\cal O}$, $\chi \equiv 1$ and $T(t,x)=t$ as the causal normal neighbourhood, `smoothing function' and `global time function' in Definition \ref{hadamard1+1}. In terms of these, the global Hadamard condition for $\Lambda^\mathrm{ext}$ simply reduces to the existence of a function $H_{\cal O} \in C^\infty({\cal O} \times {\cal O})$ such that
\begin{equation}
\Lambda^\mathrm{ext}(F,G) - \Lambda_\mathbb{M}(F,G) = \int\limits_{{\cal O} \times {\cal O}} H_{\cal O}(x, y) F(x) G(x) \, \d \mu_\eta(x)  \, \d \mu_\eta(y)
\end{equation}
for all $F, G \in C_0^\infty({\cal O})$, where $\Lambda_\mathbb{M}$ is as defined in Equation (\ref{minktwopointfunction}). We remark that, since $\Lambda^\mathrm{ext}$ and $\Lambda_\mathbb{M}$ are both weak bisolutions of the wave equation, then $H_{\cal O}$ is a (smooth) bisolution of the wave equation. Also, since both $\Lambda^\mathrm{ext}\restriction_{C_0^\infty({\cal O}) \times C_0^\infty({\cal O})}$ (by assumption) and $\Lambda_\mathbb{M}\restriction_{C_0^\infty({\cal O}) \times C_0^\infty({\cal O})}$ are invariant under the (local) one-parameter group of Lorentz boosts applied to the two copies of $C_0^\infty({\cal O})$ simultaneously, it follows that $H_\mathcal{O}$ is annihilated by the formal adjoint $X^*$ of the infinitesimal generator $X = X_1 \oplus X_2 = (x_1 \partial/\partial t_1 + t_1 \partial/\partial x_1) \oplus (x_2 \partial/\partial t_2 + t_2 \partial/\partial x_2)$ (where, for $i = 1, 2$, $t_i$ and $x_i$ are inertial coordinates on the $i$-th copy of $\mathcal{O}$). Since $X^* = -X$, it follows that $H_\mathcal{O}$ is constant on the integral curves of $X$ on $\mathcal{O} \times \mathcal{O}$. Together with global smoothness (and in particular smoothness at the point $(0,0;0,0)$), this clearly implies that $H_\mathcal{O}$ is constant on the portion of ${\cal H}_B \times {\cal H}_B$ contained within $\mathcal{O} \times \mathcal{O}$.

Now recall that the test functions $F_1$ and $F_2$ were chosen to both have support in $I^-_{\cal O}({\cal H}_B \cap {\cal O}) \cap I^-_{\cal O} (\mathscr{C})$ (see again Figure \ref{prooffigure}). Let $\alpha \in C^\infty({\cal O})$ and $F$ be any test function supported in $I^-_{\cal O} (\mathscr{C})$. Then, proceeding similarly to Equations (B.12)--(B.13) in Appendix B of \cite{kay1991theorems}, and noting that $\square = \nabla^a \nabla_a$,
\allowdisplaybreaks[1]
\begin{align}
\int\limits_{\cal O} \alpha F \, \d \mu_\eta &= \int\limits_{I^-_{\cal O} (\mathscr{C})} \alpha F \, \d \mu_\eta \nonumber \\
&= \int\limits_{I^-_{\cal O} (\mathscr{C})} \alpha \square E^+_{\cal O} F \, \d \mu_\eta \nonumber \\
&= \int\limits_{I^-_{\cal O} (\mathscr{C})} [\square \alpha] E^+_{\cal O} F \, \d \mu_\eta + \int\limits_{I^-_{\cal O} (\mathscr{C})}  \nabla^a[\alpha \overleftrightarrow{\nabla}_a E^+_{\cal O} F] \, \d \mu_\eta \nonumber \\ 
&= \int\limits_{I^-_{\cal O} (\mathscr{C})} [\square \alpha] E^+_{\cal O} F \, \d \mu_\eta + \int\limits_{\mathscr{C}}  n_a [\alpha \overleftrightarrow{\nabla}^a E^+_{\cal O} F] \, \d \mu_{\mathscr{C}} \nonumber \\ 
&=
\int\limits_{I^-_{\cal O} (\mathscr{C})} [\square \alpha] E^+ F \, \d \mu_\eta - \int\limits_{\mathscr{C}}  n_a [\alpha \overleftrightarrow{\nabla}^a E F] \, \d \mu_{\mathscr{C}}\label{greenidentity},
\end{align}
where $E^\pm_{\cal O}$ denotes the retarded/advanced Green operator for $\square$ on ${\cal O}$, in the fourth step Gauss' law has been applied, and in the final step we used the fact that $E^-F$ vanishes on a neighbourhood of $\mathscr{C}$, together with Equation (\ref{Epmrestriction}).

Recalling the fact that $H_{\cal O} \in C^\infty({\cal O} \times {\cal O})$ is a bisolution of the wave equation, and applying Equation (\ref{greenidentity}) twice, with first $\alpha$ interpreted as $\int_{\cal O} H_{\cal O}(\cdot, x_2) F_2(x_2) \, \d \mu_\eta(x_2)$ and $F$ interpreted as $F_1$, and then with $\alpha$ interpreted as $H_{\cal O}(x_1, \cdot)$ for arbitrary fixed $x_1 \in {\cal O}$ and $F$ interpreted as $F_2$, yields
\begin{align*}
\int\limits_{\mathcal{O}\times\mathcal{O}} H_\mathcal{O}(&x_1, x_2) F_1(x_1) F_2(x_2) \, \d \mu_\eta(x_1) \, \d \mu_\eta(x_2) \\
&= \int\limits_{\mathscr{C} \times \mathscr{C}} H_\mathcal{O}(x_1,x_2) \overleftrightarrow{\nabla}^{1a} \overleftrightarrow{\nabla}^{2b} \phi_1(x_1) \phi_2(x_2) \, n_a(x_1) n_b(x_2) \, \d \mu_{\mathscr{C}}(x_1)  \d \mu_{\mathscr{C}}(x_2) \\
&= 4 \int \! \! \! \! \! \! \! \! \! \! \! \! \! \! \! \! \! \! \! \! \! \! \! \! \! \! \mathop{\vphantom{\int}}_{({\cal H}_B \times {\cal H}_B) \cap (\mathcal{O} \times \mathcal{O})} \! \! \! \! \! \! \! \! \! \! [\nabla^{1a} \nabla^{2b} H_\mathcal{O}(x_1,x_2)] \phi_1(x_1) \phi_2(x_2) \, n_a(x_1) n_b(x_2) \, \d \mu_{{\cal H}_B}(x_1)  \d \mu_{{\cal H}_B}(x_2) \\
&= 0. 
\end{align*}
In the second step, we have used the fact that the Cauchy data for $\phi_1$ and $\phi_2$ are supported in $\mathscr{C} \cap {\cal H}_B$ and performed two integrations by parts. The final equality is a consequence of the constancy of $H_\mathcal{O}$ on $({\cal H}_B \times {\cal H}_B) \cap ({\cal O} \times {\cal O})$. This proves that $\Lambda(F_1, F_2) = \Lambda_\mathbb{M}(F_1, F_2)$. In terms of the symplectically smeared two-point function $\lambda_2$ of our state $\omega$, this means that
$$ \lambda_2(\phi_1, \phi_2) = \lambda_{\mathbb{M}}(E_\mathbb{M}F_1, E_\mathbb{M}F_2), $$
where we recall that $\lambda_{\mathbb{M}}$ denotes the symplectically smeared two-point function of the (1+1)-dimensional Minkowski vacuum state $\omega_\mathbb{M}$ on ${\cal A}_{0, \mathbb{M}}$ discussed in Section \ref{infra-red}. But since $F_1$ and $F_2$ were chosen so that $E_\mathbb{M}F_1 = T_B \phi_1$ and $E_\mathbb{M}F_2 = T_B \phi_2$, and since $\phi_1, \phi_2 \in S_B$ are arbitrary, we conclude that in fact
\begin{equation}\label{lambda2lambdam}
\lambda_2(\phi_1, \phi_2) = \lambda_{\mathbb{M}}(T_B \phi_1, T_B \phi_2)
\end{equation}
for \emph{all} $\phi_1, \phi_2 \in S_B$. Next, let $(K, \mathscr{H})$ be the one-particle structure associated to $\omega$, and let $(K_\mathbb{M}, \mathscr{H}_\mathbb{M})$ be the one-particle structure associated to $\omega_\mathbb{M}$ (see Proposition \ref{correspondingoneparticlestructures} in Appendix \ref{appA}), then Equation (\ref{lambda2lambdam}) implies that
\begin{equation}\label{equivalence}
\innerprod{K \phi_1}{K \phi_2}_\mathscr{H} = \innerprod{K_\mathbb{M} T_B \phi_1}{K_\mathbb{M} T_B \phi_2}_{\mathscr{H}_\mathbb{M}}
\end{equation}
(and similarly for $\phi_1, \phi_2 \in S_A$ and $T_A$). Now it is known (cf.\ pages 89--90 in \cite{kay1991theorems}) that

\medskip
\noindent
(A$_\mathbb{M}$) $K_\mathbb{M} S_{\mathrm{r\text{-}mov}}$ and $K_\mathbb{M} S_{\mathrm{l\text{-}mov}}$ are dense in complex-linear subspaces $\mathscr{H}_{\mathrm{r\text{-}mov}}$ and $\mathscr{H}_{\mathrm{l\text{-}mov}}$ of $\mathscr{H}$ (respectively);

\medskip
\noindent (B$_\mathbb{M}$) $K_\mathbb{M} S_{\mathrm{r\text{-}mov}}^R + i K_\mathbb{M} S_{\mathrm{r\text{-}mov}}^R$ is dense in $\mathscr{H}_{\mathrm{r\text{-}mov}}$ and $K_\mathbb{M} S_{\mathrm{l\text{-}mov}}^R + i K_\mathbb{M} S_{\mathrm{l\text{-}mov}}^R$ is dense in $\mathscr{H}_{\mathrm{l\text{-}mov}}$.

\medskip

\noindent But Equation (\ref{equivalence}) immediately implies that the obvious corresponding properties, i.e.\ (A) and (B) of Section \ref{basicidea}, are inherited by $(K, \mathscr{H})$. The proof is then completed exactly as in Section \ref{basicidea}.
\end{proof}

We remark that the connection between the above proof and the heuristic discussion in Section \ref{basicidea} is made clearer if we note that, for any pair $\phi_1, \phi_2 \in S_{\mathrm{r\text{-}mov}}$,
\begin{equation*}
\innerprod{K_\mathbb{M}\phi_1}{K_\mathbb{M}\phi_2}_{\mathscr{H}_\mathbb{M}} = - {1\over \pi} \lim_{\varepsilon\rightarrow 0^+} \int 
{f_1(u_1)f_2(u_2)\over (u_1-u_2-i\varepsilon)^2}\, \d u_1 \d u_2,
\end{equation*}
where $\phi_1(t,x) = f_1(t-x)$ and $\phi_2(t,x) = f_2(t-x)$. Equivalently, $f_1$, $f_2$ can be thought of geometrically as the restrictions of $\phi_1$ and $\phi_2$ (respectively) to ${\cal H}_B$.

\section{Further discussion of the physical relevance of our result}

Our result -- that there is no strongly boost-invariant globally-Hadamard state for the massless wave equation to the left of an eternally uniformly accelerating mirror (with vanishing boundary conditions on the mirror) in 1+1 Minkowski space -- lends support to our conjecture that there is no isometry-invariant Hadamard state for the Klein-Gordon equation defined on the region of Kruskal to the left of a surface of constant Schwarzschild radius in the right wedge (with vanishing boundary conditions on that box).  This suggests that there may be fundamental difficulties in having a semi-classical description of a black hole confined to a spherical static box -- a scenario which is of basic importance in discussions of black hole thermodynamics.  As we discussed in more detail in the Introduction, in an earlier paper \cite{kay2015instability} one of us pointed out a number of senses in which the right wedge horizons become (both classically and quantum mechanically) unstable for the same 1+1 model system with an accelerating mirror and argued for a similar problems for the same Klein-Gordon Kruskal system confined to a box.   The tentative conclusion there was that any semi-classical description in the right wedge must break down at the right-wedge horizons  -- and it was suggested that it makes no sense to consider the spacetime as continuing to have any existence beyond these horizons.

The results of the present paper would seem to lend further support to that conclusion.

One possible way around such a conclusion might be if there were one or more \emph{non-stationary} Hadamard states on the region of Kruskal to the left of the box which are nevertheless stationary when restricted to the region of the right wedge to the left of the box.   If this could be shown to also be impossible it would strengthen the above conclusion further, whereas if it would turn out to be possible it would perhaps undermine the above conclusion.   But it would still be strange if an equilibrium state of a black hole were to be modelled mathematically by a state whose domain of definition includes the left wedge but which is not stationary when restricted to that left wedge.   It would obviously be of interest to try to settle this question -- or the obvious counterpart question for the wave equation in 1+1 Minkowski space to the left of an eternally uniformly accelerating mirror -- but we will not attempt to do so here.

\titleformat{\section}{\Large\bfseries}{\appendixname~\thesection\ --}{0.5em}{}
\begin{appendices}

\section[Appendix A - Weyl quantization of linear systems, quasi-free states and one-particle structures]{Weyl quantization of linear systems, quasi-free states and one-particle structures}\label{appA}

We give here a brief overview of the standard \emph{Weyl-algebra} approach to the quantization of (real, bosonic) linear systems \cite{segal1963mathematical, bratteli1997operator2}. The starting point is the realization that the phase space of the classical theory is a (real) symplectic vector space $(S, \sigma)$. The first step is to construct the \emph{Weyl algebra} \cite{slawny1972factor} \emph{over} $(S, \sigma)$, denoted here by $\mathscr{W}(S, \sigma)$. This is the C$^*$-algebra generated by a unit element $\mathbbm{1}$ and by \emph{Weyl operators} $W(\Phi)$ (for all $\Phi \in S$) satisfying the relations
\begin{equation*}
W(\Phi_1) W( \Phi_2) = e^{-i \sigma(\Phi_1, \Phi_2)/2} W(\Phi_1 + \Phi_2), \qquad
W(\Phi)^* = W(-\Phi),
\end{equation*}
which are to be regarded as exponentiated versions of the standard canonical commutation relations (and in particular imply that each $W(\Phi)$ is unitary and that $W(\boldsymbol{0})=\mathbbm{1}$).

The Weyl algebra construction is functorial in the sense that for any two linear symplectic spaces $(S_1, \sigma_1)$ and $(S_2, \sigma_2)$ and for any linear symplectic map $T : S_1 \to S_2$, one defines in a natural way a *-homomorphism $\alpha : \mathscr{W}(S_1, \sigma_1) \to \mathscr{W}(S_2, \sigma_2)$ between the corresponding Weyl algebras by setting
\begin{equation}\label{inducedhomomorphism} \alpha(W_1(\Phi)) = W_2(T \Phi) \ \forall \ \Phi \in S_1 \end{equation}
(and extending by linearity and continuity). If a one-parameter subgroup $ (\mathscr{T}_\tau)_{\tau \in \R}$ of linear symplectomorphisms of $(S, \sigma)$ is available, then, from the `linear dynamical system' $(S, \sigma, \mathscr{T}_\tau)$, one obtains, via Weyl algebra quantization, the `$C^*$ dynamical system' $({\cal A}, \alpha_\tau)$ where ${\cal A} = \mathscr{W}(S, \sigma)$ and $(\alpha_\tau)_{\tau \in \R}$ is the one-parameter group of *-automorphisms of ${\cal A}$ induced from $(\mathscr{T}_\tau)_{\tau \in \R}$ in the manner described by Equation (\ref{inducedhomomorphism}).

We recall that a \emph{state} on the Weyl algebra ${\cal A}$ is a positive linear functional $\omega$ such that $\omega(\mathbbm{1}) = 1$. It is called \emph{pure} if it cannot be expressed as a convex combination of any other two states, and \emph{mixed} otherwise. Finally, $\omega$ is said to be \emph{stationary} or \emph{invariant} with respect to a one-parameter group $(\alpha_\tau)_{\tau \in \R}$ of *-automorphisms of ${\cal A}$ if, for all $\tau \in \R$, $\omega \circ \alpha_\tau = \omega$.

Correlation functions can be defined for sufficiently regular states; that is, one may define the one- and two-point functions
\begin{align} \lambda_1(\Phi) &= \frac{\d}{\d t} \omega[W(t \Phi)]\Bigg|_{t=0} \\ \lambda_2(\Phi_1, \Phi_2) &= - \frac{\partial^2}{\partial s \partial t} \omega[W(s \Phi_1 + t \Phi_2)]e^{-i s t \sigma(\Phi_1, \Phi_2)/2} \Bigg|_{s, t = 0}, \label{quasifreetwopointfunction} \end{align}
and similarly define higher $n$-point correlation functions $\lambda_n$, if the state is regular enough for the relevant derivatives to exist. Note that all correlation functions are multilinear in their arguments.

Two-point functions play a special role in quantum field theory. For now, note that if a state is $C^2$ (see e.g.\ \cite{kay1993sufficient} for a definition), so that the one- and two-point functions exist, one may verify that $\lambda_2$ automatically satisfies the following properties for all $ \Phi_1, \Phi_2 \in S$:
\begin{enumerate}[(i)] 
\item $\mathrm{Im} [\lambda_2(\Phi_1, \Phi_2)] = \sigma(\Phi_1, \Phi_2)/2$;
\item $\mathrm{Re} \lambda_2 =: \mu$ is a symmetric, real-bilinear form on $S$ satisfying
\begin{equation}\label{positivity}
\mu(\Phi_1, \Phi_1) \geq 0, \qquad \sigma(\Phi_1, \Phi_2)^2 \leq 4 \mu(\Phi_1, \Phi_1) \mu(\Phi_2, \Phi_2).
\end{equation}
\end{enumerate}
Condition (i) encodes the canonical commutation relations, and Condition (ii) results from positivity of the state. 

The set of $\lambda_2 : S \times S \to \C$ satisfying Conditions (i) and (ii) is in one-to-one correspondence with the set of equivalence classes of \emph{one-particle structures} over $(S, \sigma)$, whose definition appeared already in Section \ref{basicidea}, but which we repeat here for convenience.

\begin{defn}[One-particle structures]\label{oneparticlestructures}
These are pairs $(K, \mathscr{H})$, with $\mathscr{H}$ a complex Hilbert space and $K : S \to \mathscr{H}$ a \emph{real-linear} map, such that for all $\Phi_1, \Phi_2 \in S$,
\begin{enumerate}
\item $KS + iKS$ is dense in $\mathscr{H}$;
\item $\mathrm{Im} \innerprod{K \Phi_1}{K \Phi_2}_\mathscr{H} = \sigma(\Phi_1, \Phi_2)/2$.
\end{enumerate}
Any two such pairs $(K, \mathscr{H})$ and $(K', \mathscr{H}')$ are said to be \emph{equivalent} if there exists an isomorphism $U \colon \mathscr{H} \to \mathscr{H}'$ of Hilbert spaces such that $UK = K'$.
\end{defn}

The correspondence works as follows. On the one hand, any one-particle structure $(K, \mathscr{H})$ over $(S, \sigma)$ clearly yields a $\lambda_2$ satisfying Conditions (i) and (ii), namely $\lambda_2(\Phi_1, \Phi_2) = \innerprod{K \Phi_1}{K \Phi_2}_\mathscr{H}$. Somewhat less trivially, the converse also holds.

\begin{prop}\label{correspondingoneparticlestructures}
Given a $\lambda_2: S \times S \to \C$ satisfying Conditions (i) and (ii), there exists a one-particle structure $(K, \mathscr{H})$ which is associated to $\lambda_2$ in the sense that $\innerprod{K \Phi_1}{K \Phi_2}_{\mathscr{H}} = \lambda_2(\Phi_1, \Phi_2)$ for all $\Phi_1, \Phi_2 \in S$. Furthermore, any two such one-particle structures are equivalent in the sense of Definition \ref{oneparticlestructures}.
\end{prop}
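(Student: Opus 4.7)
My plan is to prove existence by a direct complexification construction and uniqueness by a straightforward isometry argument. The key observation is that Conditions (i) and (ii) are precisely what is needed to promote $\lambda_2$ to a genuine positive-semidefinite sesquilinear form on the complexification of $S$.

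\textbf{Existence.} First I would form the complexification $S_\C := S \oplus iS$, extend both $\mu := \mathrm{Re}\,\lambda_2$ and $\sigma$ complex-bilinearly to $S_\C$, and define the sesquilinear form (antilinear in the first slot)
\[
\alpha(z_1, z_2) := \mu(\bar z_1, z_2) + \tfrac{i}{2}\sigma(\bar z_1, z_2).
\]
Writing a generic element as $z = \Phi + i\Phi'$ with $\Phi, \Phi' \in S$ and using the symmetry of $\mu$ together with the antisymmetry of $\sigma$, a direct expansion gives
\[
\alpha(z, z) = \mu(\Phi, \Phi) + \mu(\Phi', \Phi') - \sigma(\Phi, \Phi').
\]
Combining the Cauchy--Schwarz-type bound in Condition (ii) with the elementary inequality $2\sqrt{ab} \leq a + b$ yields $|\sigma(\Phi, \Phi')| \leq \mu(\Phi,\Phi) + \mu(\Phi', \Phi')$, so $\alpha(z,z) \geq 0$. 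Hence $\alpha$ is positive-semidefinite. I would then let $\mathscr{N} := \{ z \in S_\C : \alpha(z,z) = 0\}$, which is a complex subspace by Cauchy--Schwarz for $\alpha$, take the quotient $S_\C / \mathscr{N}$ (on which $\alpha$ descends to a genuine complex inner product) and define $\mathscr{H}$ to be its completion. The map $K : S \to \mathscr{H}$ is the real-linear composition of the inclusion $S \hookrightarrow S_\C$ with the quotient and completion maps, and by construction $\innerprod{K\Phi_1}{K\Phi_2}_\mathscr{H} = \alpha(\Phi_1,\Phi_2) = \lambda_2(\Phi_1,\Phi_2)$. Density of $KS + iKS$ in $\mathscr{H}$ is automatic since its image in $S_\C/\mathscr{N}$ is all of $S_\C/\mathscr{N}$.

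\textbf{Uniqueness.} Given two one-particle structures $(K, \mathscr{H})$ and $(K', \mathscr{H}')$ both associated to $\lambda_2$, I would expand $\norm{K\Phi + iK\Phi'}_\mathscr{H}^2$ using sesquilinearity of the Hilbert-space inner product together with $\innerprod{K\Phi_1}{K\Phi_2}_\mathscr{H} = \lambda_2(\Phi_1,\Phi_2)$. The computation mirrors the one for $\alpha(z,z)$ above and yields
\[
\norm{K\Phi + iK\Phi'}_\mathscr{H}^2 = \mu(\Phi, \Phi) + \mu(\Phi', \Phi') - \sigma(\Phi, \Phi'),
\]
with an identical formula for $\norm{K'\Phi + iK'\Phi'}_{\mathscr{H}'}^2$. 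Because the right-hand side depends only on the data $(\mu, \sigma)$, the prescription $U(K\Phi + iK\Phi') := K'\Phi + iK'\Phi'$ is well-defined and isometric on the dense subspace $KS + iKS \subseteq \mathscr{H}$, with dense range $K'S + iK'S \subseteq \mathscr{H}'$. It therefore extends uniquely to a Hilbert-space isomorphism $U : \mathscr{H} \to \mathscr{H}'$ satisfying $UK = K'$.

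\textbf{Main obstacle.} The only non-formal step is verifying positivity of $\alpha$, which as indicated above reduces to a one-line application of the arithmetic-geometric mean inequality to Condition (ii). Everything else is bookkeeping: checking that $\mathscr{N}$ is a \emph{complex} subspace (which follows from Cauchy--Schwarz for the semi-inner product $\alpha$, itself an automatic consequence of positivity), verifying the sesquilinear algebra leading to the identity for $\norm{K\Phi + iK\Phi'}^2$, and confirming that well-definedness of $U$ follows from its isometry property on a dense subspace. No functional-analytic subtlety arises because the entire construction is performed algebraically before passing to the completion.
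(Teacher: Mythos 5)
Your construction is correct: the sesquilinear form $\mu(\bar z_1,z_2)+\tfrac{i}{2}\sigma(\bar z_1,z_2)$ on the complexification is Hermitian and positive semidefinite precisely by Condition (ii) plus the arithmetic–geometric mean inequality, the quotient-and-complete step yields a one-particle structure reproducing $\lambda_2=\mu+\tfrac{i}{2}\sigma$, and the isometry argument for uniqueness is sound since the norm of $K\Phi+iK\Phi'$ is fixed by $(\mu,\sigma)$ alone. This is essentially the same route as the proof the paper relies on (it gives none of its own, deferring to Appendix A of \cite{kay1991theorems}), namely the standard complexified GNS-type construction for existence and the dense-isometry extension for uniqueness, so no further comparison is needed.
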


The above theorem is proved in Appendix A of \cite{kay1991theorems}. There, and in the discussion following Proposition 3.1 in Section 3.2, it was also pointed out that one may use this result to prove that, for any $\lambda_2 : S \times S \to \C$ satisfying Conditions (i) and (ii) above, the prescription
\begin{equation}\label{gaussian}
\omega [W(\Phi)] = \exp [- \lambda_2(\Phi,\Phi)/2] \quad \forall \ \Phi \in S \end{equation}
(and extension by linearity and continuity) defines a state on ${\cal A}$. Indeed, one may realize the right-hand side of Equation (\ref{gaussian}) as the expectation value in the Fock space vacuum, of the operator $W^\mathscr{F}(K \Phi) = \exp[\overline{a^\dagger(K\Phi) - (a^\dagger(K \Phi))^*}]$ on the Fock space over $\mathscr{H}$. Since $W(\Phi) \mapsto W^\mathscr{F}(K \Phi)$ defines a *-representation of the Weyl algebra, the result follows. One may then easily verify that $\omega$ has a two-point function and that this equals $\lambda_2$. Indeed, $\omega$ also has the following additional properties: (a) it is \emph{analytic} (see e.g.\ \cite{bratteli1997operator2}, p.\ 38) so that, in particular, it is $C^m$ for all $m$ and all correlation functions exist; (b) the one-point function vanishes; (c) the `truncated' $n$-point functions (see e.g.\ \cite{haag1992local, bratteli1997operator2}) vanish for $n > 2$ (in particular, all odd correlation functions vanish). Throughout the present paper, and just as in \cite{kay1991theorems}, we will refer to states having Properties (a)--(c)  as `quasi-free', but remark that more properly they should be referred to as `quasi-free states with vanishing one-point function'. Since analytic states with the same collections of $n$-point functions are identical, this also proves that any quasi-free state on the Weyl algebra is in the form of Equation (\ref{gaussian}), for some $\lambda_2$ satisfying Conditions (i) and (ii).

So one concludes that quasi-free states over the Weyl algebra ${\cal A}$ are also in one-to-one correspondence with equivalence classes of one-particle structures over $(S, \sigma)$, and thus we can freely speak of the (equivalence class of) one-particle structure(s) `associated with' a given quasi-free state. What's more, a number of important properties which could be possessed by a quasi-free state have a `translation' at the level of the corresponding one-particle structure(s). These `one-particle versions' are often technically convenient to work with, and indeed are what allowed us to conjecture/prove the results in the main body of the paper. We record below two such translations (for proofs, see Appendix A of \cite{kay1991theorems} and \cite{kay1985double}), which are invoked in Section \ref{basicidea}. 

\begin{prop}\label{purity}
A state $\omega$ is pure if and only if its associated one-particle structure $(K_\omega, \mathscr{H}_\omega)$ is such that $K_\omega S$ alone is dense in $\mathscr{H}_\omega$.
\end{prop}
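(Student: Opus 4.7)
The plan is to leverage the explicit construction, sketched around Equation~(\ref{gaussian}), which realises the GNS representation of a quasi-free state $\omega$ as the Fock representation on $\mathscr{F}(\mathscr{H}_\omega)$ via $W(\Phi) \mapsto W^\mathscr{F}(K_\omega \Phi)$, with the Fock vacuum $\Omega_\mathrm{F}$ as cyclic vector (cyclicity being guaranteed by Condition~1 of Definition~\ref{oneparticlestructures}). Since purity of $\omega$ is equivalent to irreducibility of its GNS representation, the task reduces to proving that the von Neumann algebra $\mathscr{M}$ generated by $\{W^\mathscr{F}(K_\omega \Phi) : \Phi \in S\}$ on $\mathscr{F}(\mathscr{H}_\omega)$ has trivial commutant if and only if $K_\omega S$ is dense as a real-linear subspace of $\mathscr{H}_\omega$.

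For the ``if'' direction, I would suppose $\overline{K_\omega S} = \mathscr{H}_\omega$ and take an arbitrary $A \in \mathscr{M}'$. Using the strong continuity of $\psi \mapsto W^\mathscr{F}(\psi)$ on bounded sets, together with weak-operator closedness of $\mathscr{M}'$, the commutation $[A, W^\mathscr{F}(K_\omega \Phi)] = 0$ for every $\Phi \in S$ extends to $[A, W^\mathscr{F}(\psi)] = 0$ for every $\psi \in \mathscr{H}_\omega$. Appealing to the standard fact (see e.g.\ \cite{bratteli1997operator2}) that the full bosonic Weyl family $\{W^\mathscr{F}(\psi)\}_{\psi \in \mathscr{H}_\omega}$ acts irreducibly on $\mathscr{F}(\mathscr{H}_\omega)$, one concludes $A \in \C \mathbbm{1}$, hence $\omega$ is pure.

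For the ``only if'' direction, I would argue the contrapositive. If $\overline{K_\omega S}$ is a proper real-linear subspace of $\mathscr{H}_\omega$, there exists a nonzero $\psi_0 \in \mathscr{H}_\omega$ with $\mathrm{Re}\innerprod{K_\omega \Phi}{\psi_0} = 0$ for all $\Phi \in S$, equivalently $\mathrm{Im}\innerprod{K_\omega \Phi}{i\psi_0} = 0$. The Weyl relations then yield, for every $t \in \R$ and every $\Phi \in S$,
\begin{equation*}
W^\mathscr{F}(t\,i\psi_0)\,W^\mathscr{F}(K_\omega \Phi) = W^\mathscr{F}(K_\omega \Phi)\,W^\mathscr{F}(t\,i\psi_0).
\end{equation*}
For $t \neq 0$, $W^\mathscr{F}(t\,i\psi_0)$ has Fock-vacuum expectation value $\exp(-t^2 \norm{\psi_0}^2/2) \in (0,1)$ and hence is not a scalar; it therefore furnishes a nontrivial element of $\mathscr{M}'$, showing that the GNS representation is reducible, so $\omega$ is mixed.

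The main obstacle I anticipate is the clean invocation of the two classical but nontrivial ingredients above: the irreducibility of the full bosonic Weyl system on Fock space, and the strong-continuity passage to arbitrary $\psi \in \mathscr{H}_\omega$ in the ``if'' direction. The latter needs care because $\psi \mapsto W^\mathscr{F}(\psi)$ is not norm-continuous in general; one must instead work in the strong operator topology on bounded subsets of $\mathscr{H}_\omega$ and exploit the weak-operator closedness of the commutant to preserve the commutation relation in the limit.
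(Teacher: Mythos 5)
Your overall route -- purity $\Leftrightarrow$ irreducibility of the GNS representation, realised concretely as the Fock representation $W(\Phi)\mapsto W^{\mathscr{F}}(K_\omega\Phi)$ on $\mathscr{F}(\mathscr{H}_\omega)$, plus the standard irreducibility of the full Weyl system -- is essentially the argument the paper itself relies on: no proof is given in the text, the result being deferred to Appendix A of \cite{kay1991theorems} and to \cite{kay1985double}, where the same Fock-space characterization is used. Both of your directions are in substance correct, and your ``only if'' construction of the non-scalar commutant element $W^{\mathscr{F}}(t\,i\psi_0)$ from a vector $\psi_0$ real-orthogonal to $\overline{K_\omega S}$ is exactly right.

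The one step you wave through is the parenthetical claim that cyclicity of the Fock vacuum for the algebra generated by $\{W^{\mathscr{F}}(K_\omega\Phi):\Phi\in S\}$ is ``guaranteed by Condition~1'' of Definition~\ref{oneparticlestructures}. This is true but is precisely the place where the density of $K_\omega S+iK_\omega S$ does real work, and it needs an argument (or an explicit citation): differentiate $t\mapsto W^{\mathscr{F}}(tK_\omega\Phi)\Omega_{\mathrm{F}}$ at $t=0$ to generate the vectors $(K_\omega\Phi)^{\otimes n}$, real-polarize to obtain symmetrized products of elements of $K_\omega S$, and then pass to complex linear spans to reach symmetrized products over $K_\omega S+iK_\omega S$, which are total in each $n$-particle sector. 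Note that in your ``if'' direction cyclicity actually comes for free (a trivial commutant leaves no proper invariant subspace, so the cyclic subspace is everything), but in the ``only if'' direction it is indispensable: without it, your $W^{\mathscr{F}}(t\,i\psi_0)$ only exhibits reducibility of the ambient Fock representation, and this does not pass to the GNS representation, since a commutant element need not preserve the cyclic subspace $\overline{\mathscr{M}\Omega_{\mathrm{F}}}$ and, even when it does, its restriction there could be scalar. A minor point of phrasing: the continuity you need is that $\psi\mapsto W^{\mathscr{F}}(\psi)$ is continuous from the norm topology of $\mathscr{H}_\omega$ to the strong operator topology (uniform boundedness of the unitaries plus weak convergence on coherent vectors, and weak convergence of unitaries to a unitary implies strong convergence); the caveat ``on bounded sets'' is not the relevant one, though the mechanism you describe is the correct fix.
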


\begin{prop}\label{reehschlieder}
Let $\tilde{{\cal A}}$ denote the Weyl algebra over the symplectic vector space $(\tilde{S}, \tilde{\sigma})$ and $\omega$ be a state on $\tilde{{\cal A}}$ with associated one-particle structure $(K_\omega, \mathscr{H}_\omega)$. Then the $C^*$-subalgebra $\tilde{{\cal A}}_R$ of $\tilde{{\cal A}}$ generated by the subspace $R$ of $\tilde{S}$ has the Reeh-Schlieder property\footnote{Let $\omega$ be a state on a $C^*$-algebra $\mathscr{A}$ with GNS-triple \cite{bratteli1987operator1} $(\rho, H, \Omega)$. Then the $C^*$-subalgebra $\mathscr{B}$ of $\mathscr{A}$ is said to have the \emph{Reeh-Schlieder property} for $(\mathscr{A}, \omega)$ if $\rho(\mathscr{B})\Omega$ is dense in $H$.} for $(\tilde{{\cal A}}, \omega)$ iff $K_\omega R + i K_\omega R$ is dense in $\mathscr{H}_\omega$.
\end{prop}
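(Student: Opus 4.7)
The plan is to translate the statement into a density question in symmetric Fock space. First, I would invoke Proposition~\ref{correspondingoneparticlestructures} and the Fock-space realization of quasi-free states discussed just after it to identify the GNS data of $\omega$ with the Fock triple $(\rho,\mathscr{F}_s(\mathscr{H}_\omega),\Omega)$, where $\Omega$ is the Fock vacuum and $\rho(W(\Phi))=W^{\mathscr{F}}(K_\omega\Phi)$ is the canonical Weyl operator on Fock space built from the one-particle vector $K_\omega\Phi$. Next I would clean up the structure of $\tilde{\mathcal{A}}_R$: since $R$ is a real-linear subspace of $\tilde S$, the Weyl CCR $W(\Phi_1)W(\Phi_2)=e^{-i\tilde\sigma(\Phi_1,\Phi_2)/2}W(\Phi_1+\Phi_2)$ together with $W(\Phi)^*=W(-\Phi)$ show that the $*$-algebra generated by $\{W(\Phi):\Phi\in R\}$ is already the complex linear span $\mathcal{L}:=\mathrm{span}_{\mathbb{C}}\{W(\Phi):\Phi\in R\}$, so $\tilde{\mathcal{A}}_R=\overline{\mathcal{L}}^{\|\cdot\|}$.

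Because $\rho$ is norm-continuous, the Reeh--Schlieder property for $(\tilde{\mathcal{A}}_R,\omega)$ is equivalent to the Hilbert-space density of $\rho(\mathcal{L})\Omega=\mathrm{span}_{\mathbb{C}}\{W^{\mathscr{F}}(K_\omega\Phi)\Omega:\Phi\in R\}$ in $\mathscr{F}_s(\mathscr{H}_\omega)$. Using the standard identity $W^{\mathscr{F}}(\psi)\Omega=e^{-\|\psi\|^2/2}\mathrm{Exp}(\psi)$, where $\mathrm{Exp}(\psi)=\sum_{n\geq 0}(n!)^{-1/2}\psi^{\otimes_s n}$ is the exponential (coherent) vector, and discarding the nowhere-vanishing prefactor, the statement reduces to the following Fock-space claim: for any real-linear subspace $V\subseteq\mathscr{H}$,
\[
\overline{\mathrm{span}_{\mathbb{C}}\{\mathrm{Exp}(\psi):\psi\in V\}}=\mathscr{F}_s(\mathscr{H})
\quad\Longleftrightarrow\quad
\overline{V+iV}=\mathscr{H}.
\]
Applying this with $V=K_\omega R$ yields the proposition.

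The forward implication of the Fock-space claim is immediate: bounded orthogonal projection onto the one-particle subspace sends a finite combination $\sum_j c_j\mathrm{Exp}(\psi_j)$ with $\psi_j\in V$, $c_j\in\mathbb{C}$, to $\sum_j c_j\psi_j\in V+iV$, so density in Fock space forces density of $V+iV$ in $\mathscr{H}$. For the converse, I would exploit that the map $t\mapsto\mathrm{Exp}(t\psi)$ is norm-analytic from $\mathbb{R}$ into $\mathscr{F}_s(\mathscr{H})$, so a real-parameter Vandermonde argument (equivalently, iterated differentiation in the real variables $t_j$ at the origin of $\mathrm{Exp}(t_1\psi_1+\cdots+t_n\psi_n)$, each of which stays in our span because $V$ is a real-linear subspace) extracts every symmetric monomial $\psi_1^{\otimes_s k_1}\otimes_s\cdots\otimes_s\psi_n^{\otimes_s k_n}$ with $\psi_j\in V$ from the closed complex span. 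Allowing complex coefficients then produces all symmetric tensors with entries in $V+iV$, after which density of $V+iV$ in $\mathscr{H}$ propagates through the symmetric tensor powers to give density in each $n$-particle sector, and so, via the usual direct-sum/finite-particle-vector density criterion for $\mathscr{F}_s(\mathscr{H})$, density in the full Fock space.

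The hard part is the reverse direction of the Fock-space lemma: one must justify rigorously that the Vandermonde (or derivative) extraction of individual symmetric monomials from coherent-state combinations converges in the Fock norm, not merely formally in each $n$-particle sector, and that the resulting sector-by-sector approximations can be assembled into a genuine Fock-norm approximation. The rest is routine bookkeeping about the GNS/Fock realization of quasi-free states, the closure of the Weyl CCR on a linear subspace, and the coherent-vector formula for $W^{\mathscr{F}}$.
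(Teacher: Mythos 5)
Your argument is correct and is essentially the standard proof that the paper itself defers to (it cites Appendix A of \cite{kay1991theorems} and \cite{kay1985double} rather than proving the proposition in-text): realize the GNS triple of the quasi-free state on the symmetric Fock space over $\mathscr{H}_\omega$, reduce the Reeh--Schlieder property to density of the complex span of the coherent vectors $\mathrm{Exp}(K_\omega\Phi)$, $\Phi\in R$, and settle that equivalence by projecting onto the one-particle sector in one direction and by norm-analyticity of $t\mapsto\mathrm{Exp}(t\psi)$ (extraction of symmetric monomials) in the other. The only point worth making explicit is that the identification of the GNS Hilbert space with the \emph{full} Fock space is itself the special case $V=K_\omega\tilde{S}$ of your coherent-vector lemma (via condition 1 in Definition \ref{oneparticlestructures}), so the lemma should be stated and proved first and then applied twice.
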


\section[{Appendix B - Filling a gap in [KW91]}]{Filling a gap in \cite{kay1991theorems}}\label{appB}

As already mentioned in Footnote \ref{gap}, we wish here to point out, and attempt to fill, a gap in some of the arguments of \cite{kay1991theorems}.  We should stress that, while it was consideration of our Proposition \ref{symplecticity} which led us to notice this gap, the discussion in this appendix is logically separate from the rest of the paper -- although it may well be that the methods used here will turn out to be useful in the attempt to rigorously prove our no-go conjecture of Sections \ref{intro} and \ref{basicidea} for Kruskal-in-a-box.  As it would be unfeasible to make this appendix fully self-contained, knowledge of the notions, notational conventions (which, by the way, include a different choice of metric signature to the one made in this paper) and general geometric/analytical assumptions underlying the analysis of \cite{kay1991theorems} will be assumed without further comment in what follows and we assume this appendix will be read in conjunction with a copy of \cite{kay1991theorems}.

The gap to be filled is that it is not proven in \cite{kay1991theorems} that the \emph{a priori} \emph{pre}-symplectic subspace $(S_0 = S_A + S_B, \sigma)$ of the symplectic space $(S,  \sigma)$ is also symplectic itself, i.e.\ that $\sigma$ is not only antisymmetric but also non-degenerate on $S_0$.   This gap needs to be filled, in particular, for the proof of Theorem 4.2 in \cite{kay1991theorems} (which, we recall, establishes certain uniqueness and KMS properties) to be valid.  (See where the proof appeals to Lemma 4.1 of \cite{kay1991theorems}.)   As a matter of fact, in view of the issues raised and dealt with in the Note Added in Proof in \cite{kay1991theorems} (cf.\ the discussion in Section \ref{basicidea}), what's \emph{really} important is that a similar job be done on the modification of Theorem 4.2 in that Note Added in Proof involving (a) the `natural extension', to the Weyl algebra $\hat{\mathscr{A}}$ over a suitable larger symplectic space $(\hat{S}, \hat{\sigma})$, of a quasifree, isometry-invariant Hadamard state on the Weyl algebra $\mathscr{A}$ over $(S, \sigma)$, and (b) certain subspaces $\tilde{S}_A$, $\tilde{S}_B$ of $\hat{S}$ which are also suitably `large' subspaces of $S_A$, $S_B$ respectively.  Namely, it needs to be established that the a priori \emph{pre}-symplectic subspaces $(\tilde{S}_A, \hat{\sigma})$, $(\tilde{S}_B, \hat{\sigma})$ and $(\tilde{S}_0 = \tilde{S}_A + \tilde{S}_B, \hat{\sigma})$ of $(\hat{S}, \hat{\sigma})$ are actually symplectic.  In Section \ref{commstartpt} we will give an easy argument, which holds on the entire class of spacetimes considered in \cite{kay1991theorems}, that $(\tilde{S}_A, \hat{\sigma})$ and $(\tilde{S}_B, \hat{\sigma})$ are symplectic.  We will then give two different lines of argument (the first of which applies to the massless Klein-Gordon equation, the second to more general Klein-Gordon equations with isometry-invariant potentials) each of which establishes that $(\tilde{S}_0, \hat{\sigma})$ is symplectic for certain spacetimes with bifurcate Killing horizons which include the notable cases of the Minkowski and Kruskal spacetime.

As explained in the next paragraph but one, both lines of arguments rely in particular on the existence of isometry-invariant Hadamard states for the Klein-Gordon field and spacetime under consideration.    

As we mentioned in the Introduction to this paper, it is also proven in \cite{kay1991theorems} (in Chapter 6)  that, on the globally hyperbolic patches of Schwarzschild-de Sitter (with non-zero Schwarzschild mass) and of sub-extremal Kerr, there is no isometry-invariant Hadamard state.  The same gap needs filling, and we will fill it here, for the validity of the proofs of these non-existence results too in a sense we now explain:    The non-existence proofs assume that the relevant $(\tilde{S}_0, \hat{\sigma})$ are symplectic spaces.   We will show (cf. the previous paragraph) that, if there exists an isometry-invariant Hadamard state for each of these spacetimes, then $(\tilde{S}_0, \hat{\sigma})$ will indeed be symplectic.  Clearly, this suffices to fill the gap in the non-existence proofs, albeit it doesn't actually establish that the $(\tilde{S}_0, \hat{\sigma})$ for these spacetimes is actually symplectic!  We will leave that open.  When we refer, below, to `filling the gap' in the case of Kerr and Schwarzschild-de Sitter, it needs to be borne in mind that this is the sense we intend.

The common starting point for both lines of argument is that, as we will show in Theorem \ref{commstartpoint} in Section \ref{commstartpt}, if 
\begin{enumerate}[label= (\roman*), topsep=1ex, itemsep=-1ex, partopsep=1ex, parsep=1ex]
\item there exists an isometry-invariant Hadamard state on $\mathscr{A}$, and
\item the entire spacetime coincides with the `domain of $C^{k-3}$-determinacy' (with integer $k\geq~5$) of the bifurcate Killing horizon ${\cal H}_A \cup {\cal H}_B$ (this notion will be introduced in Definition \ref{Cndeterminacy}),
\end{enumerate}
then degenerate elements of $(\tilde{S}_0, \hat{\sigma})$\footnote{That is, solutions whose pre-symplectic product with all solutions is zero.} are necessarily `zero modes', i.e.\ are invariant under the isometries.  Once this is established, it immediately follows that $(\tilde{S}_0, \hat{\sigma})$ is symplectic for all those choices of spacetime (with bifurcate Killing horizon) and of Klein-Gordon operator such that (a) Conditions (i) and (ii) above are satisfied, and (b) there do not exist non-zero isometry-invariant solutions in the resulting $\tilde{S}_0$.\footnote{If one were to adopt the fiction explained in Section \ref{basicidea} that $S_A$, $S_B$ and therefore $S_0 = S_A + S_B$ are subspaces of $S$ then there is a simple (though of course false) argument showing that solutions $\phi$ in $S$ which are symplectically orthogonal to the whole of $S_0$ are isometry invariant -- and therefore, apparently, also that $(S_0, \sigma)$ is a symplectic space if there do not exist non-zero isometry-invariant solutions in $S_0$.  This argument does not need to appeal to the existence of any particular quantum state, and therefore Condition (ii) above is not needed. Namely, as explained on page 91 of \cite{kay1991theorems} in the paragraph preceding Lemma 4.1, and under Condition (i) above, in a first step one easily shows that such a $\phi$ must be constant on each null generator of each horizon (we note that, in that passage of \cite{kay1991theorems}, it is stated erroneously that such a solution must be constant on each horizon, but presumably what was intended is what we wrote above).  Then, in virtue of the fact that the isometries map solutions to solutions, and by the very definition of the domain of determinacy, one can conclude that the solution will be isometry-invariant.}  Notice that, as will also be explained in Section \ref{commstartpt}, our definitions of the spaces $\hat{S}$, $\tilde{S}_A$ and $\tilde{S}_B$ (and therefore also $\tilde{S}_0$) will be slightly different from (and, morally speaking, more general than) the ones originally presented in the Note Added in Proof in \cite{kay1991theorems}.  

Sections \ref{decay} and \ref{elliptreg} will present our two different lines of argument which allow to establish the absence of `zero modes' in the cases of interest listed above.  We remark that (a) our methods actually allow to prove a stronger statement, namely the absence of zero modes amongst solutions of sufficient regularity and not just amongst solutions in $\tilde{S}_0,$\footnote{The term `regularity' is here used informally to indicate conditions on both the differentiability and the asymptotic behaviour of the solution.  We will not attempt to precisely identify `minimal' regularity assumptions which are sufficient for ruling out zero modes.} and that (b) since neither of our lines of argument will require that Conditions (i) and (ii) hold, the `cases of interest' \emph{include} Schwarzschild-de Sitter and Kerr.  However, we have not succeeded in ascertaining whether or not there are zero modes in $\tilde{S}_0$ in the case of de Sitter spacetime. 

\subsection{Preliminaries and the common starting point}\label{commstartpt}

Actually, the definition of the `enlarged' symplectic space $(\hat{S}, \hat{\sigma})$ given in \cite{kay1991theorems} is not entirely satisfactory:  $\hat{S}$ is defined there to be the set of real-valued solutions to the Klein-Gordon equation with $C_0^5$ data on a Cauchy surface, $\mathscr{C}$, which contains the entire bifurcation surface $\Sigma$.  It seems not totally clear whether, in this definition, $\mathscr{C}$ is a fixed Cauchy surface, chosen once and for all, or whether it is allowed to depend on the solution.  Either way there would appear to be a serious difficulty:  If the Cauchy surface is allowed to depend on the solution, then there is no reason why $\hat{S}$ should be a vector space.  If it is assumed to be fixed once and for all, then (a statement to the contrary in \cite{kay1991theorems} notwithstanding) there is no reason why the action of the isometries on $(S, \sigma)$ will extend to an action on $(\hat{S}, \hat{\sigma})$.\footnote{This is because the pullback by the isometries of a solution in the thus defined $\hat{S}$ may fail to have $C^5$ Cauchy data on the chosen Cauchy surface $\mathscr{C}$.}

In order to overcome these difficulties we now propose a slightly different candidate for an extension of $(S, \sigma)$ to a larger symplectic space, which we shall also denote $(\hat{S}, \hat{\sigma})$.  We begin by pointing out that  \cite{kay1991theorems} already suggested that an enlarged symplectic space of solutions $\hat{S}$ could alternatively be defined by using Cauchy data on $\mathscr{C}$ belonging to appropriate Sobolev spaces.  In order to turn this idea into a rigorous recipe we will draw upon constructions and results from a recent paper \cite{baer2015initial} by B\"{a}r and Wafo concerning the Cauchy and characteristic initial value problems for an arbitrary second-order normally hyperbolic operator $P$ acting on distributional sections of a vector bundle over a globally hyperbolic spacetime.  To wit, for any choice of foliation of the spacetime by smooth spacelike Cauchy surfaces, the latter authors define spaces of spatially compact solutions to the homogeneous `wave equation' which have `finite $k$-energy' ($k \in \R$) along the foliation, and then show that these spaces do not actually depend on the choice of foliation.  More precisely, given a choice of (smooth) Cauchy temporal function $t : M \to \R$ for the spacetime $M$, one can first define, for each $k \in \R$, spaces $C^\ell(t(M), H^k_{\mathrm{loc}}(\mathscr{C}_\bullet))$ of $\ell$-times continuously differentiable sections of the bundle $\{ H^k_{\mathrm{loc}}(\mathscr{C}_s) \}_{s \in t(M)}$, where $H^k_{\mathrm{loc}}(\mathscr{C}_s)$ is the space of locally Sobolev sections of the  restriction of the original vector bundle to the Cauchy surface $\mathscr{C}_s = t^{-1} \{ s \}$.  As explained in \cite{baer2015initial}, these spaces can then be straightforwardly embedded as subspaces of distributional sections of the original vector bundle over $M$.  It is therefore legitimate to further restrict attention to elements of $C^\ell(t(M), H^k_{\mathrm{loc}}(\mathscr{C}_\bullet))$ which correspond to distributional sections with spacelike-compact support on $M$; this way, one obtains the spaces denoted by $C_{sc}^\ell(t(M), H^k(\mathscr{C}_\bullet))$ in \cite{baer2015initial}.  The space of \emph{finite $k$-energy sections} (with respect to $t$) is then defined by
\begin{equation*}
		\mathscr{F \! E}^k_{sc}(t) = C_{sc}^0(t(M), H^k(\mathscr{C}_\bullet)) \cap C_{sc}^1(t(M), H^{k-1}(\mathscr{C}_\bullet))
\end{equation*}
(this is Definition 1 in \cite{baer2015initial}, though we have suppressed some of the notation there). The main result (which is Corollary 18 in \cite{baer2015initial}) for the purposes of the present Appendix is the fact that, for any two Cauchy temporal functions $t, t'$,
\begin{equation}\label{indeptempfn}
	\mathscr{F \! E}^k_{sc}(t) \cap \ker P = \mathscr{F \! E}^k_{sc}(t') \cap \ker P
\end{equation}
(where we have omitted an embedding into the space of distributional sections from both sides in the interest of notational clarity).  One can thus unambiguously speak of a space $\mathscr{F \! E}^k_{sc}(\ker P)$ of finite $k$-energy \emph{solutions} of the `homogeneous wave equation' for $P$ -- with the property that $\mathscr{F \! E}^k_{sc}(\ker P) = \mathscr{F \! E}^k_{sc}(t) \cap \ker P$ for all Cauchy temporal functions $t$.  Topologizing $\mathscr{F \! E}^k_{sc}(\ker P)$ in the manner discussed in Section 2.7.6 of \cite{baer2015initial}, one has that the spacelike-compact \emph{smooth} solutions of $Pu=0$ form a dense subset of $\mathscr{F \! E}^k_{sc}(\ker P)$.  Furthermore, in a four-dimensional spacetime, by Corollary 20 in \cite{baer2015initial} and the Sobolev embedding theorem, if $\mathbb{N} \ni k \geq 5$ then $\mathscr{F \! E}^k_{sc}(\ker P) \subset C^{k-3}(M) \subset C^{2}(M)$.

In view of the above (and returning to the specific framework of \cite{kay1991theorems}) we define our alternative notion of the space $\hat{S}$ to be one of the spaces $\hat{S}^k = \mathscr{F \! E}^k_{sc}(\ker P)$, with $\mathbb{N} \ni k \geq 5$ to be determined later.  It is then to be understood that, unless stated otherwise, any statement involving `$\hat{S}^k$' (and the later defined `$\tilde{S}_A^k$', `$\tilde{S}_B^k$' and `$\tilde{S}_0^k$') in the remainder of this Appendix will hold for any choice of $\mathbb{N} \ni k \geq 5$.  The denseness of $S$ in $\hat{S}^k$ can be used to show that the `obvious' antisymmetric bilinear form $\hat{\sigma}$ (which we refrain from denoting instead by the more cumbersome `$\hat{\sigma}^k$') on $\hat{S}^k$ is indeed nondegenerate and thus a symplectic form (presumably, a similar density argument was implicitly assumed in \cite{kay1991theorems}).  With our new notion (i.e.\ $\hat{S}^k$) of $\hat{S}$ there is no difficulty in defining a suitable action of the isometries thanks to Equation (\ref{indeptempfn}) together with the fact that the composition of a Cauchy temporal function with an isometry preserving the time orientation yields another Cauchy temporal function.  Finally a quasi-free Hadamard state on the Weyl algebra $\mathscr{A}$ over $(S, \sigma)$ will possess a natural quasi-free extension to the Weyl algebra $\hat{\mathscr{A}}^k$ over $(\hat{S}^k, \hat{\sigma})$ by the same reasoning as in \cite{kay1991theorems}.  We refer to \cite{lupo2015thesis} for more details and rigorous proofs of the statements made in this paragraph.

Just as in \cite{kay1991theorems}, in the case of a Klein-Gordon equation with isometry-invariant potential, spaces of solutions $\tilde{S}_A^k$ and $\tilde{S}_B^k$ can now be defined in such a way that they are at the same time `large' subspaces of $S_A$ and $S_B$ (respectively) and suitable subspaces of $\hat{S}^k$.  Our $\tilde{S}_A^5$ and $\tilde{S}_B^5$ coincide with the $\tilde{S}_A$ and $\tilde{S}_B$ in \cite{kay1991theorems} (respectively).  The key observation, made on pp.\ 133--134 in \cite{kay1991theorems}, is that any function in $S_A$ whose restriction to the $A$-horizon is of the form $\partial^k(U^k g)/\partial U^k$, for some compactly supported and smooth function $g$ on the $A$-horizon, has $C_0^k$ data on any Cauchy surface $\mathscr{C}$ containing the bifurcation surface.  A similar statement (with $U$ replaced by $V$) holds for functions in $S_B$.  Denoting the linear spaces of such solutions by $\tilde{S}_A^k$ and $\tilde{S}_B^k$, this means that, for $k \geq 5$, $\tilde{S}_A^k, \tilde{S}_B^k \subset \hat{S}^k$ as desired.\footnote{\label{distrsolns} We refer to \cite{lupo2015thesis} for details.  We note that, in the aforementioned passage on pp.\ 133--134 in \cite{kay1991theorems}, functions in $S_A$ and $S_B$ are referred to as `solutions'.  However, while they are always continuous, non-zero functions in $S_A$ (resp.\ $S_B$) -- defined in \cite{kay1991theorems} by `gluing' together one-sided solutions to a characteristic initial value problem with data on ${\cal H}_A$ (resp.\ ${\cal H}_B$) -- may fail to be differentiable across ${\cal H}_A$ (resp.\ ${\cal H}_B$).  Hence, they might fail to be classical solutions and one might wonder whether they are solutions even in a weak sense.  That functions in $S_A$ and in $S_B$ are indeed distributional solutions wasn't explicitly shown in \cite{kay1991theorems}, but follows from an application of Gauss' theorem (we thank Alexander Strohmaier for pointing this out to us).}  We also let $\tilde{S}_0^k = \tilde{S}_A^k + \tilde{S}_B^k$.

In order to show that the restriction of $\hat{\sigma}$ to $\tilde{S}_A^k$ is non-degenerate, we now adapt an argument given, in a slightly different context, on page 135 in \cite{kay1991theorems}.  First, recall that if $\phi_1, \phi_2$ are two solutions in $\tilde{S}_A^k$ whose (smooth, compactly supported) restrictions to ${\cal H}_A$ are $f_1$ and $f_2$ respectively, then (cf.\ Equation (4.4) in \cite{kay1991theorems}) one has
\begin{equation}\label{sympformSA}
\hat{\sigma}(\phi_1, \phi_2) = 2 \int_{{\cal H}_A} f_1 \partial_U f_2 \, \sqrt{^{(2)}g} \, \d U \, \d^2 s
\end{equation}
where $\sqrt{^{(2)}g}$ and $\d^2 s$ denote the induced metric and measure on the bifurcation surface.  Suppose now that $\phi$ is a degenerate element in $(\tilde{S}_A^k, \hat{\sigma})$.  Denoting by $f$ the restriction of $\phi$ to the $A$-horizon, it follows, by integrating the right-hand side of Equation (\ref{sympformSA}) by parts $k+1$ times, that $U^k \tfrac{\partial^{k+1} f}{\partial U^{k+1}} = 0$.  Since $f$ is smooth, actually $\tfrac{\partial^{k+1} f}{\partial U^{k+1}} = 0$ everywhere on the horizon.  So $f$ is a polynomial of degree at most $n$ in the affine parameter $U$, whose coefficients are (compactly supported, smooth) functions of the coordinates on the bifurcation surface.  But no such polynomial can have compact support on the $A$-horizon unless it's zero.  This completes the proof that $(\tilde{S}_A^k, \hat{\sigma})$ is a symplectic space.  $(\tilde{S}_B^k, \hat{\sigma})$ is also a symplectic space by a similar argument.

We now turn to what we already called the `common starting point' for both our strategies: That is, we aim to show that, under Conditions (i)--(ii) above, any degenerate element in $(\tilde{S}_0^k, \hat{\sigma})$ is necessarily isometry-invariant.  Before giving a proof of this fact, we must introduce the notion of \emph{domain of $C^n$-determinacy} (with respect to the Klein-Gordon operator) of a subset $U \subseteq M$, with $n \in \mathbb{N} \cup \{\infty\} \cup \{ \omega \}$, which appears in the formulation of our Condition (ii).
\begin{defn}\label{Cndeterminacy}
        The `domain of $C^n$-determinacy' $\mathscr{D}^{(n)}[U]$ (with respect to the Klein-Gordon operator) of $U \subseteq M$ is the set of points $p \in M$ such that every $C^n$ solution which vanishes on $U$ must vanish at $p$. 
\end{defn}
\begin{rk} Kay and Wald's `domain of determinacy' (cf.\ pages 64--65 in \cite{kay1991theorems}) coincides with what we would call the `domain of $C^\infty$-determinacy.'  It is also clear that the inclusions $\mathscr{D}^{(l)}[U] \subseteq \mathscr{D}^{(m)}[U]$ hold for $l \leq m$. \end{rk}

The following Lemma will be used in the proof of the `common starting point', Theorem \ref{commstartpoint} below.

\begin{lem}\label{timetransimpl}
	Let $\omega$ be a quasi-free Hadamard state on $\mathscr{A}$, with associated one-particle structure $(K, \mathscr{H})$.  Let $\hat{K} : \hat{S}^k \to \mathscr{H}$ be the `natural' extension of $K : S \to \mathscr{H}$ \cite{kay1991theorems, lupo2015thesis}.  Then the one-parameter unitary group $U(t)$ on the one-particle Hilbert space $\mathscr{H}$ for $\omega$ which implements the `time translations' $\mathscr{T}(t) : S \to S$ also implements the `time translations' $\hat{\mathscr{T}}(t) : \hat{S}^k \to \hat{S}^k$, i.e.
\begin{equation}\label{tildeimplement}
        U(t) \hat{K} = \hat{K} \hat{\mathscr{T}}(t).
\end{equation}
\end{lem}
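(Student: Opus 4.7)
My proof would proceed by a density-and-continuity argument, leveraging the fact (recorded just after the definition of $\hat{S}^k$ in the excerpt) that the smooth spacelike-compact solutions $S$ form a dense subspace of $\hat{S}^k$ in the finite-$k$-energy topology of B\"ar--Wafo. The base case, namely $U(t)K\phi = K\mathscr{T}(t)\phi$ for $\phi \in S$, is true by hypothesis: it is precisely what it means for $U(t)$ to implement the symplectomorphism $\mathscr{T}(t)$ on the one-particle Hilbert space associated to the $\mathscr{T}$-invariant quasi-free state $\omega$ (and amounts to the standard GNS implementation). What must be done is therefore to upgrade this identity from $(K,\mathscr{T}(t))$ to $(\hat{K},\hat{\mathscr{T}}(t))$ by passing to the closure of $S$ inside $\hat{S}^k$.

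First I would establish that $\hat{\mathscr{T}}(t):\hat{S}^k\to\hat{S}^k$ is well-defined and continuous in the finite-$k$-energy topology. Well-definedness follows from Equation~(\ref{indeptempfn}): if $t_\star$ is any Cauchy temporal function on $M$, then so is $t_\star\circ\beta_{-t}$ (since $\beta_t$ is a time-orientation-preserving isometry), and pulling back a solution of finite $k$-energy with respect to $t_\star\circ\beta_{-t}$ yields a solution of finite $k$-energy with respect to $t_\star$. Continuity, and in fact isometry, follows because each $\beta_t$ restricts to Riemannian isometries between the level sets of $t_\star$ and those of $t_\star\circ\beta_{-t}$, and $H^k$-Sobolev norms are isometry-invariant.

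Next I would invoke the fact that, by construction, $\hat{K}$ is continuous from $\hat{S}^k$ (finite-$k$-energy topology) into $\mathscr{H}$. This is precisely what is meant in \cite{kay1991theorems, lupo2015thesis} by ``natural extension'': the Hadamard property of the two-point function of $\omega$ provides the bound on $\lambda_2$ needed to extend $K$ by continuity from the dense subspace $S$ to all of $\hat{S}^k$. Granting this, together with the obvious continuity of the bounded (indeed unitary) operator $U(t)$, the proof concludes as follows. Fix $\phi\in\hat{S}^k$ and pick a sequence $\{\phi_n\}\subset S$ with $\phi_n\to\phi$ in $\hat{S}^k$. For each $n$,
\begin{equation*}
U(t)\hat{K}\phi_n \;=\; U(t)K\phi_n \;=\; K\mathscr{T}(t)\phi_n \;=\; \hat{K}\hat{\mathscr{T}}(t)\phi_n,
\end{equation*}
and passing to the limit the left-hand side tends to $U(t)\hat{K}\phi$ (by continuity of $\hat{K}$ then of $U(t)$), while the right-hand side tends to $\hat{K}\hat{\mathscr{T}}(t)\phi$ (by continuity of $\hat{\mathscr{T}}(t)$ then of $\hat{K}$), proving Equation~(\ref{tildeimplement}).

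The main obstacle is not the density-continuity skeleton above -- which is quite routine -- but rather verifying the two continuity statements on which it rests. Of these, the continuity of $\hat{\mathscr{T}}(t)$ seems essentially automatic from the Bär--Wafo foliation-independence and the isometry action on Sobolev norms. The continuity of $\hat{K}$, on the other hand, genuinely relies on the fine structure of Hadamard two-point functions (a norm estimate of the form $|\lambda_2(\phi,\phi)|\lesssim \|\phi\|_{\hat{S}^k}^2$), and is where the real work is hidden; a detailed account presumably appears in \cite{lupo2015thesis}, and in the final write-up I would either cite it precisely or sketch the required bound via the Hadamard parametrix.
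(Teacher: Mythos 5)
Your proposal is correct and follows essentially the same route as the paper: both rest on the density of $S$ in $\hat{S}^k$, the implementation relation $U(t)K = K\mathscr{T}(t)$ on $S$, the boundedness of $U(t)$, and the fact that $\mathscr{T}(t)\psi_n \to \hat{\mathscr{T}}(t)\hat{\psi}$ in the finite-$k$-energy topology. The only (cosmetic) difference is that the paper uses the defining property of $\hat{K}$ --- that $\hat{K}\hat{\psi} = \lim_n K\psi_n$ along any approximating sequence from $S$ --- rather than appealing to continuity of $\hat{K}$ on all of $\hat{S}^k$ as you do, so it never needs the Hadamard norm estimate you flag as the hidden work; that input is already packaged into the cited construction of the natural extension.
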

\begin{proof}
	For any $\hat{\psi} \in \hat{S}^k$, by definition 
\begin{equation}\label{defnKhat}
\hat{K} \hat{\psi} = \lim_{n \to \infty} K \psi_n
\end{equation}
where $(\psi_n)_{n \in \mathbb{N}}$ is a sequence of solutions in $S$ which converges to $\hat{\psi}$ in the topology for $\hat{S}^k = \mathscr{F \! E}^k_{sc}(\ker P)$ given in \cite{baer2015initial} (and any such sequence yields the same limit on the right-hand-side of Equation (\ref{defnKhat})).  Since $U(t)$ is bounded,
\begin{align*}
	U(t) \hat{K} \hat{\psi} &= U(t)\left[ \lim_{n \to \infty} K \psi_n \right] \\
	&= \lim_{n \to \infty} U(t)[K \psi_n] \\
	&= \lim_{n \to \infty} K [\mathscr{T}(t) \psi_n].
\end{align*}
The claim then follows since it is clear that $(\mathscr{T}(t) \psi_n)_{n \in \mathbb{N}}$ is a sequence in $S$ which tends to $\hat{\mathscr{T}}(t) \hat{\psi}$ in the topology of $\hat{S}^k$.
\end{proof}

We conclude this section with the statement and proof of the `common starting point'.\footnote{Note that it was perhaps suggested in \cite{kay1991theorems} that an even stronger result than Theorem \ref{commstartpoint} should hold, namely that (under the same hypotheses) any solution in $\hat{S}$ (rather than just $\tilde{S}_0$) which is symplectically orthogonal to $\tilde{S}_0$ is isometry-invariant.  However, the integration by parts argument used in the proof of Theorem \ref{commstartpoint} does not straightforwardly adapt in that case, due to the fact that the restrictions of elements in $\hat{S}^k$ to either horizon are in general only in $C^{k-3}$.}

\begin{thm}\label{commstartpoint} Suppose Conditions (i) and (ii) hold.  Then any solution in $\tilde{S}_0^k$ which is symplectically orthogonal to $\tilde{S}_0^k$ is isometry-invariant.
\end{thm}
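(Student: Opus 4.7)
My plan is to work at the level of the one-particle Hilbert space furnished by Condition (i), use the implementation of the Killing flow via Lemma \ref{timetransimpl} to deduce that $\hat{K}\phi=0$ in that Hilbert space, and then translate this back to isometry invariance of $\phi$ via the non-degeneracy of $\hat{\sigma}$ on $\hat{S}^k$ which has already been established earlier in this section.

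Concretely, let $(K,\mathscr{H})$ be the one-particle structure of an isometry-invariant Hadamard state provided by Condition (i), $\hat{K}\colon\hat{S}^k\to\mathscr{H}$ its natural extension, and $U(t)$ the strongly continuous one-parameter unitary group of Lemma \ref{timetransimpl}, satisfying $U(t)\hat{K}=\hat{K}\hat{\mathscr{T}}(t)$. In parallel with the reasoning leading to Equation (\ref{horizontwopoint}) and property (A) of Section \ref{basicidea}, the Hadamard and isometry-invariance properties of the state should yield explicit horizon two-point function formulas on each of ${\cal H}_A$ and ${\cal H}_B$ for pairs of elements in $\tilde{S}_A^k$ and $\tilde{S}_B^k$ respectively, and hence show that the closures $\mathscr{H}_A:=\overline{\hat{K}\tilde{S}_A^k}$ and $\mathscr{H}_B:=\overline{\hat{K}\tilde{S}_B^k}$ are complex-linear subspaces of $\mathscr{H}$. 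Since $\hat{K}\tilde{S}_0^k=\hat{K}\tilde{S}_A^k+\hat{K}\tilde{S}_B^k$, its closure coincides with $\overline{\mathscr{H}_A+\mathscr{H}_B}$, which is likewise complex linear.

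The hypothesis then reads $\mathrm{Im}\langle\hat{K}\phi,\hat{K}\psi\rangle=0$ for all $\psi\in\tilde{S}_0^k$, and the elementary observation about real-linear subspaces with complex-linear closure recorded in Section \ref{basicidea} upgrades this to $\hat{K}\phi\perp\overline{\hat{K}\tilde{S}_0^k}$ in the Hilbert-space inner product. Since $\hat{K}\phi$ itself lies tautologically in $\overline{\hat{K}\tilde{S}_0^k}$, this forces $\hat{K}\phi=0$. Applying Lemma \ref{timetransimpl},
\begin{equation*}
\hat{K}(\hat{\mathscr{T}}(t)\phi-\phi)=U(t)\hat{K}\phi-\hat{K}\phi=0\quad\text{for every }t\in\R,
\end{equation*}
so $\hat{\sigma}(\hat{\mathscr{T}}(t)\phi-\phi,\eta)=2\,\mathrm{Im}\langle\hat{K}(\hat{\mathscr{T}}(t)\phi-\phi),\hat{K}\eta\rangle=0$ for every $\eta\in\hat{S}^k$, whereupon the non-degeneracy of $\hat{\sigma}$ on $\hat{S}^k$ forces $\hat{\mathscr{T}}(t)\phi=\phi$ for all $t$, which is the desired isometry invariance. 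Condition (ii) provides a more geometric alternative finish: from $\hat{K}\phi=0$ one would identify $\phi\restriction_{{\cal H}_A}$ and $\phi\restriction_{{\cal H}_B}$ as zero-frequency modes under the boost action on the respective horizon one-particle spaces (i.e.\ constant along the null generators), whereupon the $C^{k-3}$-determinacy of the entire spacetime by ${\cal H}_A\cup{\cal H}_B$ promotes the horizon identity $\hat{\mathscr{T}}(t)\phi-\phi=0$ to an identity throughout $M$.

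The main obstacle I anticipate lies in the very first analytical step: deriving the horizon two-point function formula, and with it the complex linearity of $\mathscr{H}_A$ and $\mathscr{H}_B$, solely from Conditions (i) and (ii), without any latent appeal to the symplecticity of $(\tilde{S}_0^k,\hat{\sigma})$ that Theorem \ref{commstartpoint} is ultimately designed to help establish. Several passages of \cite{kay1991theorems} tacitly assumed that symplecticity, and one must revisit them to verify that all the needed inputs can be recovered using only the already-established symplecticity of $(\tilde{S}_A^k,\hat{\sigma})$ and $(\tilde{S}_B^k,\hat{\sigma})$ taken individually, essentially by constructing the horizon one-particle structures intrinsically on each horizon from the restricted two-point function.
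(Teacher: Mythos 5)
The load-bearing first step of your argument --- that $\overline{\hat{K}\tilde{S}_A^k}$ and $\overline{\hat{K}\tilde{S}_B^k}$ (and hence $\overline{\hat{K}\tilde{S}_0^k}$) are complex-linear subspaces of $\mathscr{H}$ for a state known only to satisfy Condition (i) --- is precisely what you may not assume here, and you offer no argument for it. In \cite{kay1991theorems} the analogues of properties (A) and (B), and the horizon formula (\ref{horizontwopoint}) from which they are read off, are obtained as part of the uniqueness/KMS analysis (Theorem 4.2 of \cite{kay1991theorems} and its Note-Added-in-Proof variant), and it is exactly that chain of reasoning which appeals to Lemma 4.1 of \cite{kay1991theorems} and thereby tacitly presupposes that $(\tilde{S}_0, \hat{\sigma})$ is symplectic --- the very fact this appendix is working to establish. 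Importing those facts is therefore circular, and re-deriving them from Conditions (i)--(ii) using only the separately established symplecticity of $(\tilde{S}_A^k,\hat{\sigma})$ and $(\tilde{S}_B^k,\hat{\sigma})$ is a substantial open task, not a detail; you flag it yourself as ``the main obstacle'' and leave it unresolved, so the proposal is incomplete at its crucial point.

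There are also structural reasons to doubt that this step can be supplied at the stated generality. If your argument worked it would give $\hat{K}\phi=0$ and hence, by the injectivity of $\hat{K}$ (equivalently the non-degeneracy of $\hat{\sigma}$ on $\hat{S}^k$), $\phi=0$ --- a conclusion strictly stronger than the theorem, obtained without ever using Condition (ii) in your main line. That would make the zero-mode analyses of the two subsequent sections of this appendix superfluous whenever (i) and (ii) hold, and in particular would settle the de Sitter case which the paper explicitly leaves open. Conversely, if some spacetime satisfying (i)--(ii) admits a non-zero isometry-invariant degenerate element of $(\tilde{S}_0^k,\hat{\sigma})$, then (by your own orthogonality argument run backwards, using injectivity of $\hat{K}$) at least one of $\overline{\hat{K}\tilde{S}_A^k}$, $\overline{\hat{K}\tilde{S}_B^k}$ fails to be complex-linear, i.e.\ your first step is then simply false. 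The paper's proof avoids all of this and uses much less of the quantum theory: symplectic orthogonality to $\tilde{S}_A^{L/R,k}$ and $\tilde{S}_B^{L/R,k}$ is converted, by integration by parts on each half-horizon, into the statement that $\psi_0\restriction_{{\cal H}_A}$ and $\psi_0\restriction_{{\cal H}_B}$ are polynomials of degree at most $k$ in the affine parameters $U$, $V$; the operator ${}^kQ(t)=\prod_{l=-k}^{k}[\tilde{\mathscr{T}}(t)-e^{l\kappa t}]$ then annihilates $\psi_0$ on ${\cal H}_A\cup{\cal H}_B$, and Condition (ii) (the $C^{k-3}$-determinacy of the bifurcate horizon) promotes this to ${}^kQ(t)\psi_0=0$ on all of $M$; finally Lemma \ref{timetransimpl} yields ${}^kP(t)\hat{K}\psi_0=0$ with ${}^kP(t)=\prod_{l=-k}^{k}[U(t)-e^{l\kappa t}]$, and since $U(t)$ is unitary the factors with $l\neq 0$ are invertible, so $U(t)\hat{K}\psi_0=\hat{K}\psi_0$ and the injectivity of $\hat{K}$ gives only $\hat{\mathscr{T}}(t)\psi_0=\psi_0$ --- which is why the conclusion is invariance rather than vanishing, and why both Condition (ii) and the later decay/elliptic-regularity arguments are genuinely needed.
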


\begin{proof}
A proof was given in \cite[p.\ 135]{kay1991theorems} (under the unnecessary extra assumption that $k=5$) that if Condition (i) above holds and Condition (ii) is replaced by
\begin{enumerate}[label= (ii$'$), topsep=1ex, itemsep=-1ex, partopsep=1ex, parsep=1ex]
\item the entire spacetime coincides with the domain of $C^\infty$-determinacy of the bifurcate Killing horizon ${\cal H}_A \cup {\cal H}_B$,
\end{enumerate}
then any solution $\phi$ in $S$ with the property that $\hat{\sigma}(\phi, \phi_0) = 0 \ \forall \ \phi_0 \in \tilde{S}_0$ must be isometry-invariant on the entire spacetime.  We now describe how those arguments can be adapted for our purposes.  Let $\psi_0 \in \tilde{S}_0^k$ be such that $\hat{\sigma}(\psi_0, \phi_0) = 0 \ \forall \ \phi_0 \in \tilde{S}_0^k$.  Then, in particular, $\psi_0$ is symplectically orthogonal to the whole of $\tilde{S}_A^k$ and to the whole of $\tilde{S}_B^k$.  We would like to apply an integration by parts argument similar to the one used above in the proof that $(\tilde{S}_A^k, \hat{\sigma})$ and $(\tilde{S}_B^k, \hat{\sigma})$ are symplectic to conclude that the restrictions of $\phi_0$ to ${\cal H}_A$ and ${\cal H}_B$ are polynomials of degree at most $k$ in $U$ and $V$ respectively, whose coefficients in both cases are functions on the bifurcation surface.  However, the restriction of $\phi_0$ to either horizon, while in $C^k$, may fail to be $C^{k+1}$ at the bifurcation surface.  To overcome this difficulty one can apply our integration by parts argument separately, first to symplectic products of $\psi_0$ with solutions in $\tilde{S}_A^{L, k}$ and then to symplectic products of $\psi_0$ with solutions in $\tilde{S}_A^{R, k}$, where
\begin{equation*}
        \tilde{S}_A^{L/R, k} = \left\{ \phi \in \tilde{S}_A^k \mid \text{$\phi$'s } \text{data on ${\cal H}_A$ is of the form } \frac{\partial^k (U^k g)}{\partial U^k} \text{with } g \in C_0^\infty\left({\cal H}_A^{L/R}\right) \right\}
\end{equation*}
and we also define the spaces $\tilde{S}_B^{L/R, k}$ in a similar fashion.  Since the restrictions of $\psi_0$ to ${\cal H}_A^L$ and to ${\cal H}_A^R$ are smooth, it indeed follows that each of them is a polynomial in $U$ of degree at most $k$ whose coefficients are smooth functions on the bifurcation surface.  (The fact that $\psi_0$ is $C^k$ across the bifurcation surface will imply that the first $k$ of these coefficients agree.)  Analogous results clearly hold with $A$ replaced by $B$ and $U$ replaced by $V$.  Now let $\tilde{\mathscr{T}}(t)$ denote the time translation operator on $\tilde{S}_0^k$.  We define a generalized version of the operator $Q(t)$ in Equation (N.4) in \cite{kay1991theorems}, namely
\begin{equation*}
        {}^kQ(t) = \prod_{l = -k}^{k} [\tilde{\mathscr{T}}(t) - e^{l \kappa t}] : \tilde{S}_0^k \to \tilde{S}_0^k.
\end{equation*}
Just as in \cite{kay1991theorems} one sees that since, for any $j$ with $0 \leq j \leq k$, $U^j$ is annihilated by $[\tilde{\mathscr{T}}(t) - e^{j \kappa t}]$, ${}^kQ(t)\psi_0$ vanishes on ${\cal H}_A$.  Similarly, for any $j$ with $0 \leq j \leq k$, $V^j$ is annihilated by $[\tilde{\mathscr{T}}(t) - e^{-j \kappa t}]$, which implies that ${}^kQ(t)\psi_0$ vanishes on ${\cal H}_B$.  Therefore ${}^kQ(t)\psi_0 = 0$ on ${\cal H}_A \cup {\cal H}_B$.  Now, if $\psi_0$ were smooth -- as is $\phi$ in the corresponding arguments in \cite{kay1991theorems} -- the very definition of the domain of ($C^\infty$-)determinacy of a set would immediately imply that, under Condition (ii$'$) above, ${}^kQ(t)\psi_0 = 0$ throughout the spacetime.  However, while $\psi_0$ is certainly everywhere $C^{k-3}$, it could fail to be everywhere smooth.  Thus one cannot conclude that ${}^kQ(t)\psi_0 = 0$ if Condition (ii$'$) alone holds.  However, under the stronger Condition (ii) -- namely under the assumption that the entire spacetime coincides with the domain of $C^{k-3}$ determinacy of the bifurcate Killing horizon -- the vanishing of ${}^kQ(t)\psi_0$ on ${\cal H}_A \cup {\cal H}_B$ does imply that ${}^kQ(t)\psi_0 = 0$ on the entire spacetime.

At this point, again just as in \cite{kay1991theorems}, we invoke Condition (i), i.e.\ the existence of an isometry-invariant Hadamard state on the Weyl algebra $\mathscr{A}$ over $(S, \sigma)$.  Without loss of generality, we can assume this state to be quasi-free and denote its associated one-particle Hilbert space structure by $(K, \mathscr{H})$.  Let $\hat{K} : \hat{S}^k \to \mathscr{H}$ be the `natural extension' of $K : S \to \mathscr{H}$.  By Lemma \ref{timetransimpl}, an equation analogous to Equation (N.6) in \cite{kay1991theorems} holds.  Namely:
\begin{equation}\label{intermediate}
        {}^kP(t) \hat{K} \psi_0 = 0
\end{equation}
where
\begin{equation*}
        {}^kP(t) = \prod_{l=-k}^k [U(t) - e^{l \kappa t}].
\end{equation*}
The desired result that $\psi_0$ is isometry-invariant then follows by straightforwardly adapting the arguments given in the first paragraph on page 136 in \cite{kay1991theorems} (in particular, using in the final step the fact that $\hat{K} : \hat{S}^k \to \mathscr{H}$ is injective, which in turn follows from the property $2 \, \mathrm{Im} \langle \hat{K} \hat{\psi} | \hat{K} \hat{\phi} \rangle = \hat{\sigma}(\hat{\psi}, \hat{\phi}) \ \forall \ \hat{\psi}, \hat{\phi} \in \hat{S}^k$).
\end{proof}

\begin{cor}
 $(\tilde{S}_0^k, \hat{\sigma})$ is a symplectic space if Conditions (i)-(ii) are satisfied and there are no non-zero isometry-invariant solutions in $\tilde{S}_0^k$. \qedsymbol
\end{cor}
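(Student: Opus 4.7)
The strategy is to recognize that this statement follows essentially tautologically from Theorem \ref{commstartpoint}, so my proof consists of a short direct verification rather than any new analysis. Two things have to be checked in order for $(\tilde{S}_0^k, \hat{\sigma})$ to be a symplectic space: antisymmetry and non-degeneracy of $\hat{\sigma}$ restricted to $\tilde{S}_0^k$.

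For antisymmetry, I would simply observe that $\tilde{S}_0^k$ is a linear subspace of $\hat{S}^k$ on which $\hat{\sigma}$ is already antisymmetric (as established earlier in Section \ref{commstartpt} via the density of $S$ in $\hat{S}^k$ and the antisymmetry of $\sigma$ on $S$). So antisymmetry is inherited automatically, and the whole content of the Corollary lies in non-degeneracy.

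For non-degeneracy, I would proceed by the contrapositive. Take an arbitrary $\psi_0 \in \tilde{S}_0^k$ with $\hat{\sigma}(\psi_0, \phi_0) = 0$ for every $\phi_0 \in \tilde{S}_0^k$. Under Conditions (i) and (ii), Theorem \ref{commstartpoint} applies verbatim and delivers the conclusion that $\psi_0$ must be invariant under the one-parameter group of Killing isometries. The second hypothesis of the Corollary -- that there are no non-zero isometry-invariant solutions in $\tilde{S}_0^k$ -- then forces $\psi_0 = 0$, which is precisely non-degeneracy.

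The genuine obstacle was already surmounted inside Theorem \ref{commstartpoint} itself: the integration-by-parts argument on the two horizons which forces the restriction of $\psi_0$ to ${\cal H}_A$ and ${\cal H}_B$ to be polynomial in $U$ and $V$ respectively, the construction of the operator ${}^kQ(t)$ that annihilates precisely those polynomials, the appeal to $C^{k-3}$-determinacy of the bifurcate Killing horizon to propagate vanishing off the horizons, and the invocation of the injectivity of $\hat{K}$ on the one-particle Hilbert space. Given that machinery, there is essentially no further technical step to carry out for the Corollary beyond the short deduction above.
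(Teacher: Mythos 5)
Your deduction is correct and is exactly the paper's (the Corollary is stated there with just a \qedsymbol, since it follows immediately from Theorem \ref{commstartpoint}): a degenerate element of $(\tilde{S}_0^k,\hat{\sigma})$ is isometry-invariant by the Theorem, hence zero by hypothesis, and antisymmetry is inherited from $\hat{\sigma}$ on $\hat{S}^k$. No difference in approach worth noting.
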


We end this section by discussing for which cases of physical interest our Conditions (i) and (ii) are known to hold.  First of all, it is not hard to see that, for any Klein-Gordon equation with isometry-invariant potential, there is no difficulty in adapting the arguments given on pages 64-65 of \cite{kay1991theorems} -- which are based on the characteristic initial value formulations for the sets $J^\pm(\Sigma)$ and on an application of Holmgren's uniqueness theorem -- to our $\mathscr{D}^{(n)}[{\cal H}_A \cup {\cal H}_B]$ for any $n \geq 2$ instead of Kay and Wald's $\mathscr{D}[{\cal H}_A \cup {\cal H}_B]$.  It follows that, for any $k \geq 5$, Condition (ii) holds, for example, on Minkowski spacetime, on the Kruskal spacetime, on de Sitter spacetime, and on the globally hyperbolic patches of Kerr and Schwarzschild-de Sitter considered in \cite{kay1991theorems}.  As for Condition (i), it is known that isometry-invariant Hadamard states exist for the massive or massless Klein-Gordon field on both Minkowski spacetime and \cite{sanders2013thermal} Kruskal spacetime, and for the massive or massless conformally coupled Klein-Gordon field on de Sitter spacetime \cite{chernikov1968quantum, bunch1978quantum}.\footnote{\label{deSittermassless}The case of the massless \emph{minimally coupled} Klein-Gordon field on de Sitter seems more subtle.  While it was proved in \cite{allen1985vacuum} that no fully de Sitter invariant state (Hadamard or not) exists,  Hadamard states do exist \cite{allen1985vacuum, allen1987massless} which are invariant under the subgroups $\mathrm{E}(3)$ and $\mathrm{O}(4)$ of the de Sitter group (and it is presumed \cite{allen1987massless} that $\mathrm{O}(1,3)$-invariant Hadamard states also exist).   However, none of these subgroups contain the `de Sitter boost' isometries to which our analysis applies and we conjecture that there is no boost-invariant Hadamard state.  Our grounds for this conjecture are that, were there to exist such a state, then it is plausible that its restriction to the `right-wedge' (which is of course a static spacetime when the time evolution is taken to be the restriction of the de Sitter boost isometries) would be a KMS state.  But it is known \cite{polarski1990minimally} that (for reasons of bad infra-red behaviour) on the right-wedge, no ground state exists for this time evolution.  Also by Lemma 6.2 in \cite{kay1991theorems}, we know quite generally that if a stationary linear Bose dynamical system admits a KMS state then it also admits a ground state, and thus there would be a contradiction.  There are a number of obstacles, however, to making this argument rigorous:  Even under the fiction explained in Section \ref{basicidea} we would only be able to rely on Theorem 4.2 of \cite{kay1991theorems} to prove the KMS property on the subalgebra of the Weyl algebra for the right wedge corresponding to classical solutions in the subspace of solutions $S_0^R=S_A^R+S_B^R$ and, of course, we don't even know if that theorem is applicable since we don't know if our symplectic form on $S$ restricts to a symplectic form on this subspace.   We also mention, in passing, that since the massless minimally coupled Klein-Gordon field on de Sitter has a classical zero mode (namely the constant solution) the strengthened uniqueness theorem, Theorem 5.1 in Chapter 5 of \cite{kay1991theorems}, is also inapplicable for the reasons explained in the introductory remarks in that Chapter.  We are grateful to Atsushi Higuchi for helpful conversation on the topic of this footnote.}  On the other hand, the paper \cite{kay1991theorems} contains proofs that no such states can exist in Kerr or Schwarzschild-de Sitter, although, as we explained in the fourth paragraph of this Appendix, these proofs have a gap that needs filling and that we will fill below by showing that $(\tilde{S}_0, \hat{\sigma})$ is symplectic under the (for these spacetimes, counter-factual) assumption that an isometry invariant Hadamard state exists.

\subsection{`Decay along the horizons' strategy}\label{decay}

Let us now present our first line of argument for showing the non-existence of isometry-invariant solutions in $\tilde{S}_0^k$.  Note that (see the paragraph below) we are presently only able to to apply this strategy to the case of \emph{massless} fields.  The idea is as follows: suppose $(M,g)$ is a globally hyperbolic spacetime with a bifurcate Killing horizon $\mathcal{H}_A \cup \mathcal{H}_B$ and bifurcation surface $\Sigma$, and suppose that there exists a Cauchy surface $\mathscr{C}$ for $M$ which contains $\Sigma$ and such that $\mathscr{C} = \Sigma \cup (\mathscr{C} \cap \mathscr{L}) \cup (\mathscr{C} \cap \mathscr{R})$, where $\mathscr{L}$ and $\mathscr{R}$ are the left and right wedge regions (respectively) defined in Section 2 of \cite{kay1991theorems}.  Then, clearly (recall that the Killing field is assumed to be complete), an isometry-invariant solution $\phi \in \tilde{S}_0^k$ is identically zero on $M$ if and only if, for all $p \in \mathscr{L} \cup \mathscr{R}$, $\phi(\tau_t (p)) \to 0$ as $t \to + \infty$.  Thus, in the presence of appropriate `pointwise decay' results for (sufficiently regular) solutions of the Klein-Gordon equation in question, the result will follow.

In the case of the massless Klein-Gordon equation, recent papers by Dafermos, Rodnianski and Shlapentokh-Rothman \cite{dafermos2009red, dafermos2010decay, dafermos2014decay, dafermos2007wave} contain pointwise decay results which are sufficient for our purposes in the case of Kruskal and of the globally hyperbolic patches of Kerr and Schwarzschild-de Sitter considered in \cite{kay1991theorems}, \emph{provided} that we pick $k \geq 5$ in the definition of $\tilde{S}_0^k$ large enough for the `higher order weighted energies' defined in those papers to be finite.  That this can always be done can be seen immediately by inspection of the relevant formulae in those papers.\footnote{We also notice that, in the somewhat analogous case of our Proposition \ref{symplecticity}, it is the Dirichlet boundary condition which provides the relevant `decay' for our purposes there.}

\subsection{Strategy based on analytic elliptic regularity}\label{elliptreg}

An alternative approach to showing the non-existence of `zero modes' in $\tilde{S}_0^k$ in a number of important cases, which requires less heavy machinery and also applies to the case where suitable potentials (including e.g.\ a mass term) are included, is based on an application of \emph{analytic elliptic regularity} \cite{john1955plane}.\footnote{We would like to thank Robert Wald for suggesting this approach to us and for providing some guidance on how to deal with the case of Kerr, see below.} Therefore, we must assume the spacetime manifold and metric to be analytic in what follows.

First, we look at the case where the following two conditions hold:
\begin{enumerate}[label= (\alph*), topsep=1ex, itemsep=-1ex, partopsep=1ex, parsep=1ex]
\item the restriction of the spacetime $(M, g)$ and of the one-parameter group of isometries to either the left or the right wedge is analytically isometric to a (globally hyperbolic) \emph{standard static} spacetime (see Section 3.2 of \cite{sanders2013thermal} and references therein) of form $(\mathbb{R}\times C, \alpha \mathrm{d}t^2 -  {}^{3}g)$ where $\alpha$ (the lapse function) is a positive function on $C$ and ${}^3g$ is a Riemannian metric on the connected manifold $C$ (with $C$, $\alpha$ and ${}^3g$ analytic);
\item for any compact set $K \subset M$, the open set $M \setminus J(K)$ has non-empty intersection both with the right and with the left wedge.
\end{enumerate}

It is easy to see that the following spacetimes satisfy the above conditions: Minkowski spacetime with Lorentz boosts as isometries; the Kruskal spacetime with standard `Schwarzschild-time translation isometries'; suitable globally hyperbolic patches of the subextremal Reissner-Nordstr\"{o}m spacetime and of the Schwarzschild-de Sitter spacetime (with non-zero black hole mass), again with their respective standard one-parameter groups of isometries. Importantly, Condition (b) above fails in de Sitter spacetime.  Since scalar fields (massive or massless, conformally or minimally coupled) in de Sitter were also not covered by our other strategy described in Section \ref{decay}, the methods presented in this Appendix are not by themselves sufficient to properly fill the gap in \cite{kay1991theorems} for such fields -- and even the comparatively healthy (see Footnote \ref{deSittermassless}) theory of massive or massless conformally coupled fields in de Sitter spacetime appears to require further investigation.

Under Condition (a), on (say) the right wedge, the Klein-Gordon equation with an analytic potential term $V$ will take the form $\left( \alpha^{-1} \tfrac{\partial^2}{\partial t^2} - {\cal D} + V\right) \phi = 0$ where ${\cal D}$ is the Laplace-Beltrami operator for $^3g$.   For an isometry-invariant solution, $\frac{\partial\phi}{\partial t}$ will be identically zero, and therefore so will be $\frac{\partial^2\phi}{\partial t^2}$ and $\phi$ will satisfy the manifestly elliptic equation with analytic coefficients
\begin{equation*}
\left( - \alpha^{-1} \tfrac{\partial^2}{\partial t^2} - {\cal D} + V\right) \phi = 0
\end{equation*}
(and we notice that the operator $-\alpha^{-1} \tfrac{\partial^2}{\partial t^2} +  {\cal D}$ is of course nothing but minus the Laplace-Beltrami operator for the Riemannian metric $\alpha \d t + {}^{3}g$). Therefore, by analytic elliptic regularity, $\phi$ must be an analytic function on the right wedge. But, since $\phi \in \tilde{S}_0 \subset \hat{S}$ has Cauchy data -- on a Cauchy surface $\mathscr{C}$ for the full spacetime which contains the bifurcation 2-sphere $\Sigma$, see \cite{kay1991theorems} -- of compact support, by finite propagation speed results it must vanish on $M \setminus J(K)$ where $K = \supp (\phi\restriction_{\mathscr{C}}) \cup \supp (\nabla_n\phi\restriction_{\mathscr{C}})$ and $n$ denotes the vector field of unit normals to $\mathscr{C}$. Under Condition (b) above, it must then vanish in an open subset of the right wedge. By analyticity and connectedness, it must vanish identically on the entire right wedge. A similar argument shows that it must vanish identically on the left wedge. Finally, $\phi$ must vanish on the entire spacetime by continuity at $\Sigma$. 

An obvious local-to-global version of this argument also shows that the same conclusion holds if we only require, in Condition (a) above, that the spacetime in the left and right wedges be simply static (with respect to the one-parameter group of isometries) rather than `standard' static. However, outside these circumstances the argument won't straightforwardly apply. Nonetheless, under some mild restrictions on the possible potential terms which we shall state, one can also fill the gap for the case of the globally hyperbolic patch of (subextremal, maximally extended) Kerr defined on page 66 of \cite{kay1991theorems} and denoted by $\mathscr{M}$ there, with Killing vector field $\xi_+ = \partial/\partial t + \Omega_+ \partial/\partial \varphi$ in Boyer-Lindquist coordinates $(t,r,\theta,\varphi)$. Here, denoting the black hole's angular momentum by $a$ and its mass by $M$, $\Omega_+ = a/(r_+^2 + a^2)$ is the angular velocity of the black  hole/Killing horizon situated at $r=r_+=M + \sqrt{M^2 - a^2}$ and we recall that there is a cosmological horizon `at' $r=r_-=M - \sqrt{M^2 - a^2}$. In the right wedge where the Boyer-Lindquist coordinates are regular, the Laplace-Beltrami operator associated with the Kerr metric is
\begin{equation*}
\square = \left[ a^2 \sin^2 \theta - \frac{(a^2 + r^2)^2}{\Delta(r)} \right] \frac{\partial^2}{\partial t^2} - \frac{a^2}{\Delta(r)} \frac{\partial^2}{\partial \varphi^2} - \frac{2a[r^2 + a^2 - \Delta(r)]}{\Delta(r)} \frac{\partial^2}{\partial \varphi \partial t} + \frac{\partial}{\partial r} \left[ \Delta(r) \frac{\partial}{\partial r}\right] + \slashed{\Delta}_{\mathbb{S}^2},
\end{equation*}
where $\Delta(r) = (r-r_+)(r-r_-)$ (so that $\Delta(r)>0$ everywhere in the right wedge) and $\slashed{\Delta}_{\mathbb{S}^2} = \tfrac{1}{\sin \theta} \tfrac{\partial}{\partial \theta} \left( \sin \theta \tfrac{\partial}{\partial \theta} \right) + \tfrac{1}{\sin^2 \theta} \tfrac{\partial^2}{\partial \varphi^2}$ is the Laplacian on the two-dimensional unit sphere. Now, let $u$ be a $C^2$ function on $\mathscr{M}$ which is invariant under the isometries generated by $\xi_+$. Then, everywhere in the right wedge,
\begin{equation}\label{isominvkerr}
\frac{\partial u}{\partial \varphi} = -\Omega_+^{-1} \frac{\partial u}{\partial t} \quad \text{and} \quad \frac{\partial^2 u}{\partial \varphi^2} = \Omega_+^{-2} \frac{\partial^2 u}{\partial t^2}.
\end{equation}
Thus, if $u$ is an isometry-invariant solution to $\square u = 0$ on $\mathscr{M}$, belonging to $\tilde{S}_0$, then we can use the equations in (\ref{isominvkerr}) to `trade' $\varphi$-derivatives for $t$-derivatives and obtain
\begin{equation}
        \left\{ F(r, \theta) \frac{\partial^2}{\partial t^2} + \frac{\partial}{\partial r} \left[ \Delta(r) \frac{\partial}{\partial r}\right] + \frac{1}{\sin \theta} \frac{\partial}{\partial \theta} \left[ \sin \theta \frac{\partial}{\partial \theta} \right] \right\} u = 0
\end{equation}
where $F(r, \theta)$ is an analytic function for $(r, \theta) \in (r_+, \infty) \times (0, \pi)$. Clearly, the same equation will be satisfied by the Fourier coefficients
\begin{equation*}
\hat{u}_m(t,r,\theta) := \int_0^{2 \pi} u(t,r,\theta,\varphi) e^{-i m \varphi} \, \d \varphi, \quad m \in \mathbb{Z}.
\end{equation*}
However, a simple calculation shows that, in virtue of the first equation in (\ref{isominvkerr}),
\begin{equation*}
\frac{\partial \hat{u}_m}{\partial t} = i m \Omega_+ \hat{u}_m \quad \text{and, consequently,} \quad \frac{\partial^2 \hat{u}_m}{\partial t^2} = - m^2 \Omega_+^2 \hat{u}_m
\end{equation*}
for all $m \in \mathbb{Z}$. Pick a positive constant $K$ and set $G(r, \theta) = - m^2 \Omega_+^2 [F(r, \theta) - K ]$; then $\hat{u}_m$ solves $P \hat{u}_m = 0$ where
\begin{equation*}
        P = K \frac{\partial^2}{\partial t ^2} + \Delta(r) \frac{\partial^2}{\partial r^2} + \frac{\partial^2}{\partial \theta^2} + \frac{\d \Delta}{\d r}(r) \frac{\partial}{\partial r} + \cot \theta \frac{\partial}{\partial \theta} + G(r,\theta).
\end{equation*}
$P$ is a differential operator with analytic coefficients. An inspection of the highest order terms shows that it is elliptic on $\R \times (r_+, \infty) \times (0,\pi)$. Therefore, by analytic elliptic regularity, $\hat{u}_m$ is analytic. But $\hat{u}_m$ must vanish in an open set because of the support properties of $u \in \tilde{S}_0$.\footnote{To quickly see this, the reader may wish to consider a \textit{projection diagram}, in the sense of \cite{chrusciel2012space-time} (see also Chapter 3 of \cite{chrusciel2015geometry}), for the region of Kerr under consideration and denoted by $\mathscr{M}$ above. The projection diagram in Fig.\ 3 of \cite{chrusciel2012space-time} appears to closely resemble the more commonly seen conformal diagram for the submanifold corresponding to the axis of symmetry ($\theta = 0$ or $\theta = \pi$) of the Kerr solution. However, unlike the latter, the former captures causal properties of the entire spacetime (in a precise way discussed in Section 3 of \cite{chrusciel2012space-time}). In particular, since $u$ above has spacelike compact support on $\mathscr{M}$, it follows that the projection of its support onto the (1+1)-dimensional diagram is spacelike compact with respect to the (1+1)-dimensional Minkowski metric. The claimed result then easily follows upon observing that the projection diagram is obtained by projecting out the Boyer-Lindquist coordinates $\theta$ and $\varphi$.} Therefore $\hat{u}_m=0$ for all $m \in \mathbb{Z}$. By the Fourier inversion formula, this in turn implies that $u=0$ in the right wedge. Similar reasoning shows that $u$ must vanish in the left wedge. Again, by continuity at the bifurcation surface this means that $u$ must vanish on $\mathscr{M}$. For ease of presentation, we only showed the proof explicitly in the case of the massless wave equation. However, it is clear that an analytic potential term can be added with no change in the arguments, provided it is independent of the coordinate $\varphi$ -- as would of course be the case for a constant mass term or for a constant multiple of the Ricci scalar.

\subsection{Conclusions}

To conclude, the two lines of argument presented in this Appendix have enabled us to fill the gap in \cite{kay1991theorems} in many cases of interest (however, see our discussion in the introductory section of this Appendix for the meaning of `filling the gap' in the cases of Schwarzschild-de Sitter and Kerr).  In the case of de Sitter spacetime, it is not obvious to us that there can be no non-zero solutions in $\tilde{S}_0$ which are invariant under the one-parameter group of isometries generating the bifurcate Killing horizon. Clearly, for massless minimally coupled fields, there \emph{are} non-zero solutions in $\hat{S} \supset \tilde{S}_0$ which are invariant: namely, the constant (non-zero) solutions -- therefore, in particular, one would need to show that no non-zero constant solution can lie in $\tilde{S}_0$. But this would still not suffice to fill the gap.

\end{appendices}

\section*{Acknowledgements}

U.L. gratefully acknowledges the financial support of the Department of Mathematics, University of York, by the award of a teaching studentship. We would also like to thank Igor Khavkine for pointing out the reference \cite{gindikin1996mixed}, Thomas-Paul Hack for pointing out the reference \cite{schubert2013charakterisierung}, Robert Wald for collaboration on the issue of the `gap' in \cite{kay1991theorems} dealt with in Appendix \ref{appB}, Mihalis Dafermos for pointing out that Theorem 10 in \cite{dafermos2014scattering} substantiates our claims about Property (b) in the Introduction, Atsushi Higuchi for helpful discussions concerning massless fields in de Sitter, and Alexander Strohmaier for suggesting the argument given in Footnote \ref{distrsolns}.

\bibliographystyle{hep}
\bibliography{mybibliography}

\begin{thebibliography}{DRSR14b}

\bibitem[AF87]{allen1987massless}
B.~Allen and A.~Folacci, \textsl{ Massless minimally coupled scalar field in de
  Sitter space},
\newblock Phys. Rev. D \textbf{ 35}, 3771--3778 (1987).

\bibitem[All85]{allen1985vacuum}
B.~Allen, \textsl{ Vacuum states in de Sitter space},
\newblock Phys. Rev. D \textbf{ 32}, 3136--3149 (1985).

\bibitem[BD78]{bunch1978quantum}
T.~S. Bunch and P.~C.~W. Davies, \textsl{ Quantum field theory in de Sitter
  space: renormalization by point-splitting},
\newblock Proc. R. Soc. London A \textbf{ 360}(1700), 117--134 (1978).

\bibitem[BD84]{birrell1984quantum}
N.~D. Birrell and P.~C.~W. Davies,
\newblock \textsl{ Quantum {F}ields in {C}urved {S}pace},
\newblock Cambridge Monographs on Mathematical Physics, Cambridge University
  Press, Cambridge, 1984.

\bibitem[BFV03]{brunetti2003generally}
R.~Brunetti, K.~Fredenhagen and R.~Verch, \textsl{ The Generally Covariant
  Locality Principle -- A New Paradigm for Local Quantum Field Theory},
\newblock Comm. Math. Phys. \textbf{ 237}(1-2), 31--68 (2003),
  \href{http://xxx.lanl.gov/abs/math-ph/0112041}{math-ph/0112041}.

\bibitem[BGP07]{baer2007wave}
C.~B{\"a}r, N.~Ginoux and F.~Pf{\"a}ffle,
\newblock \textsl{ Wave {E}quations on {L}orentzian {M}anifolds and
  {Q}uantization},
\newblock ESI lectures in mathematics and physics, European Mathematical
  Society, Z{\"u}rich, 2007.

\bibitem[BR87]{bratteli1987operator1}
O.~Bratteli and D.~W. Robinson,
\newblock \textsl{ Operator {A}lgebras and {Q}uantum {S}tatistical
  {M}echanics}, volume~1,
\newblock Springer-Verlag Berlin Heidelberg, second edition, 1987.

\bibitem[BR97]{bratteli1997operator2}
O.~Bratteli and D.~W. Robinson,
\newblock \textsl{ Operator {A}lgebras and {Q}uantum {S}tatistical
  {M}echanics}, volume~2,
\newblock Springer-Verlag Berlin Heidelberg, second edition, 1997.

\bibitem[BW15]{baer2015initial}
C.~B\"ar and R.~T. Wafo, \textsl{ Initial Value Problems for Wave Equations on
  Manifolds},
\newblock Math. Phys. Anal. Geom. \textbf{ 18}(1) (2015),
  \href{http://xxx.lanl.gov/abs/1408.4995}{1408.4995}.

\bibitem[Chr15]{chrusciel2015geometry}
P.~T. Chru\'{s}ciel, \textsl{ The Geometry of Black Holes},
\newblock (2015),
\newblock Lecture notes available at \url{http://tinyurl.com/p5omjwv}.

\bibitem[COS12]{chrusciel2012space-time}
P.~T. Chru\'{s}ciel, C.~R. \"Olz and S.~J. Szybka, \textsl{ Space-time
  diagrammatics},
\newblock Phys. Rev. D \textbf{ 86}, 124041 (2012),
  \href{http://xxx.lanl.gov/abs/1211.1718}{1211.1718}.

\bibitem[CT68]{chernikov1968quantum}
N.~A. Chernikov and E.~A. Tagirov, \textsl{ Quantum theory of scalar field in
  de Sitter space-time},
\newblock Annales de l'I.H.P. Physique th{\'e}orique \textbf{ 9}(2), 109--141
  (1968).

\bibitem[DF77]{davies1977radiation}
P.~C.~W. Davies and S.~A. Fulling, \textsl{ Radiation from Moving Mirrors and
  from Black Holes},
\newblock Proceedings of the Royal Society of London A: Mathematical, Physical
  and Engineering Sciences \textbf{ 356}(1685), 237--257 (1977).

\bibitem[DK86]{dimock1986classical}
J.~Dimock and B.~S. Kay, \textsl{ Classical and quantum scattering theory for
  linear scalar fields on the Schwarzschild metric. II},
\newblock J. Math. Phys. \textbf{ 27}(10), 2520--2525 (1986).

\bibitem[DK87]{dimock1987classical}
J.~Dimock and B.~S. Kay, \textsl{ Classical and quantum scattering theory for
  linear calar fields on the {S}chwarzschild metric {I}},
\newblock Ann. Phys. \textbf{ 175}(2), 366--426 (1987).

\bibitem[DM06]{derezinski2006quantum}
J.~Derezi{\'n}ski and K.~A. Meissner,
\newblock Quantum Massless Field in 1+1 Dimensions,
\newblock in \textsl{ Mathematical {P}hysics of {Q}uantum {M}echanics}, edited
  by J.~Asch and A.~Joye, volume 690 of \textsl{ Lecture Notes in Physics},
  pages 107--127, Springer Berlin Heidelberg, 2006.

\bibitem[DR07]{dafermos2007wave}
M.~Dafermos and I.~Rodnianski, \textsl{ {The wave equation on Schwarzschild-de
  Sitter spacetimes}},
\newblock (2007), \href{http://xxx.lanl.gov/abs/0709.2766}{0709.2766}.

\bibitem[DR09]{dafermos2009red}
M.~Dafermos and I.~Rodnianski, \textsl{ The red-shift effect and radiation
  decay on black hole spacetimes},
\newblock Comm. Pure Applied Math. \textbf{ 62}(7), 859--919 (2009),
  \href{http://xxx.lanl.gov/abs/gr-qc/0512119}{gr-qc/0512119}.

\bibitem[DR10]{dafermos2010decay}
M.~Dafermos and I.~Rodnianski, \textsl{ {Decay for solutions of the wave
  equation on Kerr exterior spacetimes I-II: The cases $|a| \ll M$ or
  axisymmetry}},
\newblock (2010), \href{http://xxx.lanl.gov/abs/1010.5132}{1010.5132}.

\bibitem[DRSR14a]{dafermos2014scattering}
M.~Dafermos, I.~Rodnianski and Y.~Shlapentokh-Rothman, \textsl{ A scattering
  theory for the wave equation on Kerr black hole exteriors},
\newblock (2014), \href{http://xxx.lanl.gov/abs/1412.8379}{1412.8379}.

\bibitem[DRSR14b]{dafermos2014decay}
M.~Dafermos, I.~Rodnianski and Y.~Shlapentokh-Rothman, \textsl{
  {\phantom{X}\!\!\!\!\!\!Decay for solutions of the wave equation on Kerr
  exterior spacetimes III: The full subextremal case $|a| < M$}},
\newblock (2014), \href{http://xxx.lanl.gov/abs/1402.7034}{1402.7034}.

\bibitem[FD76]{fulling1976radiation}
S.~A. Fulling and P.~C.~W. Davies, \textsl{ Radiation from a Moving Mirror in
  Two Dimensional Space-Time: Conformal Anomaly},
\newblock Proceedings of the Royal Society of London A: Mathematical, Physical
  and Engineering Sciences \textbf{ 348}(1654), 393--414 (1976).

\bibitem[FOP07]{fewster2007averaged}
C.~J. Fewster, K.~D. Olum and M.~J. Pfenning, \textsl{ Averaged null energy
  condition in spacetimes with boundaries},
\newblock Phys. Rev. D \textbf{ 75}, 025007 (2007),
  \href{http://xxx.lanl.gov/abs/gr-qc/0609007}{gr-qc/0609007}.

\bibitem[FR87]{fulling1987temperature}
S.~A. Fulling and S.~N.~M. Ruijsenaars, \textsl{ Temperature, periodicity and
  horizons},
\newblock Phys. Rep. \textbf{ 152}(3), 135--176 (1987).

\bibitem[FSW78]{fulling1978singularity}
S.~A. Fulling, M.~Sweeny and R.~M. Wald, \textsl{ Singularity structure of the
  two-point function quantum field theory in curved spacetime},
\newblock Comm. Math. Phys. \textbf{ 63}(3), 257--264 (1978).

\bibitem[GH93]{gibbons1993euclidean}
G.~W. Gibbons and S.~W. Hawking,
\newblock \textsl{ Euclidean {Q}uantum {G}ravity},
\newblock World Scientific, Singapore, 1993.

\bibitem[GK89]{gonnella1989can}
G.~Gonnella and B.~S. Kay, \textsl{ Can locally {H}adamard quantum states have
  non-local singularities?},
\newblock Class. Quantum Grav. \textbf{ 6}(10), 1445 (1989).

\bibitem[GV96]{gindikin1996mixed}
L.~R. Gindikin and S.~Volevich,
\newblock \textsl{ Mixed problem for partial differential equations with
  quasihomogeneous principal part},
\newblock American Mathematical Society, Providence, 1996.

\bibitem[Haa92]{haag1992local}
R.~Haag,
\newblock \textsl{ Local {Q}uantum {P}hysics: {F}ields, {P}articles,
  {A}lgebras},
\newblock Texts and monographs in physics, Springer-Verlag Berlin Heidelberg,
  1992.

\bibitem[Haw76]{hawking1976black}
S.~W. Hawking, \textsl{ Black holes and thermodynamics},
\newblock Phys. Rev. D \textbf{ 13}, 191--197 (1976).

\bibitem[HE73]{hawking1973large}
S.~W. Hawking and G.~F.~R. Ellis,
\newblock \textsl{ The {L}arge {S}cale {S}tructure of {S}pace-{T}ime},
\newblock Cambridge Monographs on Mathematical Physics, Cambridge University
  Press, Cambridge, 1973.

\bibitem[HH76]{hartle1976path}
J.~B. Hartle and S.~W. Hawking, \textsl{ Path-integral derivation of black-hole
  radiance},
\newblock Phys.~Rev.~D \textbf{ 13}, 2188--2203 (1976).

\bibitem[H{\"o}r90]{hormander1990remark}
L.~H{\"o}rmander, \textsl{ A remark on the characteristic Cauchy problem},
\newblock J. Funct. Anal. \textbf{ 93}(2), 270--277 (1990).

\bibitem[H{\"o}r07]{hormander2007analysis}
L.~H{\"o}rmander,
\newblock \textsl{ The {A}nalysis of {L}inear {P}artial {D}ifferential
  {O}perators {III}: {P}seudo-{D}ifferential {O}perators}, volume 274,
\newblock Springer-Verlag Berlin Heidelberg, 2007.

\bibitem[Isr76]{israel1976thermo}
W.~Israel, \textsl{ {Thermo-field dynamics of black holes}},
\newblock Phys. Lett. \textbf{ A57}, 107--110 (1976).

\bibitem[Joh55]{john1955plane}
F.~John,
\newblock \textsl{ Plane Waves and Spherical Means Applied to Partial
  Differential Equations},
\newblock Interscience Publishers Inc., New York, 1955.

\bibitem[Kay79]{kay1979casimir}
B.~S. Kay, \textsl{ Casimir effect in quantum field theory},
\newblock Phys. Rev. D \textbf{ 20}(12), 3052--3062 (1979),
\newblock (originally appeared as a preprint under the title {\it The Casimir
  effect without magic}).

\bibitem[Kay85]{kay1985double}
B.~S. Kay, \textsl{ The double-wedge algebra for quantum fields on
  {S}chwarzschild and {M}inkowski spacetimes},
\newblock Comm. Math. Phys. \textbf{ 100}(1), 57--81 (1985).

\bibitem[Kay88]{kay1988quantum}
B.~S. Kay,
\newblock Quantum field theory in curved spacetime,
\newblock in \textsl{ Differential {G}eometrical {M}ethods in {T}heoretical
  {P}hysics}, pages 373--393, Springer Science+Business Media B.V., 1988.

\bibitem[Kay93]{kay1993sufficient}
B.~S. Kay, \textsl{ Sufficient conditions for quasifree states and an improved
  uniqueness theorem for quantum fields on space–times with horizons},
\newblock J. Math. Phys. \textbf{ 34}(10), 4519--4539 (1993).

\bibitem[Kay15]{kay2015instability}
B.~S. Kay, \textsl{ Instability of enclosed horizons},
\newblock Gen. Rel. Grav. \textbf{ 47}(3), 1--27 (2015),
  \href{http://xxx.lanl.gov/abs/1310.7395}{1310.7395}.

\bibitem[KM15]{khavkine2015algebraic}
I.~{Khavkine} and V.~{Moretti},
\newblock Algebraic QFT in Curved Spacetime and Quasifree Hadamard States: An
  Introduction,
\newblock in \textsl{ Advances in Algebraic Quantum Field Theory}, edited by
  R.~Brunetti, C.~Dappiaggi, K.~Fredenhagen and J.~Yngvason, Mathematical
  Physics Studies, pages 191--251, Springer International Publishing
  Switzerland, Cham, 2015.

\bibitem[KW91]{kay1991theorems}
B.~S. Kay and R.~M. Wald, \textsl{ Theorems on the uniqueness and thermal
  properties of stationary, nonsingular, quasifree states on spacetimes with a
  bifurcate {K}illing horizon},
\newblock Phys. Rep. \textbf{ 207}(2), 49--136 (1991).

\bibitem[Lupa]{lupo2016toappear}
U.~Lupo,
\newblock to appear \kern-0.3em .

\bibitem[Lupb]{lupo2015thesis}
U.~Lupo,
\newblock PhD thesis, University of York, to appear.

\bibitem[Mor03]{moretti2003comments}
V.~Moretti, \textsl{ Comments on the Stress-Energy Tensor Operator in Curved
  Spacetime},
\newblock Comm. Math. Phys. \textbf{ 232}(2), 189--221 (2003),
  \href{http://xxx.lanl.gov/abs/gr-qc/0109048}{gr-qc/0109048}.

\bibitem[Pol90]{polarski1990minimally}
D.~Polarski, \textsl{ Minimally coupled scalar field on the static de Sitter
  space},
\newblock Phys. Rev. D \textbf{ 41}, 442--450 (1990).

\bibitem[Rad92]{radzikowski1992hadamard}
M.~J. Radzikowski,
\newblock \textsl{ The Hadamard condition and Kay's conjecture in (axiomatic)
  quantum field theory on curved space-time},
\newblock PhD thesis, Princeton University, 1992,
\newblock available at \url{http://tinyurl.com/nnz769w}.

\bibitem[Rad96]{radzikowski1996microlocal}
M.~J. Radzikowski, \textsl{ Micro-local approach to the {H}adamard condition in
  quantum field theory on curved space-time},
\newblock Comm. Math. Phys. \textbf{ 179}(3), 529--553 (1996).

\bibitem[RV96]{radzikowski1996localtoglobal}
M.~J. Radzikowski and R.~Verch, \textsl{ A local-to-global singularity theorem
  for quantum field theory on curved space-time},
\newblock Comm. Math. Phys. \textbf{ 180}(1), 1--22 (1996).

\bibitem[San13]{sanders2013thermal}
K.~Sanders, \textsl{ Thermal Equilibrium States of a Linear Scalar Quantum
  Field in Stationary Space-times},
\newblock Int. J. Mod. Phys. A \textbf{ 28}(10), 1330010 (2013),
  \href{http://xxx.lanl.gov/abs/1209.6068}{1209.6068}.

\bibitem[San15]{sanders2015construction}
K.~Sanders, \textsl{ On the Construction of Hartle-Hawking-Israel States Across
  a Static Bifurcate Killing Horizon},
\newblock Lett. Math. Phys. Online First Articles  (2015),
  \href{http://xxx.lanl.gov/abs/1310.5537}{1310.5537}.

\bibitem[Sch63]{schroer1963infrateilchen}
B.~Schroer, \textsl{ Infrateilchen in der {Q}uantenfeldtheorie},
\newblock Fortschr. Physik \textbf{ 11}, 1--31 (1963).

\bibitem[Sch13]{schubert2013charakterisierung}
S.~Schubert,
\newblock {\"U}ber die {C}harakterisierung von {Z}ust{\"a}nden hinsichtlich der
  {E}rwartungswerte quadratischer Operatoren,
\newblock Diploma thesis, University of Hamburg, 2013,
\newblock available at
  \url{http://www.desy.de/uni-th/theses/Dipl_Schubert.pdf}.

\bibitem[Seg63]{segal1963mathematical}
I.~E. Segal,
\newblock \textsl{ Mathematical Problems of Relativistic Physics},
\newblock Lectures in applied mathematics, American Mathematical Society,
  Providence, 1963.

\bibitem[Sla72]{slawny1972factor}
J.~Slawny, \textsl{ On factor representations and the {C}*-algebra of canonical
  commutation relations},
\newblock Comm. Math. Phys. \textbf{ 24}(2), 151--170 (1972).

\bibitem[SW70]{streater1970fermion}
R.~F. Streater and I.~F. Wilde, \textsl{ Fermion states of a boson field},
\newblock Nucl. Phys. B \textbf{ 24}(3), 561--575 (1970).

\bibitem[SW00]{streater2000pct}
R.~F. Streater and A.~S. Wightman,
\newblock \textsl{ PCT, Spin and Statistics, and All that},
\newblock Landmarks in Physics, Princeton University Press, Princeton and
  Oxford, 2000.

\bibitem[tH85]{thooft1985quantum}
G.~'t~Hooft, \textsl{ On the quantum structure of a black hole},
\newblock Nucl. Phys. B \textbf{ 256}(0), 727--745 (1985).

\bibitem[Wal94]{wald1994quantum}
R.~M. Wald,
\newblock \textsl{ Quantum {F}ield {T}heory in {C}urved {S}pacetime and {B}lack
  {H}ole {T}hermodynamics},
\newblock Chicago Lectures in Physics, University of Chicago Press, Chicago and
  London, 1994.

\bibitem[Wig67]{wightman1967introduction}
A.~S. Wightman,
\newblock Introduction to some aspects of the relativistic dynamics of
  quantized fields,
\newblock in \textsl{ Carg\`{e}se Lectures in Theoretical Physics: High energy
  electromagnetic interactions and field theory}, pages 171--291, Gordon and
  Breach Science Publishers, New York, 1967.

\end{thebibliography}

\end{document}